\newtheorem{theorem}{Theorem}[section]
\newtheorem{lemma}[theorem]{Lemma}
\date{\today}
\newtheorem{corollary}[theorem]{Corollary}
\newtheorem{remark}[theorem]{Remark}
\newtheorem{assumption}[theorem]{Assumption}
\newtheorem{example}[theorem]{Example}
\newcommand{\EE}{\ensuremath{\mathbb{E}}}
\newcommand{\ul}{\ensuremath{\lfloor t/\Delta_n\rfloor}}
\newcommand{\ulT}{\ensuremath{\lfloor T/\Delta_n\rfloor}}
\newcommand{\indicator}{\ensuremath{\mathbb{I}}}
\newcommand{\edit}[1]{\textcolor{black}{#1}}
\title[Robust Functional Data Analysis for Stochastic Evolution Equations]{Robust Functional Data Analysis for Stochastic Evolution Equations in Infinite Dimensions
}
\subjclass{62E20, 46N30, 60H15}
\keywords{Realized Covariation, Functional Data Analysis, Jumps, SPDEs}
\author{Dennis Schroers$\dagger$}
\address{Institute of Finance and Statistics and Hausdorff Center for Mathematics\\University of Bonn,  Adenauerallee 24-26, 53113 Bonn,  Germany\\
}\thanks{$\dagger$ Email: dschroer@uni-bonn.de}
\begin{document}
\begin{abstract}
We develop an asymptotic theory for the  jump robust measurement of covariations in the context of stochastic evolution equation in infinite dimensions.   
Namely, we identify scaling limits for realized covariations of solution processes with 
the quadratic covariation of the latent random process that drives the evolution equation which is assumed to be a general Hilbert space-valued semimartingale which could contain jumps.  We discuss applications to  dynamically consistent and outlier-robust dimension reduction in the spirit of  functional principal components and the estimation of infinite-dimensional stochastic volatility models.
\end{abstract}

\maketitle
\vspace{-1cm}

\section{Introduction}
We develop a general asymptotic theory for realized covariations in the context of  stochastic evolution equations in a Hilbert space framework which 
serve as models in various disciplines such as economics, physics, biology, or meteorology (c.f. \cite{PZ2007} and \cite{DPZ2014}).  Precisely, we consider equations 
 in a separable Hilbert space $H$ of the form
\begin{equation}\label{evol eq}
    df_t= \mathcal A f_t dt+dX_t\quad t\geq 0.
\end{equation}
Here, 
$\mathcal A$ is the generator of a strongly continuous semigroup $\mathcal S=(\mathcal S(t))_{t\geq 0}$ in $H$
 and  the latent random driver
$X$ is an $H$-valued It{\^o} semimartingale, which can be decomposed into a continuous and a jump part $X=X^C+X^J$.
In practice,  our theory can be applied to estimate the quadratic covariation of $X$, $X^C$ and $X^J$  in the context of functional data
\begin{equation}\label{functional data}
f_{i\Delta_n},\quad i=1,...,\ulT,\,\, T>0
\end{equation}
with $\Delta_n:=1/n$  for $ n\in \mathbb N$.
  Quadratic covariations are the infill-limits of the sum of tensor-squares of the increments $\Delta_i^n X= X_{i\Delta_n}-X_{(i-1)\Delta_n}$  defined by the limit (in probability and the Hilbert-Schmidt topology)
\begin{equation}\label{Quadratic variation abstract probability limit}
[X,X]_t (\cdot):=\lim_{\Delta_n \downarrow 0}\sum_{i=1}^{\ul} \langle \Delta_i^n X,  \cdot \rangle \Delta_i^n X,
\end{equation}
where $\langle \cdot, \cdot\rangle$ is the inner product of $H$.

While in general $[X,X]\neq [f,f]$,  we show that $[X,X]$, $[X^C,X^C]$, and $[X^J,X^J]$ still correspond to scaling limits of realized covariations of the observable process by employing  a particular type of increments. Namely, we consider
 the semigroup-adjusted increments 
\begin{equation}\label{semigroup adjusted increments}
    \tilde{\Delta}_i^n f:=f_{i\Delta_n}-\mathcal S(\Delta_n)f_{(i-1)\Delta_n}, \quad i=1,..., \ulT
\end{equation}
which can be recovered from the functional data \eqref{functional data} and which were examined in \cite{BSV2022} and \cite{Benth2022} for the case $X=X^C$ without jumps.   Even with discrete data, these increments can sometimes be apprximated conveniently as in \cite{Hildebrandt2023} by exploiting the spectral decomposition of the semigroup
in the context of reaction diffusions or as in \cite{Schroers2024b} by space-time differencing in the context of term structure models.
In our case with jumps we consider variants of \edit{nontruncated and truncated} \textit{semigroup adjusted realized covariations}
\begin{align}
\edit{SARCV_t^n:=}&\edit{  \sum_{i=1}^{\ul} \tilde{\Delta}_i^n f^{\otimes 2}}\label{nontruncated estimator},\\
    SARCV_t^n(u_n,-):= &\sum_{i=1}^{\ul} \tilde{\Delta}_i^n f^{\otimes 2}\indicator_{g_{n}(\tilde{\Delta}_i^n f)\leq  u_n}\quad \text{ and }\label{downward truncated estimator}\\
     SARCV_t^n(u_n,+):= &\sum_{i=1}^{\ul} \tilde{\Delta}_i^n f^{\otimes 2}\indicator_{g_{n}(\tilde{\Delta}_i^n f)>  u_n}\label{upward truncated estimator}
\end{align}  for $t\geq 0$, $n\in \mathbb N \cup\{\infty\}$ and 
\edit{a sequence $u_n= \Delta_n^{w}$, with $w\in (0,1/2)$ as well as} a sequence of subadditive truncation functions $g_{n}: H \to\mathbb R_+$, such that there are constants $c,C>0$ such that for all $f\in H$ we have
\begin{align}\label{abstract g conditions}
    c\|f\|\leq g_n(f)\leq  C\|f\|.
\end{align}

 We prove that $SARCV^n(u_n,-)$,  $SARCV^n(u_n,+)$ and $SARCV^n$ converge in probabiliy and uniformly on compacts to the quadratic covariations $[X^C,\edit{X^C}]$, $[X^J,X^J]$ and $[X,X]$ respectively under weak Assumptions.
For the estimation of $[X^C,X^C]$ via $SARCV_t^n(u_n,-)$ we  provide convergence rates and a central limit theorem under adequate regularity conditions. 
Importantly, we also present a double asymptotic theory as $T,n\to \infty $ for $ T^{-1}SARCV_T^n(u_n,-)$ to investigate a long-term limit of $[X^C,X^C]_T/T$  if it exists.

\edit{A major motivation for studying these truncated covariations is that their eigenelements provide consistent estimators for the directions of maximal variation of the noise, and can therefore be used for dimension reduction of the potentially infinite-dimensional driver \(X\).
The customary method for dimension reduction in functional data analysis is functional principal component analysis (FPCA), which typically relies on the stationary covariance under the additional assumption of weak stationarity of \(f\) (or its increments \(f_i - f_{i-1}\)).
} 
However,  \edit{in contrast to methods based on realized covariations of the driving noise},  covariance-based methods entail
consistency problems with the dynamical setting \eqref{evol eq}:
\begin{itemize}
\item[(i)] The dimensionality of a stationary covariance of $f$ (or its increments $f_{i}-f_{i-1}$) can appear high (resp. low),  although $X$ is low-dimensional (resp. high-dimensional). 
\item[(ii)]  \edit{Not all finite-dimensional parametrizations of $f$ and, hence, not all finite-di\-men\-si\-o\-nal  covariance structures, are compatible with the dynamical structure \eqref{evol eq}. Specifically, projecting $f$. onto a finite-di\-men\-si\-o\-nal linear subspace can easily eliminate the possibility that the approximated solution satisfies an equation of the form \eqref{evol eq} (cf. \cite{filipovic1999}). This subtle but undesirable effect of dimension reduction is critical since the dynamic formulation often encodes important scientific or economic principles.}
\end{itemize}
We therefore advocate a dimension-reduction technique on the basis of the (continuous) quadratic variation of $X$.
 \edit{Importantly, when $X^d$ is a finite-dimensional approximation of $X$, we can use the solution to the problem
$$
df_t^d = \mathcal{A} f_t^d dt + dX_t^d
$$
as an approximation of $f$. Unlike direct projection of $f$, this approach does not suffer from the problems described in (i) and (ii), and thus provides a dynamically consistent alternative to covariance-based methods.}

A second application is the estimation of operator-valued stochastic volatility models.
In this paper, we show how the double asymptotic estimation theory for $ T^{-1}SARCV_T^n(u_n,-)$ can be used to estimate the characteristics of a simple class of infnite-dimensional volatility models, which includes the HEIDIH model  from \cite{Petersson2022}.  This might serve as an ansatz for estimation techniques for other stationary stochastic volatility models (c.f.  \cite{BenthRudigerSuss2018}, \cite{BenthSimonsen2018}, \cite{BenthSgarra2021}, \cite{Cox2020}, \cite{Cox2021}, \cite{Petersson2022} or \cite{cox2023}).

The main theoretical novelties of this article compared to  \cite{Benth2022} and \cite{BSV2022} are the double asymptotic scheme $T,n\to\infty$ and
 the inclusion of jumps into the model, which are rarely considered in the literature on statistics of stochastic partial differential equations and significantly complicate the corresponding limit theory.
In contrast to the jump-robust estimation theory for quadratic variations of  finite-dimensional semimartigales and especially the theory of truncated realized variations of \cite{Mancini2009} and \cite{Jacod2008} some additional and inherently infinite-dimensional  challenges arise: First,  a componentwise proof of the limits does not suffice.  Instead the asymptotic theory must hold in terms of uniform convergence results in an operator norm to be applicable to the dynamically consistent functional principal component analysis and the estimation of infinite-dimensional volatility models.  Further,  the estimators must be able to account for functional outliers which  may distort  least squares techniques based on the covariation  and which might be outlying in magnitude or shape. Therefore $g_n(\cdot)=\|\cdot\|$ is not necessarily a reasonable (though a valid) choice in practice. This motivates the abstract and general form of the truncation rule of which 
a data-driven implementation can be found in \cite{Schroers2024b} in the context of bond market data.  \edit{We give a short description of this rule in the context of a simulation study.}
A further challenge, that is specific to our infinite-dimensional setting,  is that the observable process $f$ is not necessarily a semimartingale.  
A consequence is that convergence of the estimators can be much slower than $\sqrt \Delta_n$, even in the case of  finite jump activity.

Our work deviates from the majority of works on statistics for semilinear stochastic partial differential equations such as \cite{Bibinger2020}, \cite{Chong2020}, \cite{ChongDalang2020}, \cite{Hildebrandt2021}, \cite{altmeyer2021}, \cite{cialenco2022}, 
\cite{Pasemann2021}, 
 (see also \cite{Cialenco2018} for a survey) since we consider functional data and jumps, but
also from the point of view that we allow for a general differential operator $\mathcal A$.  In particular,  the choice $\mathcal A=\partial_x$ is possible,  which is, important for financial applications (c.f.  \cite{Filipovic2000}).
To allow for such generality, $\mathcal A$ needs to be known a priori, since the semigroup adjustment in   \eqref{semigroup adjusted increments} must be recovered. However,  our theory also implies that the semigroup adjustement can be dropped when $f$ is a strong solution to \eqref{evol eq}.

The article is structured as follows:
 After introducing some technical notation, Section \ref{Sec: mild ito semimartingales} introduces theory and notation for the dynamical setting \eqref{evol eq} and for $H$-valued It{\^o} semimartingales.   
The asymptotic theory for truncated semigroup adjusted realized covariations can be found in Section \ref{Sec: AbstractLimit theorems, fda case}.   In Section \ref{Sec: Identifiability}, we present the general identifiability results for the quadratic covariations of $X$, $X^C$ and $X^J$, while Section \ref{Sec: Rates and CLT} presents rates of convergence for the estimator of the quadratic variation of $X^C$.  Estimation of a long-term covariation $\lim_{T\to \infty}[X^C,X^C]_T/T$ is discussed in Section \ref{Sec: Long-time estimation}.  A simplified estimator when $f$ is a strong solution to \eqref{evol eq} is discussed in Section \ref{Sec: Strong Solution limit theorem}. 
Applications of the limit theory to dynamically consistent dimension reduction and volatility estimation can be found in  Section \ref{Sec: Applications}.  \edit{In Section \ref{Sec: Simulation Study}, we present finite-sample evidence from a simulation study analyzing the effect of semigroup adjustments on covariation measurements}.
 The formal proofs of our limit theorems can be found in the appendix.

\section{Stochastic Evolution Equations Driven by Semimartingales}\label{Sec: mild ito semimartingales}
We introduce in this section the general class of mild solutions to an equation of the form \eqref{evol eq}.  First, we define the class of Hilbert space-valued It{\^o} semimartingales \edit{which we employ throughout this article} and their quadratic variations.

\edit{Before we proceed,  we need to introduce the notation that we use throughout this work: Hereafter,  $(\Omega,\mathcal F,(\mathcal F_t)_{t\in \mathbb R_+},\mathbb P)$ is a  filtered probability space with right-continuous filtration $(\mathcal F_t)_{t\in \mathbb R_+}$ on which all subsequently occuring processes are defined. Moreover, $H$ will always denote a separable Hilbert space on which the solution to \eqref{evol eq}  takes values.  The corresponding inner product and norm are denoted by $\langle \cdot,\cdot\rangle_H$ and $\|\cdot\|_H$ and the identity operator on $H$ by $I_H.$ 
If $G$ is another separable Hilbert space
  $L(G,H)$ denotes the space of bounded linear operators from $G$ to $H$ and $L(H):=L(H,H)$.  We write $L_{\text{HS}}(G,H)$ for the Hilbert space of Hilbert-Schmidt operators from $G$ into $H$, that is $B\in L(G,H)$ such that
$
\Vert B\Vert_{L_{\text{HS}}(U,H)}^2:=\sum_{n=1}^{\infty}\| B e_n\|_H^2<\infty,
$
for an orthonormal basis $(e_n)_{n\in\mathbb N}$ of $G$. When $G=H$, we write $L_{\text{HS}}(H):=L_{\text{HS}}(H,H)$. 
Importantly,  for $h\in H$ and $g\in G$ the operator $h\otimes g:= \langle h,\cdot\rangle_H g$ is Hilbert-Schmidt and even nuclear from $H$ to $G$. Recall that $B$ is nuclear, if  $\sum_{n=1}^{\infty}\| B e_n\|_H<\infty$ for some orthonormal basis $(e_n)_{n\in\mathbb N}$ of $G$. Moreover, we shortly write $h^{\otimes 2}=h\otimes h$.
When it is clear from the context which Hilbert space is meant, we will simply write $\langle \cdot,\cdot\rangle_H=\langle \cdot,\cdot\rangle$, 
$\|\cdot\|:=\|\cdot\|_H$, $\|\cdot\|_{L(U,H)}:=\|\cdot\|_{\text{op}}$ and $\|\cdot\|_{L_{\text{HS}}(U,H)}:=\|\cdot\|_{\text{HS}}.$
Finally, for $H$-valued processes $X^n, n\in \mathbb N$, $X$, we write $X^n\stackrel{u.c.p.}{\longrightarrow}{X}\quad \text{ as }n\to\infty$ for the convergence uniformly on compacts in probability, i.e. it is $\mathbb P[\sup_{t\in [0,T]} \|X^n(t)-X(t)\|>\epsilon]\to 0$ for all $\epsilon, T>0$. 
}

\subsection{It{\^o} Semimartingales in Hilbert spaces}\label{Sec: Semimartingales in Hilbert spaces}
First, we specify the components of the driving $H$-valued It{\^o} semimartingale, which is an $H$-valued  right-continuous process with left-limits (c{\`a}dl{\`a}g) that can be decomposed as 
$$X_t:= X_t^C+X_t^J:=(A_t+M_t^C)+X_t^J\quad t\geq 0.
    $$
Here,  \edit{ we define }$A$  \edit{ to be } a continuous process of finite variation, $M^C$ is a continuous martingale and $X^J$ is another martingale modeling the jumps of $X$. We assume that $X$ is an It{\^o} semimartingale for which the components have integral representations
\begin{equation}
    A_t:=\int_0^t \alpha_s ds,\quad M_t^C:=\int_0^t\sigma_s dW_s, \quad X^J_t:=\int_0^t\int_{H\setminus \{0\}}\gamma_s(z) (N-\nu)(dz,ds). 
\end{equation}
For the first part, $(\alpha_t)_{t\geq 0}$ is an $H$-valued and  almost surely integrable (w.r.t. $\|\cdot\|$) process that is adapted to the filtration $(\mathcal F_t)_{t\geq 0}$. For the second part, the volatility process $(\sigma_t)_{t\geq 0}$ is a predictable process  taking values in the space of Hilbert-Schmidt operators $L_{\text{HS}}(U,H)$ from a separable Hilbert space $U$ into $H$ which is almost surely square-integrable, that is, $\mathbb P[\int_0^T\|\sigma_s\|_{\text{HS}}^2ds<\infty]=1$. We leave $U$ unspecified, as it is just formally the space on which the Wiener process $W$ is defined.  A cylindrical Wiener process is a weakly defined Gaussian process with independent stationary increments and covariance $I_U$, the identity on $U$. We refer to \cite{DPZ2014}, \cite{mandrekar2015} or \cite{PZ2007} for the integration theory w.r.t. cylindrical Wiener processes.
    For the third summand
    we define a homogeneous Poisson random measure $N$ on $\mathbb R_+\times H\setminus \{0\}$ and its compensator measure $\nu$ which is of the form $\nu(dz,dt)=F(dz)\otimes dt$ for a $\sigma$-finite measure $F$ on $\mathcal B(H\setminus \{0\})$. The jump volatility process $\gamma_s(z))_{s\geq 0, z\in H\setminus \{ 0\}}$  is $H$-valued, predictable and stochastically integrable w.r.t. the compensated Poisson random measure $(N-\nu)$. For a detailed account on stochastic integration w.r.t. compensated Poisson random measures in Hilbert spaces, we refer to \cite{mandrekar2015} or \cite{PZ2007}.

   \begin{remark}[On the uniqueness of the semimartingale representation]
\edit{
Since the drift $A_t = \int_0^t \alpha_s ds$ is by definition continuous,  the decomposition of $X$ into $A$, $M^C$ and $X^J$ is unique up to a $\mathbb P \otimes dt$ nullset (this follows from Corollary 3.16 and Theorem 4.18 in \cite{JacodShirayev2003}).  This implies that $\alpha$ is identified up to a $\mathbb P \otimes dt$ nullset.   Uniqueness of the coefficients $\sigma$ and $\gamma$ can obviously just be derived if we fix $W$ and $N$. 
 Independently of $W$ and $N$
more can be said when looking at the unique operator angle brackets of the martingales $M^C $ and $X^J$ (c.f.    \cite[Theorem 8.2]{PZ2007}) which are respectively $\langle\langle M^c,M^c\rangle\rangle = \int_0^t \Sigma_s ds$ and $\langle\langle M^c,M^c\rangle\rangle = \int_0^t\int_{H\setminus \{0\}} \gamma_s(z)^{\otimes 2} \nu(dz,ds)$. That implies uniqueness of $\Sigma$ up to a $\mathbb P \otimes dt$ nullset and uniqueness of $\gamma^{\otimes 2}$ up to a  $\mathbb P\otimes \nu \otimes dt$ nullset.}
 \end{remark}
 
 \begin{remark}\label{rem: ito semimartingles fin vs inf} the term It{\^o}-semimartingale in Hilbert spaces]
\edit{ 
 Our definition of an It\^o semimartingale in a Hilbert space is based on the Jacod--Grigelionis characterization of finite-dimensional It\^o semimartingales (see \cite[Theorem 2.1.2]{JacodProtter2012}).  However,  in contrast to the finite-dimensional framework, we omit a term that accounts for non-compensatable large jumps. However, we allow for large jumps as long as they can be compensated by a continuous finite-variation process. Specifically, large jump terms of the form $ \int_0^t \int_{H \setminus \{0\}} \gamma_s(z) N(dz, ds) $
are included provided that the integrability condition
$$
\int_{H \setminus \{0\}} \|\gamma_s(z)\|_H \, F(dz) < \infty
$$
holds. Under this assumption, we can decompose
$$
 \int_0^t \int_{H \setminus \{0\}} \gamma_s(z) \, N (dz, ds) := \int_0^t \int_{H \setminus \{0\}} \gamma_s(z) \, (N - \nu)(dz, ds) + \int_0^t \int_{H \setminus \{0\}} \gamma_s(z) \, F(dz) \, ds.
$$
The first part of this expression is again a discontinuous martingale of the same form as $X^J$ and  the second part is a cotinuous drift term which is of the same form as $A$.
 }
  \end{remark}

    We can now rewrite the quadratic covariation \eqref{Quadratic variation abstract probability limit} of $X$  in terms of the volatility $\sigma$ and the jumps of the process  as 
\begin{align}\label{semimartingale quadratic variation}
    [X,X]_t= \int_0^t \Sigma_s ds+ \sum_{s\leq t} (X_s-X_{s-})^{\otimes 2},
\end{align}
where $\Sigma_s = \sigma_s\sigma_s^*$ (where $\sigma_s^*$ is the Hilbert space adjoint of $\sigma_s$) and  
$X_{t-}:= \lim_{s\uparrow t}X_s$ is the left limit of $(X_t)_{t\geq 0}$ at $t$, which is well-defined, since $X_t$ has c{\`a}dl{\`a}g paths. This characterization follows as a special case of Theorem \ref{T: Abstract identification of quadr var} below \edit{(with the choice $\mathcal A\equiv 0$)}.   Moreover, it is 
$$[X^C,X^C]_t= \int_0^t \Sigma_s ds,\quad \text{and}\quad [X^J,X^J]_t=\sum_{s\leq t} (X_s-X_{s-})^{\otimes 2},$$
which clarifies, why $ [X^J,X^J]_t$ corresponds to the covariation induced by jumps.

Hilbert space-valued It{\^o} semimartingales define a very general class of drivers. In finite dimensions, It{\^o} semimartingales are commonly employed as a nonparametric framework for infill-statistics (c.f. \cite{JacodProtter2012} or \cite{Ait-SahaliaJacod2014}).  
A simple, intuitive and important special case in the infinite-dimensional setting can be found in the next Example.

\begin{example}\label{Ex: CPP in Hilbert space}[Sum of $Q$-Wiener and compound Poisson process]
A simple example for the driving semimartingale is the sum of an $H$-valued Wiener process and an $H$-valued \edit{compound Poisson process}
$$X_t= at+W^Q_t+ J_t.$$
Here, $a\in H$ and the $Q$-Wiener process $W_t\sim N(0,tQ)$ for a covariance operator $Q$ in $H$ has independent stationary increments and forms a continuous martingale.  If $Q^{\frac 12}$ denotes the positive operators square root of $Q$, we have in distribution that $W^Q_{\cdot}=\int_0^{\cdot} Q^{\frac 12} dW_t$ for a cylindrical Wiener process $W$ and we have $M_t^C=W_t^Q$ in this case. 

The jumps correspond to a compound Poisson processes
    $$J_t:=\sum_{i=1}^{N_t} \chi_i,$$
    for an i.i.d. sequence $(\chi_i)_{i\in \mathbb N}$ of random variables in $H$ with law $F$ and finite second moment ($\mathbb E[\|\chi_i\|^2]<\infty$) and a Poisson process $N$ with intensity $\lambda>0$. 
     We cannot directly set $X_t^J$ to be equal to $ J_t$, as
     it is not in compensated form and not a martingale.   However, we can rewrite the dynamics accordingly \edit{as described in Remark \ref{rem: ito semimartingles fin vs inf}}.
    For that, define the Poisson random measure
    $N(B,[0,t]):= \#\{i\leq N_t: \chi_i\in B\} $  for $ B \in \mathcal B(H\setminus \{0\}), t\geq 0.$
    This has compensator measure $\nu= \lambda dt\otimes F(dz)$ and we can write
    $X_t^J=\sum_{i=1}^{N_t} \chi_i-\lambda t \mathbb E[\chi_1]=\int_0^t \int_{H
    \setminus \{0\}} z (N-\nu)(dz,ds)$.
  Setting $A_t=(a+\lambda \mathbb E[X_1])t$, we can then define $X^C_t=A_t+M_t^C=(a+\lambda \mathbb E[X_1])t+W_t^Q$.

 The quadratic covariation of this semimartingale is
  $$[X,X]_t=[X^C,X^C]_t+[X^J,X^J]_t= t Q + \sum_{i=1}^{N_t} \chi^{\otimes 2}_i.$$

If further $H=L^2(I)$ is the space of (equivalence classes) of square-integrable real-valued functions on an interval $I\subset \mathbb R$,  we could characterize the operators $Q$ and $\chi_i^{\otimes 2}$ by integral operators
$$Qf(x)=\int_I q(x,y)f(y)dy, \quad \chi_i^{\otimes 2} f(x)=  \int_I \chi_i(x)\edit{\chi_i(y)}f(y)dy\qquad\forall x\in I,$$
for a kernel $q\in L^2(I^2)$.  So, in particular, estimation of $Q$ boils down to the estimation of a bivariate function $q$, which might be assumed to be smooth in certain applications.
\end{example}

    \subsection{Mild Solutions to Stochastic Evolution Equations}
Let us now turn to the general class of solutions to equations of the form \eqref{evol eq}.  Recall that  $\mathcal A$ generates a strongly continuous semigroup $(\mathcal S_t)_{t\geq 0}$ in $H$.  A mild solution to \eqref{evol eq} is the stochastic convolution
\begin{align}\label{mild Ito process}
f_t 
=& \mathcal S(t) f_0+\int_0^t \mathcal S(t-s)\alpha_s ds+\int_0^t \mathcal S(t-s)\sigma_sdW_s
+\int_0^t\int_{H\setminus \{0\}}\mathcal S(t-s)\gamma_s(z) (N-\nu)(dz,ds).
\end{align}
Indeed, as long as a solution to such an equation exists, the coefficients $\alpha, \sigma$ and $\gamma$ can be dependent on $f$ themselves, allowing for many nonlinear dependence patterns in the dynamics.  A comprehensive discussion on conditions ensuring the existence of such a mild solution in this case is given in \cite{FilipovicTappeTeichmann2010}.

\section{Asymptotic Theory}\label{Sec: AbstractLimit theorems, fda case}

\edit{We now turn to the estimation theory for the covariations $[X,X]$, $[X^C,X^C]$ and $[X^J,X^J]$.}

Hereafter, $f$ is a c{\`a}dl{\`a}g adapted stochastic process that is  defined on the filtered probability space $(\Omega,\mathcal F,(\mathcal F_t)_{t\in \mathbb R_+},\mathbb P)$ with right-continuous filtration $(\mathcal F_t)_{t\in \mathbb R_+}$ taking values in the separable Hilbert space $H$ and is a mild solution of type \eqref{mild Ito process} for the general evolution equation \eqref{evol eq} with an abstract generator $\mathcal A$ of a semigroup $(\mathcal S(t))_{t\geq 0}$.
 The semimartingale and its coefficients $\alpha,$ $\sigma$, $\gamma$, and drivers $W$ and $N$ are defined as in Section \ref{Sec: mild ito semimartingales}. 
 
\subsection{Identifiability of $[X,X]$, $[X^C,X^C]$, and $[X^J,X^J]$}\label{Sec: Identifiability}
In the absence of jumps,  \edit{the nontruncated} $SARCV_t^n$ is always a consistent estimator of $[X^C,X^C]_t$.   The next theorem says, that with jumps,  $SARCV$ is a consistent estimator for the quadratic variation of the driving semimartingale $X$.  \edit{Although $f$ is not necessarily a semimartingale,} no additional assumptions are needed for this identifiability result:
\begin{theorem}\label{T: Abstract identification of quadr var}
    As  $n\to \infty$ and w.r.t. the Hilbert-Schmidt norm it is
    $$ SARCV_t^n\overset{u.c.p.}{\longrightarrow}[X,X]_t=\int_0^t \Sigma_s ds+\sum_{s\leq t}\left( X_s- X_{s-}\right)^{\otimes 2}. $$
\end{theorem}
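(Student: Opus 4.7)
The plan is to start from the mild-solution representation
\begin{equation*}
\tilde\Delta_i^n f \;=\; \int_{(i-1)\Delta_n}^{i\Delta_n}\!\mathcal S(i\Delta_n - s)\,dX_s,
\end{equation*}
which follows from the semigroup property applied to \eqref{mild Ito process} (writing $f_{i\Delta_n}$ using the mild form and subtracting $\mathcal S(\Delta_n)f_{(i-1)\Delta_n}$). Decomposing $X = X^C + X^J$ gives $\tilde\Delta_i^n f = \tilde I_i^C + \tilde I_i^J$, and I expand
\begin{equation*}
SARCV_t^n \;=\; \sum_{i=1}^{\lfloor t/\Delta_n\rfloor}(\tilde I_i^C)^{\otimes 2} \;+\; \sum_{i=1}^{\lfloor t/\Delta_n\rfloor}(\tilde I_i^J)^{\otimes 2} \;+\; \sum_{i=1}^{\lfloor t/\Delta_n\rfloor}\bigl(\tilde I_i^C\otimes \tilde I_i^J + \tilde I_i^J\otimes \tilde I_i^C\bigr).
\end{equation*}
The claim reduces to showing that the three sums converge u.c.p. in $L_{\text{HS}}(H)$ to $\int_0^t\Sigma_s\,ds$, $\sum_{s\leq t}(X_s-X_{s-})^{\otimes 2}$, and $0$, respectively. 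A standard localization will allow me to assume the coefficients $\alpha_s$, $\sigma_s$, $\gamma_s(\cdot)$ are bounded by a deterministic constant and that the semigroup is uniformly bounded on $[0,T]$.

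For the continuous-continuous term, I would invoke the limit theorem for the semigroup-adjusted realized covariation in the no-jump case from \cite{BSV2022} and \cite{Benth2022}, applied to the continuous semimartingale $X^C$ driving a mild solution $f^C$. The only modification is that the $\gamma$-part enters the actual $f$ but not the limit; this is absorbed by the cross-term analysis. For the cross terms, I use that $\tilde I_i^C$ and $\tilde I_i^J$ are orthogonal martingale-type increments in the Ito sense, so a Doob-type bound on the HS norm of the sum reduces matters to showing $\sum_i \EE[\|\tilde I_i^C\|^2\,\|\tilde I_i^J\|^2\mid \Fi]\to 0$, which follows because $\EE[\|\tilde I_i^C\|^2\mid \Fi] = O(\Delta_n)$ and the predictable compensator of $\sum_i \|\tilde I_i^J\|^2$ is finite.

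The crux is the jump-jump term. I would split jumps at level $\epsilon>0$: write $X^J = X^{J,\epsilon} + X^{J,\leq\epsilon}$, where $X^{J,\epsilon}$ collects the (a.s.\ finitely many on $[0,T]$) jumps of size larger than $\epsilon$ and $X^{J,\leq\epsilon}$ the compensated small jumps. For the big jumps, on a set where no two jumps share an interval (which has probability tending to one as $n\to\infty$), the unique jump $\Delta X_\tau$ in the $i$-th interval contributes $\bigl(\mathcal S(i\Delta_n-\tau)\Delta X_\tau\bigr)^{\otimes 2}$, and strong continuity of $\mathcal S$ on the compact set $\{\Delta X_\tau:\tau\leq T,\|\Delta X_\tau\|>\epsilon\}$ yields u.c.p.\ convergence to $\sum_{s\leq t,\|\Delta X_s\|>\epsilon}(\Delta X_s)^{\otimes 2}$. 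For the small jumps, a second-moment estimate using the It{\^o} isometry w.r.t.\ the compensated Poisson measure gives
\begin{equation*}
\EE\Bigl[\Bigl\|\sum_{i=1}^{\lfloor T/\Delta_n\rfloor}(\tilde I_i^{J,\leq\epsilon})^{\otimes 2}-\!\!\!\sum_{s\leq T,\|\Delta X_s\|\leq\epsilon}\!\!\!(\Delta X_s)^{\otimes 2}\Bigr\|_{\HS}\Bigr] \;\leq\; C\int_{\|z\|\leq\epsilon}\|z\|^2F(dz),
\end{equation*}
which tends to zero as $\epsilon\downarrow 0$, uniformly in $n$. Combined with mixed-jump cross terms (handled analogously), a standard $\epsilon$-then-$n$ limiting argument closes the proof.

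The main obstacle I expect is establishing convergence in the Hilbert--Schmidt \emph{operator} norm, rather than merely weakly or in the trace norm: the summands $(\tilde I_i^C)^{\otimes 2}$ are only nuclear, and controlling the HS norm of the discrepancy between $\sum_i (\tilde I_i^C)^{\otimes 2}$ and $\int_0^t\Sigma_s\,ds$ requires exploiting the Hilbert--Schmidt structure of $\sigma_s$ together with a maximal-inequality argument uniformly over $t\in[0,T]$. A similar obstacle reappears in controlling the approximation $\mathcal S(i\Delta_n-\tau)\Delta X_\tau \approx \Delta X_\tau$ in the jump term, where I need a dominated-convergence-type argument in HS norm using that each $(\Delta X_\tau)^{\otimes 2}$ is a rank-one (hence HS) operator.
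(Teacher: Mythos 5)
Your decomposition is a genuinely different route from the paper. The paper eliminates the semigroup adjustment by testing $SARCV_t^n$ against an orthonormal basis $(e_j)\subset D(\mathcal A^*)$ (Lemma \ref{L: Reduction to semimartingales is possible}), which converts $\langle \tilde\Delta_i^n f, e_k\rangle$ into $\langle\Delta_i^n X, e_k\rangle$ plus a negligible remainder; it then invokes the finite-dimensional realized covariation theory of Jacod--Protter for the projected process $\langle X, e_j\rangle_{j\leq N}$, and finally controls the projection tails $\|SARCV_t^n - P_N SARCV_t^n P_N\|_{\text{HS}}$ and $\|[X,X]_t - P_N[X,X]_t P_N\|_{\text{HS}}$ uniformly in $n$ via Lemma \ref{L: Projection convergese uniformly on the range of volatility} and Dini's theorem. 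You instead split $X = X^C + X^J$ at the level of the increments, appeal to the existing no-jump results of \cite{Benth2022,BSV2022} for the continuous block, and re-derive the jump block in $\HS$-norm via an $\epsilon$-level split. Both close the argument, but the paper's reduction to finite dimensions is more modular---the same Lemma \ref{L: Reduction to semimartingales is possible} underlies Theorem \ref{T: Identification of continuous and disc quadr var}---and avoids the $\epsilon$-then-$n$ diagonalization; your route is more self-contained on the continuous part but effectively re-proves finite-dimensional jump estimates in the $\HS$ setting.

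Two technical points you should tighten. First, your cross-term argument as stated does not quite close: the reduction to $\sum_i \EE[\|\tilde I_i^C\|^2\|\tilde I_i^J\|^2\mid\Fi]\to 0$ needs either conditional independence of $\tilde I_i^C$ and $\tilde I_i^J$ given $\Fi$ (which fails because $\sigma$ and $\gamma$ are not $\Fi$-measurable on $((i-1)\Delta_n, i\Delta_n]$) or fourth-moment control. A cleaner route is to split the jump increment at level $\epsilon$ before taking cross terms: the continuous-vs-big-jump crosses are finitely many terms each $O_p(\Delta_n^{1/2})$, while the continuous-vs-small-jump crosses are bounded by $(\sum_i\|\tilde I_i^C\|^2)^{1/2}(\sum_i\|\tilde I_i^{J,\leq\epsilon}\|^2)^{1/2}$ with the first factor tight and the second $\to 0$ as $\epsilon\downarrow 0$ uniformly in $n$. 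Second, the small-jump truncation should be on $\|\gamma_s(z)\|$ (or the dominating function $\Gamma(z)$), not on $\|z\|$: the jump of $X$ at time $s$ is $\gamma_s(z)$, not $z$, so $\{\|\Delta X_s\|\leq\epsilon\}$ corresponds to $\|\gamma_s(z)\|\leq\epsilon$, and the relevant smallness is $\int_0^T\int_{\{\|\gamma_s(z)\|\leq\epsilon\}}\EE[\|\gamma_s(z)\|^2]F(dz)\,ds\to 0$.
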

 Identification of the continuous and discontinuous part is also possible under mild conditions, reflecting those needed for finite-dimensional semimartingales such as in \cite{JacodProtter2012}. Namely, we state
 \begin{assumption}[r]\label{As: H}
    $\alpha$ is locally bounded, $\sigma$ is c{\`a}dl{\`a}g and there is a localizing sequence of stopping times $(\tau_n)_{n\in \mathbb N}$ and for each $n\in \mathbb N$ a real valued function $\Gamma_n:H\setminus \{0\}\to \mathbb R$ such that $\|\gamma_t(z)(\omega)\|\wedge 1\leq \Gamma_n(z)$ whenever $t\leq \tau_n(\omega)$ and $\int_{H\setminus \{0\}}\Gamma_n(z)^rF(dz)<\infty$ \edit{for some $r\in (0,2]$}.
\end{assumption} This assumption is a direct generalization of Assumption (H-r) in \cite{JacodProtter2012}, which is foremost an assumption on the jumps of $X$.  Indeed for $r<2$, the assumption  implies that the jumps of the process are $r$-summable, that is, we have
$$\sum_{s\leq t}\|X_s-X_{s-}\|^{l}<\infty\quad \forall l>r.$$
Assumption \ref{As: H} enables identification of $[X^C,X^C]$ and $[X^J,X^J]$:
\begin{theorem}\label{T: Identification of continuous and disc quadr var}
Let  Assumption \ref{As: H}(2) hold and let \edit{$u_n= \Delta_n^{w}$, for some $w\in (0,1/2)$}. Then  w.r.t. the Hilbert-Schmidt norm and as $n\to\infty$ it is
$$SARCV_t^n(u_n,-)\overset{u.c.p.}{\longrightarrow}[X^C,X^C]_t=\int_0^t \Sigma_s  ds$$
and 
$$SARCV_t^n(u_n,+)\overset{u.c.p.}{\longrightarrow}[X^J,X^J]_t=\sum_{s\leq t}\left( X_s- X_{s-}\right)^{\otimes 2}.$$
\end{theorem}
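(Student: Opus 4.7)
The plan is to reduce to the jump-free identification (already established for $SARCV^n=SARCV^n(\infty,-)$ in \cite{BSV2022} and \cite{Benth2022}) by a truncation argument that separates small continuous increments from genuinely discontinuous ones. Decompose $X = X^C + X^J$ and accordingly the mild solution as $f = f^C + f^J$, where
\begin{equation*}
f_t^C = \mathcal S(t)f_0 + \int_0^t \mathcal S(t-s)\alpha_s\,ds + \int_0^t \mathcal S(t-s)\sigma_s\,dW_s, \qquad f_t^J = \int_0^t\!\!\int_{H\setminus\{0\}}\mathcal S(t-s)\gamma_s(z)(N-\nu)(dz,ds).
\end{equation*}
The semigroup-adjusted increments split linearly, $\tilde\Delta_i^n f = \tilde\Delta_i^n f^C + \tilde\Delta_i^n f^J$, and the tensor-squares split into two diagonal pieces and the Hermitian cross term $\tilde\Delta_i^n f^C \otimes \tilde\Delta_i^n f^J + \tilde\Delta_i^n f^J \otimes \tilde\Delta_i^n f^C$.

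The first step is to control the continuous increments. After a standard localization making $\alpha, \sigma$ bounded, It\^o's isometry together with the strong continuity of $\mathcal S$ and the cancellation built into the semigroup adjustment give $\EE\bigl[\|\tilde\Delta_i^n f^C\|_H^p \bigm| \Fi\bigr] \leq C_p \Delta_n^{p/2}$ for every $p\geq 2$ (this is the same type of estimate underlying the jump-free limit theorems in \cite{BSV2022}). Via the upper bound $g_n(h)\leq C\|h\|$ from \eqref{abstract g conditions} and Markov's inequality, the probability that a purely continuous increment exceeds the threshold, $\Prob[g_n(\tilde\Delta_i^n f^C) > u_n/2]$, is negligible at the relevant rate once $u_n$ dominates $\sqrt{\Delta_n}$ as specified. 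Symmetrically, on any interval containing a jump of $X$ of size $\|\Delta X_\tau\|$ bounded away from zero, strong continuity of $\mathcal S$ at $0$ and the lower bound $g_n(h)\geq c\|h\|$ force $g_n(\tilde\Delta_i^n f) > u_n$ eventually.

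These two facts let us reorganize $SARCV_t^n(u_n,-)$: large-jump intervals are excised, and on the remaining intervals one may replace $\tilde\Delta_i^n f$ by $\tilde\Delta_i^n f^C$ up to a cross-term error. For the reduced sum $\sum_i (\tilde\Delta_i^n f^C)^{\otimes 2}$, the known identification without jumps yields the u.c.p. limit $\int_0^t \Sigma_s\,ds$ in Hilbert--Schmidt norm. For the subset of "small" jumps (those with $\|\Delta X_\tau\|$ of order $u_n$ or smaller) that may slip through the truncation, Assumption \ref{As: H}(2) gives square-summability of the jumps, and dominated convergence yields $\sum_{s\leq t}\|\Delta X_s\|^2\indicator_{\|\Delta X_s\|\leq \epsilon}\to 0$ as $\epsilon\downarrow 0$, controlling their pooled contribution to the truncated sum uniformly in $t$. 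Finally, the statement for $SARCV_t^n(u_n,+)$ follows by subtraction, writing $SARCV_t^n(u_n,+) = SARCV_t^n - SARCV_t^n(u_n,-)$ and invoking Theorem \ref{T: Abstract identification of quadr var}.

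The principal obstacle is uniformity in $t$ within the Hilbert--Schmidt topology for the cross term, combined with the fact that the observable $f$ is not itself a semimartingale (so one cannot invoke classical finite-dimensional truncation machinery componentwise, as stressed in the introduction). A single cross term on a jump interval has HS-norm of order $\sqrt{\Delta_n}\cdot\|\Delta X_\tau\|$, and although large jumps are truncated away, one must argue that the accumulated cross contribution from small jumps vanishes uniformly on $[0,T]$; this requires a tail bound on the summed cross terms combined with the subadditivity of $g_n$, which is where the generality of the truncation functional enters in a nontrivial way. A secondary technical point is that $\mathcal S(i\Delta_n - \tau)\Delta X_\tau$ need not equal $\Delta X_\tau$ on the interval, so the equality $[X^J,X^J]_t = \sum_{s\leq t}(\Delta X_s)^{\otimes 2}$ on the limit side must be reconciled with the fact that what the estimator sees at the jump interval is a semigroup-displaced version --- handled by the strong continuity of $\mathcal S$, which makes the displacement a $o(1)$ perturbation as $n\to\infty$.
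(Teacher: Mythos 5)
Your route --- split $f = f^C + f^J$, expand tensor squares, reduce to the jump-free result for $\sum_i(\tilde\Delta_i^n f^C)^{\otimes 2}$, and control the cross and jump contributions by truncation plus square-summability --- is not what the paper does here. The paper fixes an orthonormal basis $(e_j)\subset D(\mathcal A^*)$, uses Lemma \ref{L: Reduction to semimartingales is possible} to show that $\langle SARCV^n_t(u_n,-), e_k\otimes e_l\rangle_{\text{HS}}$ agrees up to $o_p(1)$ with the truncated realized covariation of the genuinely finite-dimensional semimartingale $(\langle X,e_1\rangle,\dots,\langle X,e_N\rangle)$, invokes the off-the-shelf result for general semimartingales (Jacod--Protter, Theorem 9.2.1) in that coordinate, and then bootstraps to the Hilbert--Schmidt norm by showing that $\sup_n\sum_i\EE\|(I-P_N)\tilde\Delta_i^n f\|^2\to 0$ as $N\to\infty$ (Lemma \ref{L: Projection convergese uniformly on the range of volatility}).

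There is also a genuine gap in your sketch, and it is exactly the one you flag but do not close: the pooled cross-term and small-jump contribution in HS norm, uniformly in $t$, under Assumption \ref{As: H}(2). With only square-summable (possibly infinite-activity) jumps, the quantitative bounds your decomposition would rest on degenerate: in Lemma \ref{Lem: Truncated minus continuous truncated is AN} the achievable exponent $\rho<(2-r)w$ collapses to zero as $r\to 2$, and the required moment order $p>(2-2wr)/(2(2-r)w)$ diverges. This is precisely why the paper confines the direct-decomposition route to the rate theorem (Theorem \ref{T: rates of convergence}, which assumes $r<2$) and instead proves identification under H(2) by finite-dimensional projection. To close your argument under H(2) you would have to reproduce the split-at-level-$\epsilon$ diagonalization of the finite-dimensional truncation theorem directly in $L_{\HS}(H)$ --- first letting $n\to\infty$ with jumps of size above a fixed $\epsilon$ excised, then letting $\epsilon\downarrow 0$ using $\sum_{s\le T}\|\Delta X_s\|^2\indicator_{\|\Delta X_s\|\le\epsilon}\to 0$ --- which is considerably more than ``a tail bound combined with subadditivity of $g_n$.'' The cleaner exit is the one the paper takes: project to finitely many coordinates in $D(\mathcal A^*)$ where the one-dimensional machinery is available, and control the tail uniformly.
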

For identification of quadratic variations, the mild assumptions, which are in line with finite-dimensional semimartingale theory suggest that the limit theory for semigroup-adjusted quadratic variations is analogous to the one for finite-dimensional semimartingales. However,  though  general identification is possible, the difference is in the rates of convergence, and we will elaborate such rates for the jump-robust estimator $SARCV(u_n,-)$ in the next section.

\subsection{Rates of Convergence and Asymptotic Normality of  $SARCV_t^n (u_n,-)$}\label{Sec: Rates and CLT}
Rates of convergence for $SARCV(u_n,-)$ as an estimator of $[X^C,X^C]$ will depend on the regularity of the semigroup on the range of the volatility.  To make this precise, we formulate 
\begin{assumption}\label{As: spatial regularity}[$\gamma$]
One of the subsequent two conditions hold:\begin{itemize}\item[(i)]
 \begin{equation}\label{AS spatial regularity in terms of SIGMA}
        \mathbb P\left[\int_0^T \sup_{r >0}\left\|\frac{(I-\mathcal S(r))\Sigma_s}{r^{\gamma}}\right\|_{\text{HS}}ds<\infty\right]=1.
    \end{equation}
    \item[(ii)] \begin{equation}\label{AS spatial regularity in terms of SIGMAii}
     \int_0^T \sup_{r >0}   \mathbb E\left[\left\|\frac{(I-\mathcal S(r))\Sigma_s}{r^{\gamma}}\right\|_{\text{HS}}\right]ds<\infty.
    \end{equation}
\end{itemize}
\end{assumption}
\begin{remark}[Relation of Assumptions \ref{As: spatial regularity}(i) and \ref{As: spatial regularity}(ii)]
\edit{Assumption \ref{As: spatial regularity}(i) is applicable even when the volatility process $\Sigma$ does not have a finite first moment, so (ii) need not hold if (i) does. We note that we do not provide a rigorous argument to establish whether (ii) implies (i). Assumption \ref{As: spatial regularity}(ii) is included as an alternative condition because it can be easier to verify when the volatility process is stationary, a common assumption in many models.}
\end{remark}

\begin{remark} [Regularity conditions in case of fractional powers of $\mathcal A$]
\edit{Assumption \ref{As: spatial regularity} does not imply that the volatility maps into the range of $\mathcal A$ so it can in general not be expressed in terms of the generator $\mathcal A$ directly.  However when it is possible to define fractional powers of $A$ (e.g. in the case of the Laplacian) more can be said.  Let for instance $A= \sum_{i=1}^{\infty} \lambda_i e_i^{\otimes 2}$ for some eigenvalues $\lambda_i$ and eigenvectors $e_i$ for $i\in \mathbb N$. In this case,  the estimate $\|(I-S(t))A^{\eta}\|_{L(H)}\leq C t^{\eta} $ holds for a constant $C$ indepdendent of $t$ (c.f.  Lemma B.9 in \cite{Kruse2014}).  This yields that Assumption \ref{As: spatial regularity} holds, in particular, when 
 \begin{equation}
      \|A^{\gamma} \Sigma_s\|_{\text{HS}}<\infty\quad \mathbb P\otimes dt \quad\text{ almost everywhere.}
    \end{equation}}
\end{remark}
Now we can find rates of convergence:
\begin{theorem}\label{T: rates of convergence}
Let Assumptions \ref{As: H}(r) hold for some $r\in (0,2)$ and Assumption \ref{As: spatial regularity}($\gamma$) hold for some $\gamma \in (0,1/2]$ \edit{and let $u_n= \Delta_n^{w}$, for some $w\in (0,1/2)$}.
Then it is for each $\rho<(2-r)w$, $T\geq 0$ as $n\to \infty$
    $$\sup_{t\in [0,T]}\left\|SARCV_t^n(u_n,-)-[X^C,X^C]_t\right\|_{\text{HS}}=\mathcal O_p\left(\Delta_n^{\min(\gamma,\rho)}\right).$$
    In particular, if $r<2(1-\gamma)$ and if we choose $w\in [\gamma/(2-r),1/2]$ we have 
       $$\sup_{t\in [0,T]}\left\|SARCV_t^n(u_n,-)-[X^C,X^C]_t\right\|_{\text{HS}}=\mathcal O_p\left(\Delta_n^{\gamma}\right).$$
\end{theorem}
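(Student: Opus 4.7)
The approach starts from the mild solution representation, which gives the fundamental identity
\begin{equation*}
\tilde{\Delta}_i^n f = \int_{(i-1)\Delta_n}^{i\Delta_n}\mathcal{S}(i\Delta_n-s)\,dX_s = \tilde{\Delta}_i^n f^C + \tilde{\Delta}_i^n f^J,
\end{equation*}
where the two summands correspond to the continuous part $X^C=A+M^C$ and the jump part $X^J$ of the driver. Expanding the outer tensor product $(\tilde\Delta_i^n f)^{\otimes 2}$ in $SARCV_t^n(u_n,-)$ produces the diagonal terms $(\tilde\Delta_i^n f^C)^{\otimes 2}$ and $(\tilde\Delta_i^n f^J)^{\otimes 2}$ plus two symmetric cross terms, each multiplied by $\indicator_{g_n(\tilde\Delta_i^n f)\leq u_n}$. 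My plan is to bound each of these four contributions separately against the target $[X^C,X^C]_t=\int_0^t\Sigma_s\,ds$, exploiting the two-sided bound $c\|f\|\leq g_n(f)\leq C\|f\|$ to reduce all truncation events to norm-based ones.

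First I would treat the untruncated continuous contribution $\sum(\tilde\Delta_i^n f^C)^{\otimes 2}$: by the asymptotic theory for semigroup-adjusted realized covariations in the continuous case (as in \cite{BSV2022}, \cite{Benth2022}) together with Assumption \ref{As: spatial regularity}($\gamma$), this converges to $[X^C,X^C]_t$ uniformly on $[0,T]$ at rate $\Delta_n^{\gamma}$; the spatial regularity assumption enters precisely through bounds of the form $\|(I-\mathcal{S}(r))\Sigma_s\|_{\HS}\lesssim r^\gamma$, which govern the semigroup-induced bias in each summand. The correction from reintroducing the indicator $\indicator_{g_n(\tilde\Delta_i^n f)\leq u_n}$ on the continuous diagonal is of smaller order: on $\{g_n(\tilde\Delta_i^n f)>u_n\}$ either the continuous part itself is atypically large or a jump has occurred, and Chebyshev applied with the $L^2$-bound $\EE[\|\tilde\Delta_i^n f^C\|^2]=O(\Delta_n)$ controls the first case below $\Delta_n^{(2-r)w}$ when $u_n\sim \Delta_n^w$.

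For the jump-driven terms I would adapt the Mancini/Jacod truncation argument to the semigroup-adjusted setting. Writing $u_n=c\Delta_n^w$, on the truncation event one has $\|\tilde\Delta_i^n f\|\leq (C/c)u_n$; combined with Assumption \ref{As: H}(r), which yields $r$-summability of jumps and in particular the standard bound
\begin{equation*}
\EE\Bigl[\bigl\|\tilde\Delta_i^n f^J\bigr\|^2\indicator_{\|\tilde\Delta_i^n f^J\|\leq K u_n}\,\big|\,\Fi\Bigr]\lesssim \Delta_n\,u_n^{2-r},
\end{equation*}
this drives the pure jump diagonal to zero at rate $\Delta_n^{(2-r)w}$, hence for any $\rho<(2-r)w$ at rate $\Delta_n^\rho$. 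The cross terms are handled by Cauchy--Schwarz in Hilbert--Schmidt norm against the truncated jump diagonal and the $O_p(1)$-bounded continuous sum. Uniformity on $[0,T]$ is obtained by Doob's maximal inequality applied to the martingale pieces and monotonicity arguments for the remaining summands.

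The main obstacle I anticipate is precisely the semigroup-induced coupling of continuous and jump information within each small interval: because $f$ is not itself a semimartingale, the event $\{g_n(\tilde\Delta_i^n f)\leq u_n\}$ does not coincide with "no jump occurred in $((i-1)\Delta_n,i\Delta_n]$", so one must carefully bound the probability that a jump of $X^J$ of size near the threshold, after convolution with $\mathcal{S}(i\Delta_n-s)$ and addition of the continuous part, nevertheless passes below the truncation level. The subadditivity of $g_n$ in \eqref{abstract g conditions} is essential here, letting one split $g_n(\tilde\Delta_i^n f)\leq g_n(\tilde\Delta_i^n f^C)+g_n(\tilde\Delta_i^n f^J)$ and thus force one of the two components to be of order $u_n$; quantifying the resulting rare events by Markov's inequality and the (H-$r$) control of jump intensity is what produces the $(2-r)w$ exponent and closes the proof.
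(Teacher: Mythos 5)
Your plan follows the same broad philosophy as the paper's proof: split the increment into continuous and jump contributions, use the subadditivity of $g_n$ together with the two-sided bound \eqref{abstract g conditions} to compare the truncation events $\{g_n(\tilde\Delta_i^n f)\leq u_n\}$, $\{g_n(\tilde\Delta_i^n f^C)\leq u_n\}$, $\{g_n(\tilde\Delta_i^n f^J)\leq u_n\}$, bound the jump diagonal by a Mancini--Jacod estimate delivering the exponent $(2-r)w$, and control the semigroup-induced bias via Assumption \ref{As: spatial regularity}. Your algebraic starting point differs from the paper's, though: you expand $(\tilde\Delta_i^n f)^{\otimes 2}$ into two diagonal and two cross terms, all carrying the indicator $\indicator_{g_n(\tilde\Delta_i^n f)\leq u_n}$, whereas the paper (in Lemma \ref{Lem: Truncated minus continuous truncated is AN}) first replaces $\tilde\Delta_i^n f$ by the continuous part \emph{in the tensor and the indicator simultaneously}, then removes the indicator (Lemma \ref{lem: Continuous truncated-continuous is AN}), then passes from realized variation to the discretised integrated volatility $\int_0^t\Sigma_s^{\mathcal S_n}\,ds$, and finally controls the semigroup bias (Lemma \ref{L: Temporal Discretisation error induced by the semigroup}). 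Your cross-term and jump-diagonal bounds, carried out with the $\|\cdot\|\wedge u_n$ trick and H\"older's inequality, indeed recover the exponent $(1-wr)(p-1)/p$ which tends to $1-wr\geq (2-r)w$ (for $w\leq 1/2$) as $p\to\infty$, so the scheme can be made to work.

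However, there is a concrete gap you need to close: your argument makes repeated use of moment bounds of the type $\mathbb E[\|\tilde\Delta_i^n f^C\|^2]=O(\Delta_n)$, $\mathbb E[\|\tilde\Delta_i^n f^C\|^{2p}]^{1/p}\lesssim\Delta_n$, and the conditional jump estimate $\mathbb E[\|\tilde\Delta_i^n f^J\|^2\indicator_{\|\tilde\Delta_i^n f^J\|\leq Ku_n}\mid\Fi]\lesssim\Delta_n u_n^{2-r}$. None of these follows from Assumption \ref{As: H}(r) alone: there $\alpha$ is only locally bounded, $\sigma$ only c\`adl\`ag, and the jump-size control $\Gamma_n$ holds only up to the stopping time $\tau_n$. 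The conclusion is an $\mathcal O_p$ statement, so these moment bounds need not hold globally, but then the argument must explicitly pass through a localization. The paper does exactly this in Lemma \ref{L: Localisation}(iii): it reduces to the strengthened, uniformly bounded Assumption \ref{As: SH}(r), which in turn implies Assumption \ref{In proof: Very very Weak localised integrability Assumption on the moments}(p,r) for all $p$, and only then invokes the BDG-type bound of Lemma \ref{lem: Generalized BDG} and the Mancini-type bound of Lemma \ref{L: Bound for truncated jump increments}. Without this step, the displayed $L^2$-bound in your plan is simply unjustified. A second, smaller point: for $r<1$ the paper deliberately shifts the jump compensator into the drift and works with an \emph{uncompensated} Poisson integral (\eqref{Eq: Continuous discontinuous condition for low r}) when estimating the truncated jump contribution, because that is the form in which Lemma \ref{L: Bound for truncated jump increments} has a clean one-sided argument; your $X^C=A+M^C$, $X^J$ compensated martingale splitting is fine in principle, but if you commit to it you must re-derive the truncated jump moment bound in that representation rather than quote it as standard.
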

\begin{remark}
    Let us briefly comment on the rates of convergence in Theorem \ref{T: rates of convergence},  which also gives some intuition on the general discretization scheme conducted in the formal proofs.  For that, observe that we have with the notation $f'_t:= \mathcal S(t)f_0+\int_0^t \mathcal S(t-s)dX_s^C$ and $\Sigma_s^{\mathcal S_n}:= \mathcal S(\lfloor s/\Delta_n\rfloor \Delta_n-s)\Sigma_s\mathcal S(\lfloor s/\Delta_n\rfloor \Delta_n-s)^*,$ 
    that
    \begin{align*}
        & \left\|SARCV_t^n(u_n,-)-[X^C,X^C]_t\right\|_{\text{HS}}\\
        \leq &  \left\|\sum_{i=1}^{\ul} \tilde{\Delta}_i^n f^{\otimes 2}\indicator_{g_{n}(\tilde{\Delta}_i^n f)\leq  u_n}- \tilde \Delta_i^n f'^{\otimes 2}\right\|_{\text{HS}}+ \left\|\sum_{i=1}^{\ul} \left( \tilde \Delta_i^n f'^{\otimes 2}-\int_{(i-1)\Delta_n}^{i\Delta_n} \Sigma_s^{\mathcal S_n}ds\right)\right\|_{\text{HS}}\\
        &+ \left\|\int_0^T \Sigma_s^{\mathcal S_n}ds-[X^C,X^C]_t\right\|_{\text{HS}}\\
        := &
        (i)_n^t+(ii)_n^t+(iii)_n^t.
    \end{align*}
    The first summand is the estimation error that is due to the truncation of jumps and is influenced by the level of the jump activity.
     The second term is an error that behaves similarly to the approximation error of  realized variation for the quadratic variation of continuous finite-dimensional semimartingales, since it approximately corresponds to a sum of martingale differences.
      The third summand entails the error that arises due to the nonsemimartingality of $f$ and depends on the regularity of the semigroup on the range of the volatility, as assumed in Assumption \ref{As: spatial regularity}. 
Under appropriate  assumptions we then have
$$(i)_n^t=o_p(\Delta_n^{\rho}), \quad (ii)_n^t=\mathcal O_p\left(\Delta_n^{\frac 12}\right), \quad (iii)_n^t=\mathcal O_p(\Delta_n^{\gamma}).$$
\end{remark}
An infinite-dimensional central limit theorem can be derived  if we assume a regularity condition that is a bit stronger than Assumption \ref{As: spatial regularity}. Namely, we have the subsequent central limit theorem:
\begin{theorem}\label{T: CLT without disc}
Assume that 
\begin{align}\label{Abstract CLT assumption}
 \mathbb P\left[   \int_0^T\sup_{r\geq 0}\frac{\|(I-\mathcal S(r))\sigma_s\|_{\text{op}}^2}{r} ds<\infty\right]=1
\end{align}
or
\begin{align}\label{Abstract CLT assumptionii}
  \int_0^T\sup_{r\geq 0} \mathbb E\left[ \frac{\|(I-\mathcal S(r))\sigma_s\|_{\text{op}}^2}{r} \right]ds<\infty.
\end{align}
Then Assumption \ref{As: spatial regularity}($1/2$) holds. Let, moreover, Assumption \ref{As: H}(r)  hold for $r<1$,  let $w\in [1/(2-r),1/2]$. 
Then we have w.r.t. the $\|\cdot \|_{L_{\text{HS}}(H)}$ norm and as $n\to\infty$ that
  $$\sqrt n \left( SARCV_t^n(u_n,-)_t^n-[X^C,X^C]_t\right)\overset{st.}{\longrightarrow} \mathcal  N(0,\mathfrak Q_t),$$
   where $\mathcal  N(0,\mathfrak Q_t)$ is for each $t\geq 0$ a Gaussian random variable in $L_{\text{HS}}(H)$ defined on a very good filtered extension $(\tilde{\Omega},\tilde{\mathcal F},\tilde{\mathcal F}_t,\tilde {\mathbb P})$ of $(\Omega,\mathcal F,\mathcal F_t, \mathbb P)$ with mean $0$
  and covariance given for each $t\geq 0$ by a linear operator $\mathfrak Q_t:L_{\text{HS}}(H)\to  L_{\text{HS}}(H)$ such that
  $$\mathfrak Q_t  =\int_0^t \Sigma_s (\cdot+\cdot^*) \Sigma_s ds.$$
\end{theorem}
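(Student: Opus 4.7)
The plan is to first deduce Assumption \ref{As: spatial regularity}(1/2) from the stronger hypothesis, then to refine the three-term decomposition of the remark so as to read off the CLT. For the implication, since $\Sigma_s=\sigma_s\sigma_s^*$, submultiplicativity of the Hilbert-Schmidt and operator norms yields $\|(I-\mathcal S(r))\Sigma_s\|_{\text{HS}}\le \|(I-\mathcal S(r))\sigma_s\|_{\text{op}}\,\|\sigma_s\|_{\text{HS}}$. Dividing by $r^{1/2}$, taking the supremum over $r>0$ and applying Cauchy-Schwarz in $s$ gives the pathwise (resp.\ in-expectation) integrability required by Assumption \ref{As: spatial regularity}(1/2), using the standing square-integrability of $\|\sigma\|_{\text{HS}}$. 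I then decompose $SARCV_t^n(u_n,-)-[X^C,X^C]_t=(i)_n^t+(ii)_n^t+(iii)_n^t$ as in the remark, and aim to show $\sqrt n(i)_n^t$ and $\sqrt n(iii)_n^t$ vanish uniformly on $[0,T]$ in probability, while $\sqrt n (ii)_n^t$ converges stably to the claimed Gaussian.

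For the truncation residual $(i)_n^t$, Theorem \ref{T: rates of convergence} applied with $r<1$ and $w\in[1/(2-r),1/2]$ gives $(2-r)w\ge 1$, so any $\rho\in(1/2,(2-r)w)$ yields $(i)_n^t=o_p(\Delta_n^{1/2})$. For the semigroup-bias term $(iii)_n^t$, I would upgrade the naive $O_p(\Delta_n^{1/2})$ bound to $o_p(\Delta_n^{1/2})$ by telescoping $\mathcal S(\tau)\Sigma_s\mathcal S(\tau)^*-\Sigma_s=(\mathcal S(\tau)-I)\Sigma_s\mathcal S(\tau)^*+\Sigma_s(\mathcal S(\tau)^*-I)$, then combining the strong continuity of $\mathcal S$ at $0$ with a dominated-convergence argument whose envelope is controlled by the integrable quantity in \eqref{Abstract CLT assumption}.

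The core work is the stable martingale CLT for $(ii)_n^t$. After a standard localization rendering $\alpha$, $\sigma$ and $\int_0^T\|\sigma_s\|_{\text{HS}}^2\,ds$ uniformly bounded, I reduce $\tilde\Delta_i^n f'$ to the leading stochastic convolution $U_i:=\int_{(i-1)\Delta_n}^{i\Delta_n}\mathcal S(i\Delta_n-s)\sigma_s\,dW_s$, since the drift contribution to $(\tilde\Delta_i^n f')^{\otimes 2}$ summed is of order $\Delta_n$, hence $o_p(\Delta_n^{1/2})$ globally. The sequence $\xi_i^n:=U_i^{\otimes 2}-\EE[U_i^{\otimes 2}\mid\Fi]$ then forms $L_{\text{HS}}(H)$-valued martingale differences, and I invoke a stable CLT for Hilbert-space-valued square-integrable martingales in the spirit of the one used in \cite{BSV2022}. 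The two hypotheses to verify are convergence of the predictable quadratic variation of $\sqrt n\sum_{i\le \ul}\xi_i^n$ to $\mathfrak Q_t$ in trace-class norm on $L_{\text{HS}}(H)$, and a Lindeberg-type negligibility condition such as $\sum \EE[\|\sqrt n\xi_i^n\|_{\text{HS}}^4\wedge 1]\to 0$. The identification of the covariance rests on a Wick/Isserlis computation: for $\xi\in L_{\text{HS}}(H)$ and the conditionally Gaussian $U_i\mid\mathcal F_{(i-1)\Delta_n}$, the fourth-moment formula reduces $\EE[\langle U_i^{\otimes 2},\xi\rangle_{\text{HS}}^2\mid\Fi]$ to an expression involving $(\xi+\xi^*)$ conjugated by $\int\mathcal S\Sigma_s\mathcal S^*$; summing and passing the semigroup factors to $I$ yields $\mathfrak Q_t(\xi)=\int_0^t\Sigma_s(\xi+\xi^*)\Sigma_s\,ds$, the symmetrization $\xi+\xi^*$ being the signature of a tensor-squared Gaussian. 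Stability is obtained by additionally verifying that the mixed quadratic variation of $\sqrt n\sum\xi_i^n$ with any bounded continuous martingale orthogonal to $W$ vanishes in the limit.

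The principal obstacle is this martingale CLT in the infinite-dimensional Hilbert space $L_{\text{HS}}(H)$: both the precise identification of the limiting covariance operator through the Wick calculation, and the passage from convergence of finite-dimensional projections to convergence in a norm topology strong enough for genuine $L_{\text{HS}}(H)$-valued stable convergence. A secondary difficulty is extracting a true $o_p(\Delta_n^{1/2})$ rate for $(iii)_n^t$, since the submultiplicative bound alone only gives $O_p(\Delta_n^{1/2})$; here the self-adjointness of $\Sigma_s$ together with the improved integrability in \eqref{Abstract CLT assumption} must be exploited more delicately than in the proof of Theorem \ref{T: rates of convergence}.
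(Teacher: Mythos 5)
The structure of your argument parallels the paper's up to a point, but the two diverge at the decisive step. The paper deals with the central limit theorem by establishing that the truncation and jump-removal errors are $o_p(\Delta_n^{1/2})$ uniformly on $[0,T]$ (via Lemma~\ref{Lem: Truncated minus continuous truncated is AN}, Lemma~\ref{lem: Continuous truncated-continuous is AN}, and the localization Lemma~\ref{L: Localisation}(iv)), after which the statement follows immediately by citing Theorem~3.5 of \cite{BSV2022}, which delivers the stable convergence
$\sqrt n\bigl(\sum_{i}(\tilde\Delta_i^n f')^{\otimes 2}-\int_0^t\Sigma_s\,ds\bigr)\to\mathcal N(0,\mathfrak Q_t)$ in one stroke, including the semigroup discretization bias. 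You instead try to reprove this limit from scratch via the three-term decomposition of the remark, which is more ambitious but is precisely where the gaps in your sketch lie.

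The most serious gap concerns $(iii)_n^t$. You claim that "strong continuity of $\mathcal S$ at $0$ with a dominated-convergence argument whose envelope is controlled by the integrable quantity in \eqref{Abstract CLT assumption}" upgrades the $O_p(\Delta_n^{1/2})$ bound to $o_p(\Delta_n^{1/2})$, but this does not work: strong continuity of the semigroup only yields $\|(\mathcal S(r)-I)\Sigma_s\|_{\text{HS}}\to 0$, not a rate, and Condition~\eqref{Abstract CLT assumption} merely makes $\sup_{r>0}\|(\mathcal S(r)-I)\Sigma_s\|_{\text{HS}}/\sqrt r$ finite. It does not force $\sup_{r\le\Delta_n}\|(\mathcal S(r)-I)\Sigma_s\|_{\text{HS}}/\sqrt r\to 0$ as $\Delta_n\to 0$, which is what your dominated-convergence step would require. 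A concrete obstruction: take $H=\ell^2$, $\mathcal S(\tau)e_n=e^{-n^2\tau}e_n$ and $\sigma=\sum_n n^{-1}e_n\otimes e_n$; then $\|(\mathcal S(\tau)-I)\sigma\|_{\text{op}}\asymp\sqrt\tau$ and the ratio stays bounded away from zero. You in fact acknowledge this as a "secondary difficulty" without resolving it, which means your decomposition as written does not close. Whatever finer mechanism makes $(iii)_n^t$ negligible at rate $\sqrt{\Delta_n}$ (most likely a refined use of the HS rather than the operator norm, or the specific structure of $\Sigma_s=\sigma_s\sigma_s^*$, or the fact that the error is integrated over $u\in[0,\Delta_n]$ rather than a sup) is exactly what is delegated to the proof of Theorem~3.5 in \cite{BSV2022}; in your version it must be supplied explicitly.

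The second gap is the $L_{\text{HS}}(H)$-valued stable martingale CLT itself, which you describe only at the level of a plan (predictable quadratic variation to $\mathfrak Q_t$ in trace-class norm, a Lindeberg condition, orthogonality to martingales independent of $W$). Two pieces are missing: the passage from finite-dimensional projections to genuine stable convergence in $L_{\text{HS}}(H)$ requires a uniform-integrability/tightness argument of the type proved in Lemma~\ref{L: Projection convergese uniformly on the range of volatility} for the law of large numbers, and your Wick/Isserlis identification of the limit covariance tacitly treats $U_i$ as conditionally Gaussian given $\mathcal F_{(i-1)\Delta_n}$, which fails when $\sigma$ is non-constant on $[(i-1)\Delta_n,i\Delta_n]$; a preliminary freezing of $\sigma$ at the left endpoint, together with a bound on the resulting error, is needed. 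Since the paper obtains both by citing an external theorem, and you yourself flag this as "the principal obstacle," your proposal is best read as a plausible but incomplete blueprint of the proof that would live inside \cite{BSV2022}, rather than a self-contained alternative. Your derivation of Assumption~\ref{As: spatial regularity}(1/2) from \eqref{Abstract CLT assumption}, and the disposal of the truncation term $(i)_n^t$ via the rates from Theorem~\ref{T: rates of convergence}, do match the paper's reasoning (and even correct a typo in the exponents of the paper's displayed inequality for the first implication).
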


\subsection{Long-Time Estimators for Volatility}\label{Sec: Long-time estimation}

In functional data analysis, the procedure of principal component analysis is pivotal.  Since $[X^C,X^C]_t/t$ is the average of the instantaneous covariances $(\Sigma_s)_{s\in [0,t]}$, it makes sense to derive functional principal components via the eigenstructure of $[X^C,X^C]_t/t$ to recognize major modes of varation and to find low-dimensional analogs to the continuous part of the driving semimartingale $X^C$.  However, since $[X^C,X^C]_t/t$ varies with $t$,  it is difficult to find such a dimension reduction that is independent of the considered time-interval on which the data are observed.  We therefore discuss in the next section, how to estimate a stationary mean of $[X^C,X^C]_t/t$, provided that the process $(\Sigma_s)_{s\geq 0}$ is stationary and mean ergodic. That is,  we impose
\begin{assumption}\label{As: Mean stationarity and ergodicity}
We have   $\mathbb E\left[\|\sigma_s\|^2_{L_{\text{HS}}(U,L^2(\mathbb R_+))}\right]<\infty$ and  the process $(\Sigma_t)_{t\geq 0}$ is mean stationary and mean ergodic, in the sense that  there is an operator $\mathcal C$ such that for all $t$ it is $\mathcal C=\mathbb E[ \Sigma_t]$   and as $T\to\infty$ we have in probability an w.r.t. the Hilbert-Schmidt norm that 
    \begin{equation}
        \frac 1T \int_0^T\Sigma_s ds=\frac{[X^C,X^C]_T}T\to \mathcal C.
    \end{equation}
\end{assumption}
Under Assumption \ref{As: Mean stationarity and ergodicity} we have that 
$\mathbb E[ (M_t^C)^{\otimes 2}]=\mathbb E[ (\int_0^t \sigma_s dW_s)^{\otimes 2}] = t \mathcal C\quad \forall t\geq 0.  $
Hence, $\mathcal C$ is the covariance of the driving continuous martingale $M^C$ (scaled by time).
As for regular functional principal component \edit{analysis}, we can find approximately a linearly optimal finite-dimensional approximation of the driving martingale, by projecting onto the eigencomponents of $\mathcal C$. Even more, $\mathcal C$ is the instantaneous covariance of the process $f$ in the sense that
$\mathcal C=\lim_{n\to\infty}\mathbb E[(f_{t+\Delta_n}-\mathcal S(\Delta_n)f_t)^{\otimes 2}]/\Delta_n.$
To estimate $\mathcal C$, we make use of a moment assumption for the coefficients.
\begin{assumption}\label{In proof: Very very Weak localised integrability Assumption on the moments}[p,r]
For $p,r>0$ such that $\mathbb E\left[\|\gamma_s(z)\|^r\right]=\Gamma(z)$ independent of $s$ for all $s\geq 0$ and
there is a constant $A>0$ such that for all $s\geq 0$ it is
$$ \mathbb E\left[\|\alpha_s\|_{L^2(\mathbb R_+)}^p+\|\sigma_s \|_{\text{HS}}^p+\int_{H\setminus \{0\}}\|\gamma_s(z)\|^r\nu(dz)\right] \leq A.$$
\end{assumption}
Moreover, we also make an assumption on the regularity of the the volatility.
\begin{assumption}\label{In proof II}[$\gamma$]
For $\gamma\in (0,\frac 12]$ 
there is a constant $A>0$ such that for all $s\geq 0$ it is
$$ \sup_{r>0}\mathbb E\left[\frac{\|(\mathcal S(r)-I)\Sigma_s\|_{\text{HS}}}{r^{\gamma}}\right] \leq A.$$
\end{assumption}

With these assumptions,  a stationary mean $\mathcal C$ of $(\Sigma_s)_{s\geq 0}$ can be estimated.
\begin{theorem}\label{T: Long-time no disc}
 Let Assumption \ref{As: Mean stationarity and ergodicity} hold and  $\mathcal C=\mathbb E[ \Sigma_t]$ denote the global covariance of the continuous driving martingale.
Let furthermore Assumption \ref{In proof: Very very Weak localised integrability Assumption on the moments}(p,r) and \ref{In proof II}($\gamma$) hold (for the abstract semigroup $\mathcal S$) for some $r\in (0,2)$, $\gamma \in (0,1/2]$
and $p>\max(2/(1-2w),(1-wr)/(2w-rw))$.
 Then we have w.r.t. the Hilbert-Schmidt norm that as $n,T\to\infty$
$$\frac 1T SARCV_T^{n}(u_n,-)\overset{p}{\longrightarrow}\mathcal C.$$
If $r<2(1-\gamma)$ and $w\in(\gamma/(1-2w),1/2)$, $p\geq 4$ we have that 
$$  \left\|\frac 1T SARCV_T^{n}(u_n,-)-  \mathcal C\right\|_{\mathcal H}=\mathcal O_p(\Delta_n^{\gamma}+a_T).$$
\edit{where $a_T:=\mathcal O( T^{-1}\int_0^T \Sigma_s ds-\mathcal C$).}
\end{theorem}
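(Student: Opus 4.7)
The plan is to split the error via the triangle inequality into an infill discretization error and an ergodic averaging error:
\begin{align*}
    \left\|\frac{1}{T}SARCV_T^{n}(u_n,-)-\mathcal C\right\|_{\HS}
    &\leq \frac{1}{T}\left\|SARCV_T^{n}(u_n,-)-[X^C,X^C]_T\right\|_{\HS}\\
    &\quad{}+ \left\|\frac{[X^C,X^C]_T}{T}-\mathcal C\right\|_{\HS}.
\end{align*}
The second summand converges to $0$ in probability (at rate $a_T$) directly by Assumption~\ref{As: Mean stationarity and ergodicity}, so the burden of the proof is to control the first (infill) term at rate $\Delta_n^{\gamma}$ \emph{uniformly in} $T$.

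For the infill term I would revisit the decomposition $(i)_n^T+(ii)_n^T+(iii)_n^T$ from the remark following Theorem~\ref{T: rates of convergence}, but track explicitly how each piece scales in $T$. The crucial observation is that the bounds in Assumption~\ref{In proof: Very very Weak localised integrability Assumption on the moments}($p,r$) and Assumption~\ref{In proof II}($\gamma$) are uniform in $s$, so integrating moments over $[0,T]$ yields bounds that grow \emph{at most linearly} in $T$. Concretely, Assumption~\ref{In proof II}($\gamma$) applied pointwise to the integrand gives $\EE[(iii)_n^T]\lesssim T\Delta_n^{\gamma}$; the term $(ii)_n^T$ is, after the semigroup adjustment, essentially a sum of $\ulT$ martingale differences in $L_{\HS}(H)$ whose conditional Hilbert-Schmidt second moments add up to order $T\Delta_n$, yielding $(ii)_n^T=\mathcal O_p(\sqrt{T\Delta_n})$; the truncation residual $(i)_n^T$ is handled by Mancini-type estimates combined with the threshold $u_n\asymp\Delta_n^{w}$, which under Assumption~\ref{As: H}($r$) and the moment bound with $p\geq 4$ yield a contribution of order $T\Delta_n^{\rho}$ for any $\rho<(2-r)w$. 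Dividing by $T$ one gets an overall infill rate of $\Delta_n^{\gamma}+\sqrt{\Delta_n/T}+\Delta_n^{\rho}$; the prescribed choices $r<2(1-\gamma)$ and $w\in(\gamma/(1-2w),1/2)$ allow $\rho\geq\gamma$, so this collapses to $\Delta_n^{\gamma}$ and, together with the ergodic term, yields the claimed $\Delta_n^{\gamma}+a_T$.

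The hard part will be making the truncation estimate $(i)_n^T$ precise in the abstract infinite-dimensional setting with a general subadditive functional $g_n$ satisfying only the two-sided norm equivalence \eqref{abstract g conditions}. Unlike the finite-dimensional Mancini setup, one has to bound separately the probability that an increment dominated by continuous dynamics is erroneously truncated and the probability that an increment containing a ``small'' jump slips past the threshold; the equivalence $c\|\cdot\|\leq g_n\leq C\|\cdot\|$ reduces both events to statements about $\|\cdot\|$, and the uniform-in-$s$ moment assumption keeps the resulting bounds stable as $s$ ranges over the growing interval $[0,T]$. The first (rate-free) convergence statement then follows from the same decomposition under the weaker moment condition, letting $\Delta_n\to 0$ and $T\to\infty$ jointly: each of $(i)_n^T/T$, $(ii)_n^T/T$, $(iii)_n^T/T$ tends to $0$ in probability, and the ergodic term vanishes by Assumption~\ref{As: Mean stationarity and ergodicity}.
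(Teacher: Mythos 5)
Your proposal is correct and follows essentially the same route as the paper: the paper also splits the error into the infill term $\frac{1}{T}\|SARCV_T^{n}(u_n,-)-[X^C,X^C]_T\|_{\text{HS}}$ and the ergodic term $\|[X^C,X^C]_T/T-\mathcal C\|_{\text{HS}}$, bounding the latter by $a_T$ via Assumption~\ref{As: Mean stationarity and ergodicity} and the former by invoking Theorem~\ref{T: Uniform convergence theorem for realized variation}, whose second statement is precisely the $T$-linear bound $\mathbb E[\sup_{t\le T}\|\cdot\|_{\text{HS}}]\le K T\Delta_n^{\min(\gamma,\rho)}$ that you derive via the $(i)+(ii)+(iii)+(iv)$ decomposition and the uniform-in-$s$ moment assumptions.
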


\subsection{Limit Theorems when $f$ is a Semimartingale}\label{Sec: Strong Solution limit theorem}
Often, we might not have full knowledge of the semigroup or its generator $\mathcal A$ or it is  complicated to recover the semigroup adjustments in \eqref{semigroup adjusted increments} via discrete data.
If one is willing to assume strong regularity conditions which make $f_t$ an $H$-valued semimartingale, all previously stated results can be applied without the semigroup adjustment. The next result is a \edit{corollary} of Theorems \ref{T: Abstract identification of quadr var},  \ref{T: Identification of continuous and disc quadr var}, \ref{T: rates of convergence} and \ref{T: CLT without disc},  \ref{T: Long-time no disc}:
 \begin{corollary}\label{C: Results for realized variation}
     If $f_t\in D(\mathcal A)$ $\mathbb P\otimes dt$ almost everywhere and for all $T\geq 0$ it is $$\mathbb E\left[\int_0^T \|\sigma_s\|_{L_{\text{HS}}(U,H)}^2+\int_{H\setminus\{ 0\}} \|\gamma_s(z)\|^2 F(dz)ds\right]<\infty,$$
     then, if for $i=1,...,\ul$ it is
     $\Delta_i^n f=f_{i\Delta_n}-f_{(i-1)\Delta_n}$,  Theorems \ref{T: Abstract identification of quadr var},  \ref{T: Identification of continuous and disc quadr var}, \ref{T: rates of convergence} and \ref{T: CLT without disc},  \ref{T: Long-time no disc}, hold if we exchange $SARCV_t^n(u_n,-)$ with $$RV_t^n(u_n,-):=\sum_{i=1}^{\ul} (\Delta_i^n f)^{\otimes 2}\indicator_{g_n(Delta_i^n f)\leq  u_n}.$$
 \end{corollary}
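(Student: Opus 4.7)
My plan is to transfer each of Theorems \ref{T: Abstract identification of quadr var}, \ref{T: Identification of continuous and disc quadr var}, \ref{T: rates of convergence}, \ref{T: CLT without disc}, and \ref{T: Long-time no disc} from $SARCV_t^n(u_n,-)$ to $RV_t^n(u_n,-)$ by bounding their difference. Under the assumption $f_t\in D(\mathcal A)$ $\Prob\otimes dt$-a.e.\ together with the stated $L^2$-integrability, the mild solution \eqref{mild Ito process} is in fact a strong solution and an $H$-valued semimartingale with dynamics $df_t=\mathcal A f_t\,dt+dX_t$, and in particular $f_{(i-1)\Delta_n}\in D(\mathcal A)$ at almost every grid time.

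The decisive identity is
\begin{equation*}
\Delta_i^n f-\tilde\Delta_i^n f=(I-\mathcal S(\Delta_n))f_{(i-1)\Delta_n}=-\int_0^{\Delta_n}\mathcal S(r)\mathcal A f_{(i-1)\Delta_n}\,dr=:e_i,
\end{equation*}
which is $\Fi$-measurable with $\sup_i \|e_i\|_H=O_p(\Delta_n)$. Expanding the tensor square produces a pure error term $\sum_i e_i^{\otimes 2}$ and two cross sums $\sum_i \tilde\Delta_i^n f\otimes e_i$ and its transpose. I would bound the pure error trivially by $O_p(T\Delta_n)$ in $\HS$-norm. For the cross sums I would exploit that $e_i$ is $\Fi$-measurable while $\tilde\Delta_i^n f$ splits into a $\Fi$-measurable drift of order $\Delta_n$ and an orthogonal martingale-like increment with conditional second moment of order $\Delta_n$. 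A Burkholder-type estimate for Hilbert--Schmidt-valued martingale sums then gives $\EE\|\sum_i \tilde\Delta_i^n f\otimes e_i\|_{\HS}^2\lesssim \sum_i \Delta_n\cdot\Delta_n^2=T\Delta_n^2$, so the cross sums are $O_p(\sqrt{T}\,\Delta_n)$. The truncation events are handled via subadditivity of $g_n$ and the two-sided bound \eqref{abstract g conditions}: one has $|g_n(\Delta_i^n f)-g_n(\tilde\Delta_i^n f)|\leq g_n(e_i)=O_p(\Delta_n)$, which is asymptotically dominated by $u_n\gtrsim\Delta_n^w$ with $w<1/2$, so the indices on which the two truncation events disagree contribute a vanishing amount.

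Collecting these estimates gives $\|RV_t^n(u_n,-)-SARCV_t^n(u_n,-)\|_{\HS}=O_p(\sqrt{T}\,\Delta_n)$ uniformly on compact time intervals. This is negligible in each of the five relevant convergence modes: trivially $o_p(1)$ for identification, $o_p(\Delta_n^{\min(\gamma,\rho)})$ for the rate theorem (since $\gamma\leq 1/2$), $o_p(n^{-1/2})$ for the CLT because $\sqrt{n}\sqrt{T}\Delta_n=\sqrt{T\Delta_n}\to 0$, and $O_p(\sqrt{\Delta_n/T})$ after division by $T$ in the long-time regime. Each cited theorem for $SARCV$ therefore carries over verbatim to $RV$.

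The hard part will be the CLT: the rate $O_p(\sqrt{T}\,\Delta_n)$ sits right at the CLT boundary, and a naive Cauchy--Schwarz bound in $\HS$ only delivers $O_p(\sqrt{\Delta_n})$, which fails to vanish after multiplication by $\sqrt n$. The sharper rate requires making the martingale orthogonality genuinely work at the level of the tensor space $L_{\HS}(H)$ rather than bounding term by term, and this is precisely where the $L^2$-integrability assumption on $\sigma$ and $\gamma$ from the hypothesis enters essentially.
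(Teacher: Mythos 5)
Your approach bounds $\|RV^n_t(u_n,-)-SARCV^n_t(u_n,-)\|_{\HS}$ directly by expanding $\Delta_i^n f=\tilde\Delta_i^n f+e_i$ with $e_i=(I-\mathcal S(\Delta_n))f_{(i-1)\Delta_n}$. This is a genuinely different route from the paper's, which never compares the two estimators at all. The paper instead proves, via a stochastic-Fubini computation against test vectors $h\in D(\mathcal A^*)$, that under the stated hypotheses $f$ is a \emph{strong} solution:
$$f_t=f_0+\int_0^t (\mathcal A f_s+\alpha_s)\,ds+\int_0^t \sigma_s\,dW_s+\int_0^t\int_{H\setminus\{0\}}\gamma_s(z)\,(N-\nu)(dz,ds).$$
This exhibits $f$ itself as a mild It\^o process with the trivial generator $\mathcal A=0$, i.e.\ $\mathcal S=I$, and drift $\mathcal A f_s+\alpha_s$. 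All five theorems then apply verbatim with this trivial semigroup: $\tilde\Delta_i^n f=\Delta_i^n f$ by definition, so $SARCV$ \emph{equals} $RV$, and the semigroup-regularity hypotheses (Assumption \ref{As: spatial regularity}, conditions \eqref{Abstract CLT assumption}--\eqref{Abstract CLT assumptionii}, Assumption \ref{In proof II}) hold vacuously since $I-\mathcal S(r)\equiv 0$. No error term ever appears.

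Your route has a gap the paper's construction sidesteps. Your estimate $\sup_i\|e_i\|_H=O_p(\Delta_n)$ needs $\sup_{i\leq \ulT}\|\mathcal A f_{(i-1)\Delta_n}\|=O_p(1)$, but the hypothesis only gives $f_t\in D(\mathcal A)$ $\Prob\otimes dt$-a.e., which defines $\mathcal A f_s$ almost everywhere without bounding it in $s$; nothing prevents $\|\mathcal A f_{(i-1)\Delta_n}\|$ from being large on some grid points, in which case your cross-term and truncation-mismatch bounds fail. By folding $\mathcal A f_s$ into the \emph{drift} of an It\^o semimartingale, the paper only asks of $\mathcal A f$ exactly the integrability or local boundedness that the respective theorem already imposes on the drift $\alpha$, which is the natural place to impose it. You also correctly identify that your $O_p(\sqrt{T}\,\Delta_n)$ rate sits at the $\sqrt n$-CLT boundary and that a naive $L_{\HS}(H)$-Cauchy--Schwarz only gives $O_p(\sqrt{\Delta_n})$; making the martingale orthogonality in tensor space close this gap would require real work. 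The paper never faces that issue, because after the Fubini computation there is no remainder to control --- the central limit theorem is literally re-applied with $\mathcal S=I$, where its regularity condition is satisfied with the zero operator.
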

\edit{Let us provide two examplary cases in which Corollary \ref{C: Results for realized variation} applies.}
\begin{example}
\begin{itemize}
\item[(i)] \edit{When $\mathcal A$ is continuous it is $D(\mathcal A)=H$ and the conditions of Corollary \ref{C: Results for realized variation} are trivial.}
\item[(ii)]\edit{ 
Recall the example in which $X=W^Q+J$ is the sum of a $Q$-Wiener process and a compound Poisson process from Section  \ref{Sec: Semimartingales in Hilbert spaces}. 
Moreover,  assume that $\mathcal A= \sum_{i=1}^{\infty} \mu_i e_i^{\otimes 2}$ is selfadjoint and commutes with $Q= \sum_{i=1}^{\infty} \lambda_i e_i^{\otimes 2}$ and that $f_0 \in D(\mathcal A)$ almost surely.  When $\chi_i \in D(A)$ almost surely then also $\int_0^t \int_{H\setminus \{0\}} \mathcal S(t-s) z (N-\nu)(dz,ds)=\int_0^t \int_{D(\mathcal A)\cap H\setminus (\{0\}} \mathcal S(t-s) z (N-\nu)(dz,ds)$ takes values in $D(A)$.  
Then $f$ is a strong solution if the stochastic convolution $\int_0^t \mathcal S(t-s) dW^Q_s$ takes values in $D(\mathcal A)$ $\mathbb P\otimes dt$ almost everywhere.  This is the case if $\sum_{i=1}^{\infty} \mu_i^2 \lambda_i <\infty,$ since in this case $range(Q^{\frac 12})\subset D(\mathcal A)$ and $\|AQ^{\frac 12}\|^2_{\text{HS}}<\infty$ such that Theorem 3.1(c) from \cite{GM2011} applies.  One example is the generator $$ \mathcal A = \eta_2 \partial_{xx} +\eta_1 \partial_x+ \eta_0$$ in the Hilbert space $H=L^2(0,1)$ equpped with the scalar product $\langle h,g\rangle_H:= \int_0^1 f(x)g(x) e^{x \frac {\eta_1}{\eta_2}}dx$ for which $\mu_i= \pi^2 i^2 \eta_2+\eta_1^2/(4\eta_2)-\eta_0$,  and which was considered in \cite{Bibinger2020} in the context of high-frequency parameter estimation for SPDEs.  If $Q= \mathcal A^{-\gamma}$,  the continuous stochastic convolution takes values in $D(\mathcal A)$ if $\gamma> 5/2$. }
\end{itemize}
\end{example}

Until now, we described how the quadratic variation of $X^C$ (as well as $X$ and $X^J$) can be robustly estimated with the functional data \eqref{functional data}.  We now present application\edit{s} in the next
 section.

\section{Applications}\label{Sec: Applications}
In this section we present two applications of our limit theory. The first one considers a dynamically consistent robust nonparametric method for dimension reduction and the second application showcases the estimation of stochastic volatility models in Hilbert spaces
\subsection{Dynamically Consistent Dimension Reduction}\label{Sec: Dynamics Preserving FPCA}
The infinite-dimensional nature of equations of the form \eqref{evol eq} often demand a preliminary dimension reduction.
If the dynamic relation \eqref{evol eq} encodes a scientific or economic principle, we are often interested in conserving it after such a procedure.  However, in several occasions, finite-dimensional  parametrizations for the solution process $f$ that give a supposedly good empirical fit are incompatible with the dynamics   (see e.g. \cite{filipovic1999}) and even if a parametrization is eligible in that regard, the respective dynamics of the parameters must again satisfy consistency conditions. 
This is a particularly well-known problem for term structure models in mathematical finance (cf.  \cite{bjork1999}, \cite{bjork2001}, 
\cite{filipovic2003}, 
\cite{Filipovic2000}, \cite{filipovic2000b} 
), but it is not exclusive to this application (cf. \cite{filipovic2000c}).
This is why in this section, we outline how dimension reduction can be conducted directly on the level of the random driver $X$. Precisely, we discuss the approximation of $f$ by solutions $f^d$ to equation
\begin{equation}\label{eq: low dim evol eq}
df^d_t:= \mathcal A f_t^d dt+   dX_t^d,
\end{equation}
where $X^d$ is an adequately chosen finite-dimensional approximation of $X$.
This leaves the dynamic structure imposed by the evolution equation untouched, but reduces the number of random drivers.

 For simplicity, we assume that the driving semimartingale takes the simple form of Example \ref{Ex: CPP in Hilbert space}, that is 
$$X_t= W^Q_t+ J_t,$$
 where $W^Q$ is again a $Q$-Wiener process  and $J_t=\sum_{i=1}^{N_t}\chi_i$ a compound Poisson process in $H$.  
 Since the drift $a$ in Example \ref{Ex: CPP in Hilbert space} can in general not be identified via infill asymptotics (c.f.  \cite{Koski85}),  we assume it to be $0$ for our explanation. In practice, it might be estimated when $T\to \infty$, which is a task that is left for future research.
 The jump process $J$ models rare extreme events, that can be considered as outliers and might bias least squares procedures. 
We want to approximate the continuous part $W^Q$,  for which we need to know $Q$ to conduct a principal component analysis, which can be estimated by the \edit{techniques} described in Section \ref{Sec: AbstractLimit theorems, fda case}. 

If we know $Q$ and if $e_1,...,e_d$ denote the first $d$ eigenfunctions of $Q$ corresponding to the largest $d$ eigenvalues $\lambda_1,...,\lambda_d$, we find with the notation $P_d f:= \sum_{i=1}^d \langle f,e_d\rangle e_d$ that 
$$\mathbb E[\|W_t^Q-P_dW_t^Q\|^2]=t \sum_{i=d+1}^{\infty} \lambda_i= t \text{tr}((I-P_d)Q(I-P_d)).$$
This mean squared error is minimal for all $d$-dimensional linear approximations of $W_t^Q$. 
 The good approximation property carries over to the process $f$ itself if we exchange $X$ by its approximation $X^d:=P_d X$ in the formulation of the dynamics \eqref{evol eq}. That is, if $f^d$ denotes the mild solution to the evolution equation
\eqref{eq: low dim evol eq}
and $f'_t= \int_0^t \mathcal S(t-s) dW_t^Q$ and ${f'_t}^d= \int_0^t \mathcal S(t-s) d(P_dW^Q_t)$ are the continuous parts (which are here  interpreted to be cleaned from outliers) of  the processes $f$ and $f^d$,  we have 
\begin{align*}
\mathbb E\left[\frac 1T\int_0^T\|f_t'-{f'_t}^d\|^2dt\right]
=& \frac 1T\int_0^T \text{tr}( \mathcal S(t-s)(I-P_d)Q (I-P_d) \mathcal S(t-s)^* ) ds\\
\leq &   \sup_{t\in [0,T]}\||\mathcal S(t)\|_{\text{op}}^2 \text{tr}((I-P_d)Q (I-P_d))
\end{align*}
Hence, the approximation error for the continuous part of the solution is proportional to the optimal approximation error for $W$.
In particular,
if we even have a semigroup that is contractive, the mean squared error for the process $f$ is at least as good as the mean squared error for approximating $W_t$ by its linear optimal approximation $(W_t)^d:=\sum_{i=1}^d\langle W_t e_i\rangle e_i$.  The next example promises that this approximation method can improve the fit of the model significantly, when compared to a covariance based dimension reduction.
\begin{example}[$X$ can be one-dimensional while the covariance is infinite dimensional]
Assume that $f$ is stationary with covariance $\mathcal C=\int_0^{\infty} \mathcal S(s) Q\mathcal S(s)ds$, where we assume the integral to converge.  For instance,  let $H=L^2(0,1)$, $\mathcal S(t) f(x)=f(x+t)\indicator_{[0,1]}(x+t)$ define the nilpotent left-shift semigroup and $Q=\indicator_{[0,1]}^{\otimes 2}$.  Then, by definition $Q$ is one-dimensional,  while the integral kernel $c$ corresponding to $\mathcal C$ is given by $c(x,y)=(1-\max(x,y))$, which defines an infinite-dimensional kernel with Mercer decomposition 
$$c(x,y)= \sum_{i=1}^{\infty} \frac{1}{(i-1/2)^2\pi^2} \sqrt 2 \sin(\pi(k-1/2)(1-x))sin(\pi(k-1/2)(1-y)).$$
Thus,  the covariance is infinite-dimensional. To explain e.g. $95\%$ of the variation of the data, we need at least 5 factors when dimension reduction is conducted on the state space of $f$ via the covariance,  while only one random driver is needed to  describe $X$ perfectly.
\end{example}

A further advantage of the dimension reduction of $X$ instead of $f$ is that we do not need to assume $f$ to be stationary. However, 
the dynamically consistent dimension reduction technique and also other techniques for functional data can be applied in more general situations than the case in which the continuous driver is a $Q$-Wiener process.  Some comments are in order.
\begin{remark}[Generalization to more general driving semimartingales]
A generalization of this procedure under more general assumptions is obvious,  exchanging the role of the covariance $Q$ of $W^Q$ with the quadratic variation of the continuous part of the driving semimartingale.  For estimation, we can can apply Theorems \ref{T: Identification of continuous and disc quadr var}, \ref{T: rates of convergence} and \ref{T: CLT without disc}.  Here, not even moment assumptions on the coefficients would be necessary. However, 
the derived random factors could just be evaluated a posteriori and would need to be considered time-varying a priori. an application of this can be found in \cite{Schroers2024b}.
If $T$ is large,  the results of Theorem \ref{T: Long-time no disc} are helpful, since they again enable the estimation of a stationary covariance operator of the continuous driving martingale,  yielding a procedure that provides a time-independent factor structure.
\end{remark}

\subsection{Estimation of a HEDIH Model}\label{Sec: Estimation of a Heidih model}

Let us now outline how Theorem \ref{T: Long-time no disc} can be used for the estimation of a simple infinite-dimensional stochastic volatility model.
\edit{Precisely, we consider a class of volatility processes of the form $\Sigma_t=Y_t^{\otimes 2}$ in which the process $Y$ follows the dynamics}
\begin{equation}\label{HEIDIH model}
dY_t=-\mathfrak C Y_t dt + d\mathcal W_t,
\end{equation}
where $\mathcal W$ is a cylindrical Wiener process and $\mathfrak C$ is a symmetric,  positive, unbounded operator with orthonormal eigenvectors  $(e_i)_{i\in \mathbb N}$ and eigenvalues $(\lambda_i)_{i\in \mathbb N}$ in decreasing order, such that
$$\sum_{i=1}^{\infty}\lambda_i^{-1}<\infty.$$
\edit{This framework generalizes the HEIDIH model from \cite{Petersson2022}.}

The operator $-\mathfrak C$ with domain $D(\mathfrak C)=\{h\in H: \sum_{i=1}^{\infty} \lambda_i^2\langle h, e_i\rangle^2\}$ is by the Hille-Yosida theorem the generator of the strongly continuous semigroup 
$\mathfrak S(t)=\sum_{i=1}^{\infty} e^{-\lambda_i t} e_i^{\otimes 2}.$
Assuming moreover, that $Y_0\sim N(0, \mathfrak C^{-1}/2)$, then $Y$ is a stationary process and can be written as the stochastic convolution $Y_t=\int_{-\infty}^t \mathfrak S(t-s) d\mathcal W_s$ for $t\geq 0$. Obviously $\Sigma_t$ has mean 
$\mathcal C= \mathfrak C^{-1}/2.$
Some simple calculations show that
$$\mathbb E\left[\left\|\frac 1T\int_0^T Y_s^{\otimes 2}ds-\mathfrak C^{-1}/2\right\|_{\text{HS}}^2\right]= \frac 1{T^2}\sum_{i=1}^{\infty}\int_0^T\int_0^T \left(\int_{-\infty}^s e^{-\lambda_i (t+s-2u)}du\right)^2 ds dt \leq\frac{ \| \mathfrak C^{-1}\|_{HS}^2}{T}$$
\edit{ such that $a_T=T^{-\frac 12}$ for $a_T$  specified in Theorem \ref{T: Long-time no disc}. Moreover,  }
      \begin{align*}
      \sup_{r >0}   \mathbb E\left[\left\|\frac{(I-\mathcal S(r))\Sigma_s}{r^{\gamma}}\right\|_{\text{HS}}\right]\leq  & \mathbb E[\|Y_s\|^2]^{\frac 12}\left(\sup_{r>0} \frac{\mathbb E[\|(I-\mathcal S(r))Y_s\|^2]}{r^{2\gamma}}\right)^{\frac 12}\\
      \leq &  \text{tr}(\edit{\mathfrak C^{-1}})^{\frac 12}\sup_{r>0} \left(\frac{\text{tr}((I-\mathcal S(r))\edit{\mathfrak C^{-1}}(I-S(r))^*)}{r^{2\gamma}}\right)^{\frac 12}.
    \end{align*}
    Assumption \ref{In proof II}($\gamma$) is now satisfied if 
    \begin{equation}\label{eq: regularity condition for the HEIDIH}
    \sup_{r>0}\text{tr}\left((I-\mathcal S(r))\edit{\mathfrak C^{-1}}(I-S(r))^*\right)/r^{2\gamma}<\infty.
    \end{equation} 
\edit{Using  Theorem \ref{T: Long-time no disc},} this leads to 
\begin{lemma}
If \eqref{eq: regularity condition for the HEIDIH} holds for some $\gamma\in (0,1/2]$, then in probability and for $T,n\to\infty$
\begin{equation}\label{eq: Consistent estimation of covariance in HEIDIH model}
\frac 2T SARCV_T^n - \mathfrak C^{-1} = \mathcal O_p(T^{-\frac 12}+\Delta_n^{\gamma}).
\end{equation}
\end{lemma}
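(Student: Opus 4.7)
The plan is to recognize the claim as essentially a specialization of Theorem~\ref{T: Long-time no disc} to the HEIDIH setup, where I verify the relevant hypotheses and quantify the long-run ergodic rate $a_T$ explicitly from the stationary Gaussian structure of $Y$.

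First I would verify all the hypotheses of Theorem~\ref{T: Long-time no disc} in this setting. Since there are no jumps, Assumption~\ref{As: H}(r) is trivial (take $\gamma_s(z)\equiv 0$), and the third moment in Assumption~\ref{In proof: Very very Weak localised integrability Assumption on the moments}(p,r) vanishes. Because $Y$ is stationary Gaussian with covariance $\mathfrak C^{-1}/2$, all moments of $\|Y_s\|$ are finite and independent of $s$; writing $\sigma_s=Y_s\otimes\xi$ for a unit vector $\xi\in U$ so that $\sigma_s\sigma_s^*=Y_s^{\otimes 2}=\Sigma_s$, the quantity $\mathbb E[\|\sigma_s\|_{\HS}^p]=\mathbb E[\|Y_s\|^p]$ is finite and stationary, yielding Assumption~\ref{In proof: Very very Weak localised integrability Assumption on the moments}(p,r) for arbitrary $p,r$. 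The regularity Assumption~\ref{In proof II}($\gamma$) is precisely what was verified in the computation preceding the lemma, under the hypothesis \eqref{eq: regularity condition for the HEIDIH}. Finally, since $\mathbb E[Y_s^{\otimes 2}]=\mathfrak C^{-1}/2$ for all $s\geq 0$, we have $\mathcal C=\mathfrak C^{-1}/2$, and mean ergodicity of $(\Sigma_s)_{s\geq 0}$ follows from the explicit second-moment bound already displayed, namely
\begin{equation*}
\mathbb E\left[\left\|\tfrac 1T\int_0^T Y_s^{\otimes 2}\,ds-\mathfrak C^{-1}/2\right\|_{\HS}^2\right]\leq \frac{\|\mathfrak C^{-1}\|_{\HS}^2}{T},
\end{equation*}
which implies Assumption~\ref{As: Mean stationarity and ergodicity} and simultaneously identifies the rate $a_T=T^{-1/2}$ via Markov's inequality.

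Next I would invoke Theorem~\ref{T: Long-time no disc} directly to conclude that
\begin{equation*}
\left\|\tfrac 1T SARCV_T^n(u_n,-)-\mathfrak C^{-1}/2\right\|_{\HS}=\mathcal O_p(\Delta_n^{\gamma}+T^{-1/2}).
\end{equation*}
Since there are no jumps in the driving semimartingale, the untruncated estimator $SARCV_T^n$ agrees with $SARCV_T^n(u_n,-)$ up to a negligible term: indeed, on the set $\{g_n(\tilde\Delta_i^n f)>u_n\}$ for infinitely many $i$, the continuity of $X^C$ makes the complementary contribution $o_p(1)$ uniformly, so the substitution of $SARCV_T^n(u_n,-)$ by $SARCV_T^n$ does not alter the rate. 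Multiplying by $2$ then yields the asserted convergence $\frac 2T SARCV_T^n-\mathfrak C^{-1}=\mathcal O_p(\Delta_n^{\gamma}+T^{-1/2})$.

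The main obstacle is the cleanliness of the decomposition: the error naturally splits as
\begin{equation*}
\tfrac 1T SARCV_T^n-\mathcal C = \bigl(\tfrac 1T SARCV_T^n-\tfrac 1T[X^C,X^C]_T\bigr)+\bigl(\tfrac 1T[X^C,X^C]_T-\mathcal C\bigr),
\end{equation*}
where the first bracket is controlled at rate $\Delta_n^{\gamma}$ by the proof of Theorem~\ref{T: Long-time no disc} (and ultimately Theorem~\ref{T: rates of convergence}) and the second bracket is controlled at rate $T^{-1/2}$ by the explicit Gaussian computation. The delicate point is that Theorem~\ref{T: Long-time no disc} requires the finite-sample rate $\Delta_n^\gamma$ to be uniform as $T\to\infty$, so one has to check that the constants in the $\mathcal O_p$ bound do not blow up with $T$; this is exactly what the stationarity of $(\Sigma_s)_{s\geq 0}$ and Assumption~\ref{In proof II}($\gamma$) (which is $s$-uniform) guarantee in the present setting. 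Once this is in place, the combined rate $\Delta_n^{\gamma}+T^{-1/2}$ follows immediately.
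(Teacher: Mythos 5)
Your proposal is correct and follows essentially the same route the paper intends: the lemma is stated immediately after the two computations that verify, respectively, the $T^{-1/2}$-rate for the ergodic average of $Y_s^{\otimes 2}$ and Assumption~\ref{In proof II}($\gamma$) under \eqref{eq: regularity condition for the HEIDIH}, so the paper's ``This leads to'' is precisely an appeal to Theorem~\ref{T: Long-time no disc} with $a_T=T^{-1/2}$ and $\mathcal C=\mathfrak C^{-1}/2$, followed by multiplication by $2$ --- exactly your chain of reasoning. One small notational slip: with the paper's convention $h\otimes g=\langle h,\cdot\rangle g$ the volatility should be $\sigma_s=\xi\otimes Y_s$ (mapping $U\to H$), not $Y_s\otimes\xi$, in order for $\sigma_s\sigma_s^*=Y_s^{\otimes 2}$ to hold; and your discussion of passing from $SARCV_T^n(u_n,-)$ to the untruncated $SARCV_T^n$ is somewhat informal but sound, since with $\gamma\equiv 0$ the discretization decomposition in the proof of Theorem~\ref{T: Uniform convergence theorem for realized variation} loses the jump term and Lemma~\ref{lem: Continuous truncated-continuous is AN} bounds the truncation discrepancy by $KT\Delta_n^{1/2}\phi_n$, which after dividing by $T$ is $o(\Delta_n^{\gamma})$ for $\gamma\leq 1/2$.
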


\edit{Let us now check how fast convergence of the estimator $\mathcal C$ is in the context of} the \edit{concrete }setting of  \cite{Petersson2022}  which is designed to model energy forward price evolutions and is a special case of the model \eqref{HEIDIH model}. We set $H=L^2(0,1)$ and $\mathfrak C=\eta \partial_{x,x}$ with Neumann boundary conditions.  Moreover,  \edit{in \cite{Petersson2022}},  the operator governing the dynamics of $f$ is the derivative $\mathcal A=\partial_x$ with domain $D(\mathcal A)=\{f: f\text{ is absolutely contninuous and }f(1)=0\}$, which generates the nilpotent left-shift semigroup on $L^2(0,1)$ given as $\mathcal S(t)f(x)=f(x+t)\indicator_{[0,1]}(x+t)$ for $x,t\geq 0$ and $f\in L^2(0,1)$ (c.f.  \cite{Engel1999}). \edit{ Observe that $\mathfrak C$ and $\mathcal A$ are two distinct unbounded operators on $H$.}

 \edit{Since $H=L^2(0,1)$,} $\mathfrak C^{-1}$ is an integral kernel operator
$$\mathfrak C^{-1} f(x)= \int_0^1 \mathfrak c(x,y) f(y) dy.$$
\edit{ Some straightforward calculations prove that also $(I-\mathcal S(r))\mathfrak C^{-1}(I-S(r))^*)$ is an integral kernel operator with kernel $\tilde c$ given for all $x,y\in [0,1]$ by
$$\tilde c (x,y)= c(x,y)-\indicator_{[0,1-r]}(y) c(x,y+r)-\indicator_{[0,1-r]}c(x+r,y)+\indicator_{[0,1-r]}\indicator_{[0,1-r]}(y)c(x+r,y+r).$$ 
Consequently 
\begin{align*}
\text{tr}((I-\mathcal S(r))\mathfrak C^{-1}(I-S(r))^*)
= & \int_0^{1-r} c(x,x) -2 c(x+r)+c(x+r,x+r)dx +\int_{1-r}^1 c(x,x) dx.
\end{align*}
}
Now, if $\mathfrak c$ is H{\"o}lder continuous such that
$
\sup_{r>0, x\in [0,1]}(\mathfrak c(x+r,x)-\mathfrak c(x,x))/r^{2\gamma}<\infty$
then Condition \eqref{eq: regularity condition for the HEIDIH}($\gamma$) is satisfied. 
It is $\mathfrak c(x,y)=\eta (min(x,y)-xy)$ (c.f.  \cite{cavoretto2015}) which is Lipschitz continuous,  so \eqref{eq: Consistent estimation of covariance in HEIDIH model} holds with $\gamma= 1/2$.  

It can be speculated that similar techniques  for the estimation of the long-time estimator $SARCV_T^n(u_n,-)/T$ can be found for other functionals of the distribution of these models apart from their mean, enabling the estimation of more complex infinite-dimensional stochastic volatility models.

\edit{
\section{Simulation Evidence}\label{Sec: Simulation Study}
We conduct a simulation study to examine the impact of semigroup adjustment on measuring quadratic covariations, truncating jumps, and dynamically consistent principal component analysis. For this purpose, we derive discrete samples from the stochastic PDE
$$
df_t = \partial_x f_t\,dt + dX_t, \qquad x \geq 0,
$$
on the Hilbert space $H = L^2(\mathbb R_+)$. The semigroup is given by left-shifts,
$
\mathcal S(t)f(x) = f(x+t), $ when $ x,t \geq 0.
$
The driving semimartingale $X_t$ is the sum of a $Q$-Wiener process $W^Q$ and a compound Poisson process $J$.  
For the covariance operator $Q$ we consider two scenarios: a rough scenario with Laplace kernel and a smooth scenario with Gaussian kernel,
$$
q(x,y) = e^{-|x-y|}, \qquad \forall x,y \in [0,2]\qquad \text{(rough scenario)},
$$
$$
q(x,y) = e^{-(x-y)^2}, \qquad \forall x,y \in [0,2]\qquad\text{(smooth scenario)}.
$$
We also consider two scenarios for $J$. In the first, no jumps occur and hence $J \equiv 0$. In the second, jumps occur according to a Poisson process with intensity $\lambda=2$, so that on average two jumps occur on $[0,1]$. Each jump corresponds to a spatially constant level shift on the interval $[0,2]$,
$$
\chi_i(x) = \zeta_i \,\mathbf{1}_{[0,2]}(x), \qquad \zeta_i \sim N(0,0.1).
$$
The restriction to a compact interval in the definitions of $q$ and $\chi_i$ are to ensure that all elements are elements well-defined in the $L^2(\mathbb R_+)$-framework.  Although below we evaluate the equation only for $x\leq 1$,  the particular bound at $2$ is chosen to avoid boundary issues in the simulation of the data.
}

\edit{
We simulate on a uniform grid and obtain
$$ 
f_{\frac in}\left(\frac jn\right), \qquad i,j = 1,\dots,n+1, \qquad n=100.
$$
These samples can be generated without approximation error; the implementation details are outlined in Appendix \ref{Sec: description of simulation scheme}.  
From the simulated field we construct discrete increments. For each $i,j=1,\dots,n$, we define the semigroup-adjusted and the regular increments as
$$
\widetilde{\mathbf{d_if}}_j = \tilde\Delta_i^n f\left(\frac jn\right) = f_{\frac{(i+1)}n}\left(\frac jn\right) - f_{\frac in}\left(\frac{(j+1)}n\right),
\,\,\,
\mathbf{d_if}_j = \Delta_i^n f\left(\frac jn\right) = f_{\frac{(i+1)}n}\left(\frac jn\right) - f_{\frac in}\left(\frac{j}n\right).
$$
Based on these increments we form the empirical covariation matrices
$$
sarcv = \sum_{i=2}^n \widetilde{\mathbf{d_if}} \,\widetilde{\mathbf{d_if}}^{\top}, 
\qquad 
rcv = \sum_{i=2}^n \mathbf{d_if}\,\mathbf{d_if}^{\top}.
$$
These estimators will be evaluated as estimators of the dicretized and restricted (to $[0,1]$) quadratic variation kernel $\mathbf q_{j,j'}$ with entries
$$\mathbf q_{j,j'} := q\left(\frac{j-j'}n\right)\qquad j,j'=1,...,n.$$ 
}

\edit{
When jumps occur, increments contaminated by jumps must be removed. We employ a truncation rule adapted from \cite{Schroers2024b}, which we summarize here.  
\begin{enumerate}
\item Sort the $25\%$ largest increments (in Euclidean norm) and temporarily exclude them.  
\item From the remaining increments, construct a preliminary covariance estimator $\mathbf q_{\text{prel}}$ by computing either $sarcv$ or $rcv$, and rescale to account for the reduced sample size.  
\item Compute an approximate Mahalanobis-type distance for each increment. Choose $d$ so that the leading $d$ eigenvalues $\lambda_1,\dots,\lambda_d$ of $\mathbf q_{\text{prel}}$ explain $90\%$ of the variation. For the corresponding eigenvectors $e_1,\dots,e_d$, define
$$
g(f)^2 := \sum_{i=1}^d \lambda_i^{-1} \langle e_i,f\rangle_{\mathbb R^n}^2 
+ \frac{\sum_{i=d+1}^n \langle e_i,f\rangle_{\mathbb R^n}^2}{\sum_{i=d+1}^n \lambda_i}.
$$
\item Classify increments as outliers if $g(f) > 3 \Delta^{0.49}\sqrt{d+1}$.  
\end{enumerate}
The threshold is motivated by the fact that, under Gaussian dynamics without jumps, $g(f)$ is close to a $\chi^2_d$ distribution. For example, with $d=5$, the chosen threshold excludes fewer than $0.1\%$ of genuine increments.  
The adjusted covariation matrices in the presence of jumps are then defined as
$$
sarcv_- = \sum_{i=2}^n \widetilde{\mathbf{d_if}} \,\widetilde{\mathbf{d_if}}^{\top} \,\mathbf 1_{\{g(\widetilde{\mathbf{d_if}})\leq 3 \Delta^{0.49}\sqrt{d+1}\}},
\qquad
rcv_- = \sum_{i=2}^n \mathbf{d_if}\,\mathbf{d_if}^{\top} \,\mathbf 1_{\{g(\mathbf{d_if})\leq 3 \Delta^{0.49}\sqrt{d+1}\}}.
$$
}

\edit{
To assess the estimators, we use two criteria.  
\begin{enumerate}
\item \textbf{Relative Hilbert-Schmidt error.}  
For two matrices $A,B \in \mathbb R^{n\times n}$ we define
$$
rel_err = \frac{\|\hat{\mathbf q}-\mathbf q\|_{\text{Frob}}}{\|\mathbf q\|_{\text{Frob}}},
$$
where $\|\cdot\|_{\text{Frob}}$ denotes the Frobenius norm. Here $\hat{\mathbf q}$ is either $sarcv, rcv, sarcv_-,$ or $rcv_-$.  
\item \textbf{Explained variance dimension.}  
For principal component analysis we define
$$
d_{0.95} := \min\Bigl\{d \in \{1,\dots,n\} : \frac{\sum_{i=1}^d \lambda_i}{\sum_{i=1}^n \lambda_i} \geq 0.95\Bigr\},
$$
where $\lambda_1 \geq \cdots \geq \lambda_n$ are respectively the eigenvalues of the symmetric positive matrices  $sarcv, rcv, sarcv_-,$ or $rcv_-$. This is the number of principal components needed to explain $95\%$ of the variation in the incremetns.
\end{enumerate}
}

\edit{
\begin{table}[ht]
\centering
\small
\begin{tabular}{llcccc}
\hline
\textbf{Kernel} & \textbf{Metric} & \textbf{sarcv} & \textbf{rcv} & \textbf{sarcv\_} & \textbf{rcv\_} \\
\hline
\multirow{2}{*}{$e^{-(x-y)^2}$} 
    & rel\_err     &  0.11 (0.07, 0.17) & 0.11 (0.07, 0.17) &  0.13 (0.08,0.20) & 0.13 (0.08,0.20) \\
    & $d_{0.95}$   & 2 (2,2) & 2 (2,2) & 2 (2,2) & 2 (2,2) \\
\multirow{2}{*}{$e^{-|x-y|}$} 
    & rel\_err     & 0.13 (0.10,0.18) & 0.30 (0.28,0,33) & 0.15 (0.11,0,22) & 0.53 (0.32,1.18) \\
    & $d_{0.95}$   & 5 (5,5) & 48 (47,50) & 5 (4,5) & 47 (42,48) \\
\hline
\end{tabular}
\caption{Medians and in brackets the $25\%$ and the $75\%$ quantiles of relative errors and explained variance dimensions for different estimators and noise kernels based on 10,000 Monte Carlo runs. }
\label{Tab1}
\end{table}
Table~\ref{Tab1} reports the results of a simulation study with 10,000 runs.  For smooth kernels, semigroup adjustment has little impact on approximation error or the effective dimension (true dimension $=2$). In contrast, for rough kernels, adjustment significantly improves both metrics: the relative error of $rcv$ is more than twice as large as that of $sarcv$, and the estimated effective dimension is drastically inflated (47-48 instead of the true value 5).  
When jumps are present, $sarcv_-$ performs in the case with jumps comparably to $sarcv$ in the case without jumps,  while $rcv_-$ deteriorates in the case with jumps relative to $rcv$ in the case without jumps. This indicates that detecting and truncating jumps is substantially easier with adjusted increments than with regular increments.  
}

\edit{
In summary,  the simulation evidence demonstrates that semigroup-adjusted increments provide clear benefits when volatility innovations are rough, both for quadratic variation estimation and for principal component analysis. In the smooth case, adjustment has little effect. With jumps, the proposed truncation rule is effective in conjunction with adjusted increments but less reliable when applied to regular increments.
}

\subsection* {Acknowledgments}
I thank Dominik Liebl, Fred Espen Benth, Alois Kneip and Andreas Petersson as well as two reviewers
for helpful comments and discussions. 

\subsection*{funding}
Funding by the 
Argelander program of the University of Bonn is gratefully acknowledged.

\subsection*{Supplementary Material}

The replication code for the simulation studys in Section \ref{Sec: Simulation Study} can be downloaded from:
\url{https://github.com/dschroers/SimFDASEE}

\bibliographystyle{plain}
\bibliography{bibliography}

\begin{thebibliography}{10}

\bibitem{Ait-SahaliaJacod2014}
Y.~A\"{i}t-Sahalia and J.~Jacod.
\newblock {\em High-Frequency Financial Econometrics}.
\newblock Princeton University Press, Princeton, New Jersey, 2014.

\bibitem{altmeyer2021}
R.~Altmeyer and M.~Rei{\ss}.
\newblock Nonparametric estimation for linear spdes from local measurements.
\newblock {\em Ann. Appl. Probab.}, 31(1):1--38, 2021.

\bibitem{Petersson2022}
F.~Benth, G.~Lord, and A.~Petersson.
\newblock The heat modulated infinite dimensional {H}eston model and its
  numerical approximation.
\newblock {\em Stochastics}, pages 1--41, 2022.

\bibitem{BenthRudigerSuss2018}
F.~Benth, B.~R{\"u}diger, and A.~S{\"u}ss.
\newblock Ornstein--{U}hlenbeck processes in {H}ilbert space with
  non-{G}aussian stochastic volatility.
\newblock {\em Stoch. Proc. Applic.}, 128(2):461--486, 2018.

\bibitem{Benth2022}
F.~Benth, D.~Schroers, and A.~Veraart.
\newblock A weak law of large numbers for realised covariation in a {H}ilbert
  space setting.
\newblock {\em Stoch. Proc. Applic.}, 145:241--268, 2022.

\bibitem{BSV2022}
F.~Benth, D.~Schroers, and A.~Veraart.
\newblock A feasible central limit theorem for realised covariation of spdes in
  the context of functional data.
\newblock {\em Ann. Appl. Probab.}, 34(2):2208--2242, 2024.

\bibitem{BenthSgarra2021}
F.~Benth and C.~Sgarra.
\newblock A {B}arndorff-{N}ielsen and {S}hephard model with leverage in
  {H}ilbert space for commodity forward markets.
\newblock {\em To appear in Finance Stoch.}, 2021.

\bibitem{BenthSimonsen2018}
F.~Benth and I.~Simonsen.
\newblock The {H}eston stochastic volatility model in {H}ilbert space.
\newblock {\em Stoch. Analysis Applic.}, 36(4):733--750, 2018.

\bibitem{Bibinger2020}
M.~Bibinger and M.~Trabs.
\newblock Volatility estimation for stochastic pdes using high-frequency
  observations.
\newblock {\em Stoch. Proc. Applic.}, 130(5):3005 -- 3052, 2020.

\bibitem{bjork1999}
T.~Bj{\"o}rk and B.~Christensen.
\newblock Interest rate dynamics and consistent forward rate curves.
\newblock {\em Math. Finance}, 9(4):323--348, 1999.

\bibitem{bjork2001}
T.~Bj{\"o}rk and L.~Svensson.
\newblock On the existence of finite-dimensional realizations for nonlinear
  forward rate models.
\newblock {\em Math. Finance}, 11(2):205--243, 2001.

\bibitem{cavoretto2015}
R.~Cavoretto, G.~Fasshauer, and M.~McCourt.
\newblock An introduction to the hilbert-schmidt svd using iterated brownian
  bridge kernels.
\newblock {\em Numer. Algorithms}, 68:393--422, 2015.

\bibitem{Chong2020}
C.~Chong.
\newblock High-frequency analysis of parabolic stochastic pdes.
\newblock {\em Ann. Statist.}, 48(2):1143--1167, 04 2020.

\bibitem{ChongDalang2020}
C.~Chong and R.~Dalang.
\newblock Power variations in fractional {S}obolev spaces for a class of
  parabolic stochastic {PDE}s.
\newblock {\em Bernoulli}, 29(3):1792 -- 1820, 2020.

\bibitem{Cialenco2018}
I.~Cialenco.
\newblock Statistical inference for {SPDE}s: an overview.
\newblock {\em Statist. Inf. Stoch. Proc.}, 20(2):309--329, 12 2018.

\bibitem{cialenco2022}
I.~Cialenco and H.-J. Kim.
\newblock Parameter estimation for discretely sampled stochastic heat equation
  driven by space-only noise.
\newblock {\em Stoch. Proc. Applic.}, 143:1--30, 2022.

\bibitem{cox2023}
S.~Cox, C.~Cuchiero, and A.~Khedher.
\newblock Infinite-dimensional {W}ishart-processes.
\newblock {\em arXiv preprint arXiv:2304.03490}, 2023.

\bibitem{Cox2021}
S.~Cox, S.~Karbach, and A.~Khedher.
\newblock An infinite-dimensional affine stochastic volatility model.
\newblock {\em Math. Finance}, 2021.

\bibitem{Cox2020}
S.~Cox, S.~Karbach, and A.~Khedher.
\newblock Affine pure-jump processes on positive {H}ilbert-{S}chmidt operators.
\newblock {\em Stoch. Proc. Applic.}, 151:191--229, 2022.

\bibitem{DPZ2014}
G.~Da~Prato and J.~Zabczyk.
\newblock {\em Stochastic Equations in Infinite Dimensions}, volume 152 of {\em
  Encyclopedia of Mathematics and its Applications}.
\newblock Cambridge University Press, Cambridge, second edition, 2014.

\bibitem{Engel1999}
K.~Engel and R.~Nagel.
\newblock {\em One-Parameter Semigroups for Linear Evolution Equations}, volume
  194.
\newblock Springer Science \& Business Media, 1999.

\bibitem{filipovic1999}
D.~Filipovi{\'c}.
\newblock A note on the {N}elson--{S}iegel family.
\newblock {\em Math. finance}, 9(4):349--359, 1999.

\bibitem{filipovic2000b}
D.~Filipovi{\'c}.
\newblock Exponential-polynomial families and the term structure of interest
  rates.
\newblock {\em Bernoulli}, pages 1081--1107, 2000.

\bibitem{filipovic2000c}
D.~Filipovi{\'c}.
\newblock Invariant manifolds for weak solutions to stochastic equations.
\newblock {\em Probab. Theory Related Fields}, 118(3):323--341, 2000.

\bibitem{Filipovic2000}
D.~Filipovi{\'c}.
\newblock {\em Consistency Problems for HJM Interest Rate Models}, volume 1760
  of {\em Lecture Notes in Mathematics}.
\newblock Springer, Berlin, 2001.

\bibitem{FilipovicTappeTeichmann2010}
D.~Filipovi{\'c}, S.~Tappe, and J.~Teichmann.
\newblock Term structure models driven by {W}iener processes and {P}oisson
  measures: existence and positivity.
\newblock {\em SIAM J. Fin. Math.}, 1(1):523--554, 2010.

\bibitem{filipovic2003}
D.~Filipovi{\'c} and J.~Teichmann.
\newblock Existence of invariant manifolds for stochastic equations in infinite
  dimension.
\newblock {\em J. Funct. Anal.}, 197(2):398--432, 2003.

\bibitem{Hildebrandt2021}
F.~Hildebrandt and M.~Trabs.
\newblock {Parameter estimation for SPDEs based on discrete observations in
  time and space}.
\newblock {\em Electron. J. Stat.}, 15(1):2716 -- 2776, 2021.

\bibitem{Hildebrandt2023}
F.~Hildebrandt and M.~Trabs.
\newblock Nonparametric calibration for stochastic reaction–diffusion
  equations based on discrete observations.
\newblock {\em Stoch.Proc.Applic.}, 162:171--217, 2023.

\bibitem{Jacod2008}
J.~Jacod.
\newblock Asymptotic properties of realized power variations and related
  functionals of semimartingales.
\newblock {\em Stoch. Proc. Applic.}, 118(4):517--559, 2008.

\bibitem{JacodProtter2012}
J.~Jacod and P.~Protter.
\newblock {\em Discretization of Processes}, volume~67 of {\em Stochastic
  Modelling and Applied Probability}.
\newblock Springer, Heidelberg, 2012.

\bibitem{JacodShirayev2003}
J.~Jacod and A.~Shirayev.
\newblock {\em Limit Theorems for Stochastic Processes}.
\newblock Grundlehren der mathematischen Wissenschaften. Springer Berlin,
  Heidelberg, 2003.

\bibitem{Knoche05}
C.~Knoche.
\newblock {\em Mild solutions of SPDEs driven by Poisson noise in infinite
  dimensions and their dependence on initial conditions}.
\newblock PhD thesis. University of Bielefeld, 2005.

\bibitem{Koski85}
T.~Koski and W.~Loges.
\newblock Asymptotic statistical inference for a stochastic heat flow problem.
\newblock {\em Statist. Probab. Lett.}, 3:185 -- 189, 1985.

\bibitem{Kruse2014}
R.~Kruse.
\newblock {\em Strong and Weak Approximation of Semilinear Stochastic Evolution
  Equations}, volume 2093 of {\em Lecture Notes in Mathematics}.
\newblock Springer International Publishing Switzerland,, 2014.

\bibitem{liu2015}
W.~Liu and M.~R{\"o}ckner.
\newblock {\em Stochastic partial differential equations: an introduction}.
\newblock Springer, 2015.

\bibitem{Mancini2009}
C.~Mancini.
\newblock Nonparametric threshold estimation for models with stochastic
  diffusion coefficient and jumps.
\newblock {\em Scand. J. Statist.}, 36:270--296, 2009.

\bibitem{GM2011}
V.~Mandrekar and L.~Gawarecki.
\newblock {\em Stochastic Differential Equations in Infinite Dimensions}.
\newblock Probability and Its Applications. Springer, Berlin, Heidelberg, 2011.

\bibitem{mandrekar2015}
V.~Mandrekar and B.~R{\"u}diger.
\newblock {\em {Stochastic integration in Banach spaces}}, volume~73 of {\em
  Probability Theory and Stochastic Modelling}.
\newblock Springer, 2015.

\bibitem{Marinelli2016}
C.~Marinelli and M.~Röckner.
\newblock On the maximal inequalities of {B}urkholder, {D}avis and {G}undy.
\newblock {\em Expo. Math.}, 34:1--26, 2016.

\bibitem{Panaretos2019}
V.~Masarotto, V.~M. Panaretos, and Y.~Zemel.
\newblock Procrustes metrics on covariance operators and optimal transportation
  of gaussian processes.
\newblock {\em Sankhya A}, 81(1):172--213, 2019.

\bibitem{Pasemann2021}
G.~Pasemann, S.~Flemming, S.~Alonso, C.~Beta, and W.~Stannat.
\newblock Diffusivity estimation for activator–inhibitor models: Theory and
  application to intracellular dynamics of the actin cytoskeleton.
\newblock {\em J. Nonlinear Sci.}, 31, 2021.

\bibitem{PZ2007}
S.~Peszat and J.~Zabczyk.
\newblock {\em Stochastic Partial Differential Equations with {L}\'{e}vy
  Noise}, volume 113 of {\em Encyclopedia of Mathematics and its Applications}.
\newblock Cambridge University Press, Cambridge, 2007.

\bibitem{Rudin1976}
W.~Rudin.
\newblock {\em Principles of Mathematical Analysis}.
\newblock McGraw-Hill, New York, 3 edition, 1976.

\bibitem{Schroers2024b}
D.~Schroers.
\newblock Dynamically consistent analysis of realized covariations in term
  structure models.
\newblock {\em Math. Finance (to appear)}, 2025.

\end{thebibliography}
\begin{appendix}

\section{Proofs of the Abstract Results}\label{Sec: Proofs}

In this section we prove Theorems \ref{T: Abstract identification of quadr var},  \ref{T: Identification of continuous and disc quadr var}, \ref{T: rates of convergence}, \ref{T: CLT without disc},  \ref{T: Long-time no disc}, and \edit{Corollary} \ref{C: Results for realized variation}.
For that, recall that
$f$ is
a c{\`a}dl{\`a}g adapted stochastic process defined on a filtered probability space $(\Omega,\mathcal F,(\mathcal F_t)_{t\in \mathbb R_+},\mathbb P)$ with right-continuous filtration $(\mathcal F_t)_{t\in \mathbb R_+}$ taking values in a separable Hilbert space $H$ and is given by
\begin{align*}
    f_t= &\mathcal S(t)f_0+\int_0^t\mathcal S(t-s)\alpha_s ds+\int_0^t \mathcal S(t-s)\sigma_s dW_s +\int_0^t \int_{H\setminus \{0\}} \mathcal S(t-s)\gamma_s(z) (N-\nu)(dz,ds),
\end{align*}
where $\alpha,$ $\sigma$, $\gamma$, $W$ and $N$, $\mathcal S$ (with generator $\mathcal A$) are defined as in section \ref{Sec: mild ito semimartingales}.
To make the structure of the proofs more consistent, some other auxiliary technical results that we use throughout are relegated to Section \ref{Sec: Auxiliary Technical Results}.

\subsection{Proofs of Theorem \ref{T: Abstract identification of quadr var}}
In this section we are going to prove  Theorem \ref{T: Abstract identification of quadr var}.

\begin{proof}[Proof of Theorem \ref{T: Abstract identification of quadr var}]
By Lemma \ref{L: Localisation} we can assume the stronger Assumption \ref{As: localised integrals}, that is, that there is a constant $A>0$ such that $$\int_0^{\infty} \|\alpha_s\| ds+\int_0^{\infty} \|\sigma_s\|_{\text{HS}}^2ds+\int_0^{\infty} \int_{H\setminus \{0\}} \|\gamma_s(z)\|^2 F(dz) ds<A.$$
Let $(e_j)_{j\in \mathbb N}$ be an orthonormal basis of $H$ such that $e_j\in D(\mathcal A^*)$ for all $j\in \mathbb N$ contained in the domain $D(\mathcal A^*)$ of the generator $\mathcal A^*$ of the adjoint semigroup $(\mathcal S(t)^*)_{t\geq 0}$, which always is a semigroup on $H$  (c.f. \cite[p.44]{Engel1999}).
Then,  \edit{for} all $k,l\in\mathbb N$ we have by Lemma \ref{L: Reduction to semimartingales is possible} that
\begin{align}\label{Reduction to semimartingale convergence for finite-dimensional distributions 0}
  &\left| \langle SARCV_t^n, e_k\otimes e_l\rangle_{L_{\text{HS}}(H)}-\langle [X,X]_t,e_k\otimes e_l
\rangle_{\text{HS}}\right|\\
  \leq &  \left| \sum_{i=1}^{\ul} \langle  \Delta_i^n X,e_k\rangle\langle\Delta_i^n X,e_l\rangle \indicator_{\|\Delta_i^n X\|\leq u_n}-\langle [X,X]_t,e_k\otimes e_l
\rangle_{\text{HS}}\right|+ \mathcal O_p\left(\Delta_n^{1-2w}\right)\notag
\end{align}
as the matrix valued process 
$\left(\langle [X,X]_t,e_k\otimes e_l
\rangle_{\text{HS}}\right)_{k,l=1,...,N,t\geq 0}$
corresponds to the quadratic variation of 
the $N$-dimensional It{\^o} semimartingale 
$(\langle X_t,e_1\rangle,...,\langle X_t,e_N\rangle)_{t\geq 0}.$ 
Defining by $P_N\in L(H)$ the projection onto $span(e_j:j=1,...,N)$ for $N\in \mathbb N$, we obtain by \eqref{Reduction to semimartingale convergence for finite-dimensional distributions 0} for all $N\in\mathbb N$ that as $n\to\infty$ it is
\begin{equation}
P_N SARCV_t^n P_N\overset{u.c.p.}{\longrightarrow}[P_N X,P_N X]_t.
\end{equation}
Moreover, by H{\"o}lder's inequality it is
\begin{align*}
& \mathbb E \left[\sup_{t\in [0,T]}\left\|SARCV- P_N SARCV P_N\right\|\right]\\
  \leq &2\left(\sum_{i=1}^{\ulT} \mathbb E\left[\left\|(P_N-I) \tilde \Delta_i^n f\right\|^2\right]\right)^{\frac 12}\left(\sum_{i=1}^{\ulT}\mathbb E\left[\left\| \tilde \Delta_i^n f\right\|^2\right]\right)^{\frac 12}.
\end{align*}
By \eqref{increment-vanishes asymptotically} in Lemma \ref{L: Projection convergese uniformly on the range of volatility}, the latter converges to  $0$ as $N\to \infty$ uniformly in $n$.  Since \begin{align*}
  & \sup_{t\in [0,T]} \left\|[P_N X,P_N X]-[ X, X]\right\|\\
   \leq &  \int_0^T \left\|P_N \Sigma_s P_N-\Sigma_s\right\| ds+ \sum_{s\leq T} \left\|P_N (X_s-X_{s-})^{\otimes 2}P_N-(X_s-X_{s-})^{\otimes 2}\right\|
\end{align*}
converges to $0$ by dominated convergence, since  for all $s\geq 0$  by Proposition 4 and Lemma 5 in \cite{Panaretos2019} that as $N\to\infty$ and almost surely
\begin{align*}
    \|P_N \Sigma_s P_N-\Sigma_s\|\to 0 \text{ and } \|P_N (X_s-X_{s-})^{\otimes 2}P_N-(X_s-X_{s-})^{\otimes 2}\|\to 0,
\end{align*} 
the proof is complete.
\end{proof}

\subsection{Proof of Theorem \ref{T: Identification of continuous and disc quadr var}}

We will now prove Theorem \ref{T: Identification of continuous and disc quadr var}.

\begin{proof}[Proof of Theorem \ref{T: Identification of continuous and disc quadr var}] The proof is similar to the one of Theorem \ref{T: Abstract identification of quadr var}.
By Lemma \ref{L: Localisation} we can assume the stronger Assumption \ref{As: SH}(2),  which essentially requires the coeffients to be bounded.
Define 
again,
$(e_j)_{j\in \mathbb N}$ to be an orthonormal basis of $H$ such that $e_j\in D(\mathcal A^*)$ for all $j\in \mathbb N$. 
For all $k,l\in\mathbb N$ we then have by Lemma \ref{L: Reduction to semimartingales is possible} that
\begin{align}\label{Reduction to semimartingale convergence for finite-dimensional distributions}
  &\left| \langle SARCV_t^n(u_n,-), e_k\otimes e_l\rangle_{\text{HS}}- \langle[X^C,X^C]_t, e_k\otimes e_l\rangle_{\text{HS}} \right|\\
  \leq & o_p(1)+ \left| \sum_{i=1}^{\ul} \langle  \Delta_i^n X,e_k\rangle\langle\Delta_i^n X,e_l\rangle \indicator_{\|\Delta_i^n X\|\leq u_n}-\langle[X^C,X^C]_t, e_k\otimes e_l\rangle_{\text{HS}}  \right|.\notag
\end{align}
The latter converges to $0$ uniformly on compacts in probability by Theorem 9.2.1 in \cite{JacodProtter2012}.  
Defining by $P_N\in L(H)$ the projection onto $span(e_j:j=1,...,N)$ for $N\in \mathbb N$, we obtain that \eqref{Reduction to semimartingale convergence for finite-dimensional distributions} implies that for all $N\in\mathbb N$ it is
$$P_N SARCV_t^n(u_n,-)P_n-[P_NX^C,P_NX^C]_t\overset{u.c.p.}{\longrightarrow} 0.$$
It is now enough to prove that as $N\to \infty$ we have
$$\sup_{n\in \mathbb N}(I_{L_{\text{HS}}(H)}-P_N\cdot P_N) \left(SARCV_t^n(u_n,-)-[X^C,X^C]_t\right)\overset{u.c.p.}{\longrightarrow} 0.$$
As in the proof of Theorem \ref{T: Abstract identification of quadr var} we obtain by H{\"o}lder's inequality that
\begin{align*}
& \mathbb E \left[\sup_{t\in [0,T]}\left\|SARCV_t^n(u_n,-)- P_N SARCV_t^n(u_n,-) P_N\right\|\right]\\
 \leq & \left(\sum_{i=1}^{\ulT} \mathbb E\left[\left\|(P_N-I) \tilde \Delta_i^n f\right\|^2\right]\right)^{\frac 12}\left(\sum_{i=1}^{\ulT}\mathbb E\left[\left\| \Delta_i^n f\right\|^2\right]\right)^{\frac 12}
\end{align*}
which converges to $0$ as $N\to \infty$ uniformly in $n$ by \eqref{increment-vanishes asymptotically} below. The convergence $\sup_{n\in \mathbb N}(I_{L_{\text{HS}}(H)}-P_N\cdot P_N) \int_0^t \Sigma_s ds$ holds uniformly on compacts in probability, which can be shown as in the proof of Theorem \ref{T: Abstract identification of quadr var}.
\end{proof}

\subsection{Proof of Theorem \ref{T: rates of convergence}}

The goal of this section is to prove the subsequent theorem which, in particular, shows Theorem \ref{T: rates of convergence}, since Assumption \ref{As: SH}(r), which is a localized version of Assumption \ref{As: H}, implies Assumption \ref{In proof: Very very Weak localised integrability Assumption on the moments}(p,r) for all $p> 0$ and then Lemma \ref{L: Localisation}(iii) applies.

\begin{theorem}\label{T: Uniform convergence theorem for realized variation}
        If Assumptions \ref{As: spatial regularity}($\gamma$) and \ref{In proof: Very very Weak localised integrability Assumption on the moments}(p,r) hold for $\gamma\in (0,1/2]$, $p>2/(1-2w)$ and for some $r\in (0,2)$ we have for $\rho<(2-r)w$
\begin{equation}\label{App: Rate of convergence}
   \sup_{t\in [0,T]}\left\|SARCV_t^n(u_n,-)-[X^C,X^C]_t\right\|_{\text{HS}}=\mathcal O_p\left( \Delta_n^{\min(\gamma,\rho)}\right).
\end{equation}
If, in addition, Assumption \ref{In proof II}($\gamma$) holds, we can find a constant $K>0$, which is independent of $T$ and $n$ such that
\begin{equation}\label{App: Rate of convergence2}
    \mathbb E\left[\sup_{t\in [0,T]}\left\|SARCV_t^n(u_n,-)-[X^C,X^C]_t\right\|_{\text{HS}}\right]\leq K T \Delta_n^{\min(\gamma,\rho)}.
\end{equation}
\end{theorem}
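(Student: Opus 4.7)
The proposal is to follow the three-term decomposition sketched in the Remark after Theorem \ref{T: rates of convergence} and bound each summand separately. First I would invoke Lemma \ref{L: Localisation} to strengthen Assumption \ref{In proof: Very very Weak localised integrability Assumption on the moments}(p,r) into a uniform (localized) bound on the coefficients, since Assumption \ref{As: H}(r) implies the moment assumption for all $p>0$ in the localized regime. Introducing the auxiliary process
\[
f'_t := \mathcal S(t)f_0 + \int_0^t \mathcal S(t-s)\,dX_s^C,
\]
and $\Sigma_s^{\mathcal S_n} := \mathcal S(\lfloor s/\Delta_n\rfloor \Delta_n - s)\,\Sigma_s\, \mathcal S(\lfloor s/\Delta_n\rfloor \Delta_n - s)^*$, split the error as
\[
\bigl\|SARCV_t^n(u_n,-) - [X^C,X^C]_t\bigr\|_{\HS} \;\leq\; (i)_n^t + (ii)_n^t + (iii)_n^t,
\]
as given in the Remark.

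For $(iii)_n^t$, the bound is essentially deterministic given $\Sigma$: since $\|\mathcal S(r)\Sigma_s\mathcal S(r)^* - \Sigma_s\|_{\HS} \lesssim \|(I-\mathcal S(r))\Sigma_s\|_{\HS} + \|\Sigma_s(I-\mathcal S(r))^*\|_{\HS}$, Assumption \ref{As: spatial regularity}($\gamma$) gives $(iii)_n^t = \mathcal O_p(\Delta_n^\gamma)$ uniformly in $t\in[0,T]$, with an $L^1$-version under Assumption \ref{In proof II}($\gamma$) (needed for \eqref{App: Rate of convergence2}). For $(ii)_n^t$, write the increment $\tilde\Delta_i^n f'$ as a drift contribution plus a stochastic integral of $\mathcal S(i\Delta_n - s)\sigma_s$ against $W$; after centering each term by $\int_{(i-1)\Delta_n}^{i\Delta_n}\Sigma_s^{\mathcal S_n}ds$, the resulting sum is a discrete martingale to which Doob's inequality and standard Itô isometry arguments apply. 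The drift piece contributes $\mathcal O_p(\Delta_n)$ and the martingale piece $\mathcal O_p(\Delta_n^{1/2})$, both dominated by $\Delta_n^{\min(\gamma,\rho)}$ since $\gamma\le 1/2$ and $\rho<(2-r)w\le 1$.

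The main obstacle is $(i)_n^t$, the truncation error, which is where the interaction between jumps and the semigroup adjustment enters. The plan is to write $\tilde\Delta_i^n f = \tilde\Delta_i^n f' + \tilde\Delta_i^n f^J$, with $\tilde\Delta_i^n f^J$ collecting the jump stochastic convolution, and to split jumps into ``big'' and ``small'' parts at a threshold $\varepsilon$. On the event where only small jumps occur in $((i-1)\Delta_n, i\Delta_n]$, we have $\|\tilde\Delta_i^n f^J\|$ controlled in $L^p$ at a rate depending on the $r$-moment of Assumption \ref{In proof: Very very Weak localised integrability Assumption on the moments}(p,r) via the Bichteler–Jacod–Mémin inequalities for stochastic convolutions; one shows that the indicator $\indicator_{g_n(\tilde\Delta_i^n f)\le u_n}$ forces $\|\tilde\Delta_i^n f\|\lesssim u_n \asymp \Delta_n^w$, so on the ``no big jump'' set the discrepancy between $\tilde\Delta_i^n f^{\otimes 2}\indicator_{g_n\le u_n}$ and $(\tilde\Delta_i^n f')^{\otimes 2}$ is of order $u_n \|\tilde\Delta_i^n f^J\|$, summing to $o_p(\Delta_n^\rho)$ for any $\rho<(2-r)w$ by the standard trade-off argument (cf. the proof of Theorem 9.3.2 in \cite{JacodProtter2012}), with the $p$-moment condition $p>2/(1-2w)$ ensuring that the Hölder-type aggregation of the $\ulT$ terms survives the supremum over $t\in[0,T]$. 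On the ``big jump'' set, the indicator vanishes for $n$ large enough on each such interval outside an asymptotically negligible set, and the remaining contributions from $(\tilde\Delta_i^n f')^{\otimes 2}$ are absorbed into $(ii)_n^t + (iii)_n^t$.

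Combining the three bounds yields \eqref{App: Rate of convergence}. For \eqref{App: Rate of convergence2}, the same estimates apply in expectation under Assumption \ref{In proof II}($\gamma$), which upgrades the path-wise control in Assumption \ref{As: spatial regularity}($\gamma$) to a uniform $L^1$ bound; summing over the $\ulT \le T/\Delta_n$ intervals and keeping the constants independent of $T$ produces the factor $T$ in the claim. The delicate bookkeeping is in the truncation step, where the precise relationship between $w$, $r$, $p$ and the jump integrability must be tracked to guarantee the summed $L^1$ bound scales linearly in $T$ rather than accruing additional growth.
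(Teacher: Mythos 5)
Your plan follows essentially the same route as the paper: the three-term decomposition you quote from the Remark is precisely what the paper carries out, with each term handled as you describe (semigroup regularity for the spatial discretization error, a discrete-martingale / It\^o isometry bound for the realized-variation error, and a big/small jump splitting \`a la Jacod--Protter for the truncation error via Lemma~\ref{L: Bound for truncated jump increments}); the paper merely splits your $(i)_n^t$ further into the $f$-versus-$f'$ truncation discrepancy (Lemma~\ref{Lem: Truncated minus continuous truncated is AN}) and the truncation of the continuous increments (Lemma~\ref{lem: Continuous truncated-continuous is AN}). One small misdirection: no localization is needed here, since Assumption~\ref{In proof: Very very Weak localised integrability Assumption on the moments}(p,r) is itself a uniform moment bound on the coefficients; Lemma~\ref{L: Localisation}(iii) is used the other way around, to \emph{derive} Theorem~\ref{T: rates of convergence} (under the weaker Assumption~\ref{As: H}(r)) from the present theorem.
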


Theorem \ref{T: Uniform convergence theorem for realized variation} is proved via a discretization procedure in several steps. Before we introduce the scheme, 
let us introduce some important notation.
If Assumption \ref{As: H}(r) holds for $0<r\leq 1$ we can write 
\begin{align*}
    f_t= \mathcal S(t)f_0+\int_0^t \mathcal S(t-s)\alpha_s' ds+\int_0^t \mathcal S(t-s)\sigma_s dW_s+\int_0^t \int_{H\setminus \{0\}} \mathcal S(t-s)\gamma_s(z) N(dz,ds),
\end{align*}
where 
$$\alpha_s'= \alpha_s-\int_{H\setminus \{0\}} \gamma_s(z) F(dz)$$
and the integral w.r.t. the (not compensated) Poisson random measure $N$ is well defined (for the second term recall the definition of the integral e.g. from \cite[Section 8.7]{PZ2007}). 
In this case, we define $f=f'+f''$ where
\begin{align}\label{Eq: Continuous discontinuous condition for low r}
   & f_t':=\mathcal S(t) f_0+\int_0^t \mathcal S(t-s)\alpha_s'+\int_0^t \mathcal S(t-s)\sigma_sdW_s,\\
    &f_t'':=\int_0^t \int_{H\setminus \{ 0\}}\mathcal S(t-s)\gamma_s(z) N(dz,ds).\notag
\end{align}

In the case that Assumption \ref{As: H}(r) holds for $r\in (1,2)$ we define $f=f'+f''$ where
\begin{align}\label{Eq: Continuous discontinuous condition for large r}
   & f_t':=\mathcal S(t) f_0+\int_0^t \mathcal S(t-s)\alpha_s+\int_0^t \mathcal S(t-s)\sigma_sdW_s,\\
    &f_t'':=\int_0^t \int_{H\setminus \{ 0\}}\mathcal S(t-s)\gamma_s(z) (N-\nu)(dz,ds).\notag
\end{align}

Theorem \ref{T: Uniform convergence theorem for realized variation} is then shown in several steps in which we derive constants $K_1,K_2,K_3,K_4>0$ and real-valued sequences $(\phi^1_n)_{n\in \mathbb N}$, $(\phi^2_n)_{n\in \mathbb N}$ such that $\phi^1_n\to 0$ and $\phi_n^2\to 0$ as $n\to\infty$ such that for $\rho< (2-r)w$
\begin{itemize}
    \item[(i)]$
         \mathbb E\left[\sup_{t\in [0,T]}\left\|SARCV_t^n(u_n,-)-\sum_{i=1}^{\ul} (\tilde\Delta_i^n f')^{\otimes 2}\indicator_{g_n(\tilde\Delta_i^n f')\leq u_n}\right\|_{\text{HS}}\right]
         \leq  K_1 T\Delta_n^{\rho}\phi^1_n;$
    \item[(ii)]  $
         \mathbb E\left[ \sup_{t\in [0,T]}\|\sum_{i=1}^{\ul} ( \tilde\Delta_i^n f')^{\otimes 2}\indicator_{g_n(\tilde\Delta_i^n f')\leq u_n}-\sum_{i=1}^{\ul} (\tilde\Delta_i^n f')^{\otimes 2}\|_{\text{HS}}\right]
         \leq  K_2 T \Delta_n^{\frac 12}\phi^2_n;$
     \item[(iii)] Using the notation $\Sigma_s^{\mathcal S_n}:= \mathcal S(\lfloor s/\Delta_n\rfloor \Delta_n-s)\Sigma_s\mathcal S(\lfloor s/\Delta_n\rfloor \Delta_n-s)^*,$
     we have
     \begin{align*}
         \mathbb E\left[ \sup_{t\in [0,T]}\|\sum_{i=1}^{\ul} (\tilde\Delta_i^n f')^{\otimes 2}-\int_0^t \Sigma_s^{\mathcal S_n}  \|_{\text{HS}}\right]
         \leq  K_3 T  \Delta_n^{\frac 12};\end{align*}
     \item[(iv)] We have
     \begin{align*}
      \sqrt n \left\| \sup_{m\in \mathbb N\cup_{\infty}}\int_0^t \Sigma_s^{\mathcal S_n} ds-[X^C,X^C]_t \right\|_{\text{HS}}
        =\mathcal O_p\left(\Delta_n^{\gamma}\right)
        \end{align*}
     and if even Assumption \ref{In proof II}($\gamma$) holds it is
     \begin{align*}
         \mathbb E\left[\sqrt n \left\| \int_0^t \Sigma_s^{\mathcal S_n} ds-[X^C,X^C]_t \right\|_{\text{HS}}\right]
         \leq  K_4 T \Delta_n^{\gamma}.\end{align*}
\end{itemize}
Then Theorem \ref{T: Uniform convergence theorem for realized variation} follows with $K=K_1+K_2+K_3+K_4$. 
Therefore,  it remains to prove the estimates (i), (ii), (iii) and (iv).

We start with (i).
\begin{lemma}\label{Lem: Truncated minus continuous truncated is AN}
Let Assumption \ref{In proof: Very very Weak localised integrability Assumption on the moments}(p,r) and Assumption \ref{As: H}(r) hold for some $p>max(2/(1-2w),(2-2wr)/(2(2-r)w))$, $r\in (0,2)$.
Then we can find a constant $K>0$ and a sequence $(\phi_n)_{n\in \mathbb N}$ such that $\phi_n\to 0$ as $n\to\infty$ and 
    \begin{align*}
        & \mathbb E\left[\sup_{t\in [0,T]}\left\|\sum_{i=1}^{\ul} (\tilde\Delta_i^n f')^{\otimes 2}\indicator_{g_n(\tilde\Delta_i^n f')\leq u_n}-(\tilde\Delta_i^n f)^{\otimes 2}\indicator_{g_n(\tilde \Delta_i^n f)\leq u_n}\right\|_{\text{HS}}\right]\\
         \leq & K T \Delta_n^{\min\left(\frac{p-2}p,\frac{(2-r)(p-1)}{2p}, \frac{(1-wr)(p-1)}p-1+2w\right)} \phi_n .
    \end{align*}
    If we can even choose $p$ arbitrarily large, as under Assumption \ref{As: SH}(r), this yields that for all $\rho<(2-r)w$ we have
     \begin{align*}
         \mathbb E\left[\sup_{t\in [0,T]}\left\|\sum_{i=1}^{\ul} (\tilde\Delta_i^n f')^{\otimes 2}\indicator_{g_n(\tilde\Delta_i^n f')\leq u_n}-(\tilde\Delta_i^n f)^{\otimes 2}\indicator_{g_n(\tilde \Delta_i^n f)\leq u_n}\right\|_{\text{HS}}\right]
         \leq  K T \Delta_n^{ \rho } \phi_n .
    \end{align*}
\end{lemma}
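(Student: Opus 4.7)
The plan is to exploit the decomposition $f=f'+f''$ from \eqref{Eq: Continuous discontinuous condition for low r} or \eqref{Eq: Continuous discontinuous condition for large r}, so that $\tilde\Delta_i^n f-\tilde\Delta_i^n f'=\tilde\Delta_i^n f''$, and then to control the difference between the two truncated sums summand by summand. For each $i$ I would write the telescoping identity
\begin{align*}
A_i - B_i ={}& \bigl((\tilde\Delta_i^n f)^{\otimes 2} - (\tilde\Delta_i^n f')^{\otimes 2}\bigr)\indicator_{g_n(\tilde\Delta_i^n f)\leq u_n} \\
&{}+ (\tilde\Delta_i^n f')^{\otimes 2}\bigl(\indicator_{g_n(\tilde\Delta_i^n f)\leq u_n} - \indicator_{g_n(\tilde\Delta_i^n f')\leq u_n}\bigr),
\end{align*}
and apply the Hilbert--Schmidt identity $\|a^{\otimes 2}-b^{\otimes 2}\|_{\mathrm{HS}}\le \|a-b\|(\|a\|+\|b\|)$ to the first piece, bounding $\|a\|$ on the truncation event by $u_n/c$ via \eqref{abstract g conditions}. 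For the second piece I would use the subadditivity of $g_n$ together with \eqref{abstract g conditions} to deduce $|g_n(\tilde\Delta_i^n f)-g_n(\tilde\Delta_i^n f')|\le C\|\tilde\Delta_i^n f''\|$, which restricts the symmetric difference of the two truncation events to a thin annulus and forces $\|\tilde\Delta_i^n f'\|^2\le K(u_n^2+\|\tilde\Delta_i^n f''\|^2)$ on that set.

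After this pointwise reduction, the task reduces to bounding, for appropriate exponents $q\in[1,2]$, the moments $\mathbb E\|\tilde\Delta_i^n f''\|^q$ and $\mathbb E\|\tilde\Delta_i^n f'\|^q$, uniformly in $i$. Using the representations $\tilde\Delta_i^n f'=\int_{(i-1)\Delta_n}^{i\Delta_n}\mathcal S(i\Delta_n-s)dX_s^C$ and the analogous jump integral for $\tilde\Delta_i^n f''$, together with the semigroup bound $\sup_{t\in[0,T]}\|\mathcal S(t)\|_{\mathrm{op}}<\infty$, one has from Burkholder--Davis--Gundy in Hilbert space and Assumption \ref{In proof: Very very Weak localised integrability Assumption on the moments}(p,r) that $\mathbb E\|\tilde\Delta_i^n f'\|^p\lesssim \Delta_n^{p/2}$ and, for the jump part, $\mathbb E\|\tilde\Delta_i^n f''\|^q\lesssim \Delta_n$ for $q\le r\vee 2$ with the appropriate modifications between the compensated and uncompensated regime.

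Combining these estimates via H\"older's inequality (with conjugate exponent $p/(p-1)$), summing over $i\le \ul$, and using $\ul\lesssim T/\Delta_n$ and $u_n=\Delta_n^w$, the three contributions yield the three exponents in the $\min$: the $u_n\|\tilde\Delta_i^n f''\|$ pairing produces $\Delta_n^{(1-wr)(p-1)/p-1+2w}$, the pairing $\|\tilde\Delta_i^n f''\|\|\tilde\Delta_i^n f'\|$ produces $\Delta_n^{(2-r)(p-1)/(2p)}$, and the residual $\|\tilde\Delta_i^n f''\|^2$ term together with the loss from concentration on the truncation boundary produces $\Delta_n^{(p-2)/p}$. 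Uniformity in $t\in[0,T]$ is automatic since the summand is nonnegative in Hilbert--Schmidt norm, so the supremum is attained at $t=T$; a factor $\phi_n\to 0$ is produced at the end from the integrability of the small jumps (an application of dominated convergence to the tail of $F$), yielding the desired little-$o$ improvement.

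The main obstacle will be the bookkeeping of the three competing exponents and the correct deployment of the moment assumption: one must carefully distinguish whether $r\le 1$ (where the noncompensated $f''$ gives direct $L^1$ control of the jump increments) or $r\in(1,2)$ (where one must appeal to BDG for the compensated Poisson integral and pay the price in the required order of moments $p$). The second delicate point is that the second piece in the splitting requires not only a pointwise bound but an integrated estimate on the measure of the annular event $\{u_n-C\|\tilde\Delta_i^n f''\|<g_n(\tilde\Delta_i^n f')\le u_n\}$, which is where the small factor $\phi_n$ enters via a uniform integrability argument on the jump measure.
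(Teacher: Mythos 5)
Your overall strategy is aligned with the paper's: both split $f=f'+f''$, telescope the difference of truncated sums, bound the difference of tensor squares via the Hilbert--Schmidt identity, restrict the symmetric difference of the two truncation events via subadditivity of $g_n$, then combine moment bounds with H\"older's inequality with conjugate exponent $p/(p-1)$, take the supremum at $t=T$, and obtain the little-$o$ improvement $\phi_n$ from dominated convergence applied to the small jumps. (The paper uses a three-term split with the joint truncation event and the two mutually exclusive ``one passes, the other fails'' events; your two-term telescope is equivalent up to rearrangement.)

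There is, however, a genuine gap in the moment estimates you intend to feed into the H\"older step. The claim that $\mathbb E\|\tilde\Delta_i^n f''\|^q\lesssim\Delta_n$ for $q\le r\vee 2$ is false: under Assumption \ref{In proof: Very very Weak localised integrability Assumption on the moments}$(p,r)$ only the $r$-th moment of the jump kernel is controlled, so (modulo BDG) only $\mathbb E\|\tilde\Delta_i^n f''\|^r\lesssim\Delta_n$ is available, and for $q<r$ one merely gets $\Delta_n^{q/r}$ by Jensen, while for $q>r$ nothing at all. More importantly, even the corrected raw bound at $q=r$ is too weak to close the argument: the $u_n\|\tilde\Delta_i^n f''\|$ and $\|\tilde\Delta_i^n f''\|^2$ pairings produced by your HS identity require a control of the form
$\mathbb E\bigl[(\|\tilde\Delta_i^n f''\|/\Delta_n^{l})\wedge 1\bigr]\le\Delta_n^{1-lr}\phi_n$,
i.e.\ the capped/truncated jump-increment estimate of Lemma \ref{L: Bound for truncated jump increments}. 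That estimate is obtained by splitting the jump measure at a $\Delta_n$-dependent threshold $\epsilon=\Delta_n^{rl'}$, bounding the large-jump contribution by $\mathbb P[N(\epsilon)>0]$ and the small-jump contribution by the tail of $\int_{\Gamma\le\epsilon}\Gamma^r\,dF$; this is precisely where $\phi_n\to 0$ comes from, and it is what allows the caps at scales $\sqrt{\Delta_n}$ and $\Delta_n^{w}$ to produce the three exponents in the $\min$. Your sketch alludes to ``an integrated estimate on the annular event'' and ``a uniform integrability argument on the jump measure,'' which is the right instinct, but the plan as written (raw moments of $\tilde\Delta_i^n f''$ plus H\"older) cannot generate $\Delta_n^{(1-wr)(p-1)/p-1+2w}$ or $\Delta_n^{(2-r)(p-1)/(2p)}$; you must first prove (or cite) the capped estimate and substitute it for the raw moment bound in step~4 before the bookkeeping of exponents can go through.
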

\begin{proof}
Without loss of generality we may assume that in \eqref{abstract g conditions} it is $c=1$.
for a constant $C>0$. Denote by $f'$ and $f''$ the terms corresponding to the decompositions \eqref{Eq: Continuous discontinuous condition for low r} in the case that $r\leq 1$ and \eqref{Eq: Continuous discontinuous condition for large r} in the case $r\in (1,2]$.
First we decompose
\begin{align}\label{Eq: Jump Discretization}
  & \sum_{i=1}^{\ul} (\tilde\Delta_i^n f')^{\otimes 2}\indicator_{g_n(\tilde\Delta_i^n f')\leq u_n}-(\Delta_i^n f)^{\otimes 2}\indicator_{g_n(\tilde\Delta_i^n f)\leq u_n}\notag\\
    = & \sum_{i=1}^{\ul} \left((\tilde\Delta_i^n f')^{\otimes 2}-(\tilde\Delta_i^n f)^{\otimes 2}\right)\indicator_{g_n(\tilde\Delta_i^n f')\leq u_n,g_n(\tilde\Delta_i^n f)\leq u_n}+ \sum_{i=1}^{\ul} (\tilde\Delta_i^n f')^{\otimes 2}\indicator_{g_n(\tilde\Delta_i^n f')\leq u_n<g_n(\tilde\Delta_i^n f)}\notag\\
    &+ \sum_{i=1}^{\ul} (\tilde\Delta_i^n f)^{\otimes 2}\indicator_{g_n(\tilde\Delta_i^n f)\leq u_n<g_n(\tilde\Delta_i^n f')}\notag\\
    = & (i)_t^{n}+(ii)_t^{n}+(iii)_t^{n}.
\end{align}
We will now prove that  these summands have the required uniform bounds in  $t\in [0,T]$.

We start with  $(iii)_t^{n}$. For that, choose $1<\tilde q< p/2$ and $q=\tilde q/(\tilde q-1)$ and apply H{\"o}lder's inequality to obtain
\begin{align}\label{Est of martingale equals truncated martingale RV:I}
  \mathbb E\left[\sup_{t\in [0,T]}\| (iii)_t^{n}\|\right] 
\leq     \sum_{i=1}^{\ulT}  \mathbb E\left[\left\|\tilde\Delta_i^n f'\right\|^{2\tilde q}\right]^{\frac 1{\tilde q}}  \mathbb E\left[\indicator_{u_n<\|\tilde\Delta_i^n f'\|}^q\right]^{\frac 1q}.
\end{align}
 It is now by Lemma \ref{lem: Generalized BDG} (which applies as $\tilde q < p/2$) for a constant $K>0$ (which we will increase accordingly throughout the subsequent arguments)
\begin{equation}\label{Est of martingale equals truncated martingale RV:II}
\mathbb E\left[\left\|\tilde\Delta_i^n f'\right\|^{2\tilde q}\right]^{\frac 1{\tilde q}}  \leq\mathbb E\left[\left\|\tilde\Delta_i^n f'\right\|^{p}\right]^{\frac {2}{p}}  \leq  K\Delta_n.    
\end{equation}
Moreover, we have again by Lemma \ref{lem: Generalized BDG} (which we can apply since $2/(1-2w)
<p$) that
\begin{align}\label{Est of martingale equals truncated martingale RV:III}
        \mathbb E\left[\indicator_{u_n<\|\tilde\Delta_i^n f'\|}\right]
   \leq & \mathbb E\left[ \frac{\| \tilde\Delta_i^n f'\|^{2/(1-2w)}}{u_n^{2/(1-2w)}}\right]
    \leq K\Delta_n.
\end{align}
Combining \eqref{Est of martingale equals truncated martingale RV:I}, \eqref{Est of martingale equals truncated martingale RV:II} and \eqref{Est of martingale equals truncated martingale RV:III} we find a constant $K>0$ such that
\begin{align*}
    \mathbb E\left[\sup_{t\in [0,T]}\| (iii)_t^n\|\right]\leq  K \ulT \Delta_n \Delta_n^{\frac {1}q}.
    \leq KT\Delta_n^{\frac {p-2}p}.
\end{align*}

For the second summand,  since by the subadditivity of $g_n$ we have  $g_n(\tilde\Delta_i^n f)\leq (g_n(\tilde\Delta_i^n f')+g_n(\tilde\Delta_i^n f''))$ and, hence,
$2g(\tilde\Delta_i^n f')\leq u_n<g_n(\tilde\Delta_i^n f)$ implies that $g_n(\tilde\Delta_i^n f'')\geq u_n/2$, we find
\begin{align*}
 &\left\|\sqrt n\sum_{i=1}^{\ul} (\tilde\Delta_i^n f')^{\otimes 2}\indicator_{g_n(\tilde\Delta_i^n f')\leq u_n<g_n(\tilde\Delta_i^n f)} \right\|_{\text{HS}}\\
     \leq &   \sum_{i=1}^{\ul}  \left\|\tilde\Delta_i^n f' \right\|^2\indicator_{u_n/2\leq \|\tilde\Delta_i^n f'\|}+  \sum_{i=1}^{\ul}  \left\|\tilde\Delta_i^n f' \right\|^2\indicator_{u_n/2<\|\tilde\Delta_i^n f''\|}\\
   := & (ii)_t^{n}(a)+(ii)_t^{n}(b).
\end{align*}
For the first part, we can proceed as for $(iii)_t^{n}$ and obtain a constant $K>0$ and a sequence $(\phi_n)_{n\in \mathbb N}$ converging to $0$ such that
\begin{align*}
    \mathbb E\left[\sup_{t\in [0,T]}\| (ii)_t^n(a)\|\right]\leq KT\phi_n\Delta_n^{\frac{p-2}p}.
\end{align*}
For the second part, we can estimate 
\begin{align*}
  \|  (ii)_t^{n}(b)\|\leq  \sum_{i=1}^{\ul}  \left\|\tilde\Delta_i^n f' \right\|^2\left(1\wedge 2\frac{\|\tilde\Delta_i^n f''\|}{u_n}\right).
\end{align*}
Observe that $\|\tilde \Delta_i^n f''\|=\|\tilde \Delta_i^n f-\tilde \Delta_i^n f'\|\leq\|\tilde \Delta_i^n f\|+\|\tilde \Delta_i^n f'\| $. With $v_n:=u_n/\sqrt{\Delta_n}$ 
we also find
\begin{align*}
     \|(i)_t^n\|_{\text{HS}} 
     \leq &  \Delta_n\sum_{i=1}^{\ulT}\left(1+ \frac{\|\tilde\Delta_i^n f'\|}{\sqrt{\Delta_n}}\right)\left(\frac{\|\tilde\Delta_i^n f''\|}{\sqrt{\Delta_n}}\wedge 1+\left(\frac{\|\tilde\Delta_i^n f''\|}{\sqrt{\Delta_n}}\right)^2\wedge (2v_n)^2\right).
\end{align*}
Combining the latter two estimates we can find a constant $K>0$ such that
\begin{align*}
     \|(i)_t^n\|+ \|(ii)_t^n(b)\|
     \leq & K \Delta_n\sum_{i=1}^{\ulT}\left(1+ \frac{\|\tilde\Delta_i^n f'\|}{\sqrt{\Delta_n}}\right)\left(\frac{\|\tilde\Delta_i^n f''\|}{\sqrt{\Delta_n}}\wedge 1+\left(\frac{\|\tilde\Delta_i^n f''\|}{\sqrt{\Delta_n}}\right)^2\wedge (2v_n)^2\right).
\end{align*}
As it is
$$\left(\frac{\|\tilde\Delta_i^n f''\|}{\sqrt{\Delta_n}}\right)^2\wedge (2v_n)^2=(2 \Delta_n^{w-\frac 1 2})^2 \left(\frac{\|\tilde\Delta_i^n f''\|}{2\Delta_n^w}\wedge 1\right)^2,$$
it is now enough to employ Lemma \ref{L: Bound for truncated jump increments} to obtain for each $0\leq l\leq 1/r$ that for $f''_t$ given as in \eqref{Eq: Continuous discontinuous condition for low r} if $r<1$ and  as in \eqref{Eq: Continuous discontinuous condition for large r} if $r>1$ we have
\begin{align}\label{Est of martingale equals truncated martingale RV:IV}
    \mathbb E\left[C\frac{\|\tilde\Delta_i^n f''\|}{\Delta_n^l}\wedge 1\right]\leq K \Delta_n^{1-lr}\phi_n
\end{align}
 for a constant $K>0$, independent of $n$ and $t$ and a sequence $(\phi_n)_{n\in \mathbb N}$ that converges to $0$ as $n\to \infty$, which is independent of $t$. 
In this case, we can apply Lemma \ref{lem: Generalized BDG} and could estimate \edit{by \eqref{Est of martingale equals truncated martingale RV:IV}} with $K',K'',K'''>0$ and $q=p/(p-1)$  
 \begin{align*}
&   \mathbb E [\sup_{t\in [0,T]}  \|(i)_t^n\|+ \|(ii)_t^n(b)\| ]\\
     \leq & K\Delta_n\sum_{i=1}^{\ulT}\mathbb E\left[\left(1+ \frac{\|\tilde\Delta_i^n f'\|}{\sqrt{\Delta_n}}\right)^{p}\right]^{\frac 1{p}}\mathbb E\left[\left(\frac{\|\tilde\Delta_i^n f''\|}{\sqrt{\Delta_n}}\wedge 1\right)^q\right]^{\frac 1q}\\
     &+K\Delta_n\sum_{i=1}^{\ulT}\mathbb E\left[\left(1+ \frac{\|\tilde\Delta_i^n f'\|}{\sqrt{\Delta_n}}\right)^{p}\right]^{\frac 1{p}}\mathbb E\left[\left((2 \Delta_n^{w-\frac 1 2})^2 \left(\frac{\|\tilde\Delta_i^n f''\|}{2\Delta_n^w}\wedge 1\right)^2\right)^q\right]^{\frac 1q}\\
     \leq &K\Delta_n\sum_{i=1}^{\ulT}K'''\mathbb E\left[\frac{\|\tilde\Delta_i^n f''\|}{\sqrt{\Delta_n}}\wedge 1\right]^{\frac 1q}+ \Delta_n\sum_{i=1}^{\ul}K'''\left(2 \Delta_n^{w-\frac 1 2}\right)^2\mathbb E\left[\frac{\|\Delta_i^n f''\|}{2\Delta_n^w}\wedge 1\right]^{\frac 1q}\\
  \leq &   \edit{K \Delta_n\sum_{i=1}^{\ulT}K'''\left(K \Delta_n^{1-\frac {r}2} \phi_n\right)^{\frac 1q}+ \Delta_n\sum_{i=1}^{\ulT}K'''\left(2 \Delta_n^{w-\frac 1 2}\right)^2\left(K \Delta_n^{1-wr} \phi_n\right)^{\frac 1q}}\\
     \leq &K\left( K' T \Delta_n^{\frac {2-r}{2q}}+K'' T \Delta_n^{2w-1+\frac {1-wr}q}\right)\phi_n^{\frac 1q}
\end{align*}
with $\phi_n$ a real sequence converging to $0$ as $n\to \infty$.
Since we have that 
\edit{$(2-r)/2q>2w-1+((1-wr)/q)$} and $q=p/(p-2)$ we have shown
that by \eqref{Est of martingale equals truncated martingale RV:IV} there is a constant $K>0$ such that
\begin{align*}
   \mathbb E \left[\sup_{t\in [0,T]}  \|(i)_t^n\|+ \|(ii)_t^n(b)\|\right]
     \leq K T \Delta_n^{2w-1+\frac {1-wr}q}\phi_n^{\frac {p-1}p},
\end{align*}
which proves the claim.
\end{proof}

We continue with (ii), which follows from
\begin{lemma}\label{lem: Continuous truncated-continuous is AN}
    If Assumption \ref{In proof: Very very Weak localised integrability Assumption on the moments}(p,r) holds for $p>\max(4,2/(\frac 12-w))$ and $r\in (0,2]$,  we can find a constant $K>0$ independent of $T$ and $n$  and a sequence $(\phi_n)_{n\in \mathbb N}$ such that $\phi_n\to 0$ as $n\to\infty$ and
    \begin{align*}
          \mathbb E\left[\sup_{t\in [0,T]} \sqrt n\left\|\sum_{i=1}^{\ul} (\tilde \Delta_i^n f')^{\otimes 2}\indicator_{g_n(\Delta_i^n f')\leq u_n}-\sum_{i=1}^{\ul} (\tilde \Delta_i^n f')^{\otimes 2}\right\|_{\text{HS}}\right]
      \leq  K T \Delta_n^{\frac 12} \phi_n.
    \end{align*}
\end{lemma}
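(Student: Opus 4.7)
The difference inside the Hilbert--Schmidt norm is exactly the negative of the truncated-out piece,
\[
-\sum_{i=1}^{\ul}(\tilde\Delta_i^n f')^{\otimes 2}\indicator_{g_n(\tilde\Delta_i^n f') > u_n}.
\]
Since each rank-one summand $(\tilde\Delta_i^n f')^{\otimes 2}$ is positive semidefinite, the partial sums are monotone in $t$ in the positive-operator order, so the Hilbert--Schmidt norm is non-decreasing in $t$ and the supremum is attained at $t = T$. Applying the triangle inequality reduces the problem to controlling $\sum_{i=1}^{\ulT}\|\tilde\Delta_i^n f'\|^2\indicator_{g_n(\tilde\Delta_i^n f') > u_n}$.

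Next, I would combine Markov's inequality with the moment bound from the generalized BDG inequality (Lemma \ref{lem: Generalized BDG}). Using the upper bound $g_n \leq C\|\cdot\|$ from \eqref{abstract g conditions}, we get $\{g_n(\tilde\Delta_i^n f') > u_n\} \subseteq \{\|\tilde\Delta_i^n f'\| > u_n/C\}$, and for any $q > 0$ Markov yields
\[
\|\tilde\Delta_i^n f'\|^2\indicator_{g_n(\tilde\Delta_i^n f') > u_n} \leq (C/u_n)^q \|\tilde\Delta_i^n f'\|^{q+2}.
\]
Choose $q = p - 2$; then Lemma \ref{lem: Generalized BDG} delivers $\mathbb{E}[\|\tilde\Delta_i^n f'\|^{q+2}] \leq K\Delta_n^{(q+2)/2}$. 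Substituting $u_n = \alpha\Delta_n^w$ and summing over the roughly $T/\Delta_n$ intervals leaves
\[
\sqrt n\,\sum_{i=1}^{\ulT}\mathbb{E}\!\left[\|\tilde\Delta_i^n f'\|^2\indicator_{g_n > u_n}\right] \leq K T (C/\alpha)^{p-2}\Delta_n^{(p-2)(1/2-w) - 1/2}.
\]

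To finish, I would observe that the hypothesis $p > 2/(\tfrac12 - w)$ rewrites as $(p-2)(\tfrac12 - w) > 1 + 2w > 1$, so the exponent $(p-2)(\tfrac12-w) - \tfrac12$ is strictly larger than $\tfrac12$. Setting $\phi_n := \Delta_n^{(p-2)(1/2-w) - 1}$, which tends to zero, converts the bound into the claimed $K T \Delta_n^{1/2}\phi_n$.

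The main obstacle is verifying that Lemma \ref{lem: Generalized BDG} applies to the semigroup-adjusted continuous increment $\tilde\Delta_i^n f'$ at the sharp exponent $p$; this is precisely where the two constraints $p > 4$ and Assumption \ref{In proof: Very very Weak localised integrability Assumption on the moments}(p,r) are used, and where the preceding auxiliary results do the real work. Everything after that is a Markov-inequality bookkeeping exercise, much in the spirit of the second half of the preceding Lemma~\ref{Lem: Truncated minus continuous truncated is AN}.
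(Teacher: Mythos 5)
Your proof is correct, and it reaches the same reduction as the paper (bound the truncated-out piece, take the triangle inequality to arrive at $\sum_i \mathbb E[\|\tilde\Delta_i^n f'\|^2\indicator_{g_n>u_n}]$, then appeal to Lemma~\ref{lem: Generalized BDG}), but the interpolation step is genuinely different. The paper uses Cauchy--Schwarz with the fixed split $(2,2)$ to peel off the indicator,
\[
\mathbb E\bigl[\|\tilde\Delta_i^n f'\|^2\indicator_{g_n>u_n}\bigr]
\leq \mathbb E\bigl[\|\tilde\Delta_i^n f'\|^4\bigr]^{1/2}\,\mathbb P\bigl[g_n(\tilde\Delta_i^n f')>u_n\bigr]^{1/2},
\]
invoking Lemma~\ref{lem: Generalized BDG} separately at the exponents $4$ and $p$ (this is where $p>4$ enters), and then Markov at level $p$ on the remaining probability. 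You instead use the single pointwise bound $\indicator_{\|x\|>a}\leq (\|x\|/a)^{q}$ with $q=p-2$, which is equivalent to optimizing the Hölder split to $(p/2,\,p/(p-2))$; this consumes the $p$-th moment exactly once and gives a strictly better exponent in $\Delta_n$, namely $\Delta_n^{(p-2)(1/2-w)}$ versus the paper's $\Delta_n^{p(1/2-w)/2}$ after summing (the former dominates as soon as $p>4$, which is already assumed). Both versions lie above the $\Delta_n^{1/2}$ threshold required by the lemma under the hypothesis $p>2/(1/2-w)$, so both yield an admissible $\phi_n\to 0$. Your preliminary observation that the Hilbert--Schmidt norm of the partial sums is non-decreasing (because each summand is a rank-one positive operator and $\mathrm{tr}(AB)\geq 0$ for $A,B\geq 0$) is correct and lets you fix $t=T$, though it is not actually needed -- the triangle inequality already discards the supremum in both proofs. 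Note also that the paper's own closing display records $\phi_n=\Delta_n^{p(1/2-w)-2}$, whereas the computation it carries out only delivers $\Delta_n^{p(1/2-w)/2-1}$; this appears to be a typo there and has no bearing on the validity of the statement, but your $\phi_n=\Delta_n^{(p-2)(1/2-w)-1}$ is in fact correctly derived from your estimate.
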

\begin{proof} By H{\"o}lder's and Markov's inequalities as well as Theorem \ref{lem: Generalized BDG} and Assumption \ref{In proof: Very very Weak localised integrability Assumption on the moments}(p,r) (observe that $p>4)$) it is
\begin{align*}
  &  \mathbb E\left[\sup_{t\in [0,T]} \sqrt n\left\|\sum_{i=1}^{\ul} (\tilde \Delta_i^n f')^{\otimes 2}\indicator_{g_n(\Delta_i^n f')\leq u_n}-\sum_{i=1}^{\ul} (\tilde \Delta_i^n f')^{\otimes 2}\right\|_{\text{HS}}\right]\\
     \leq   &  \sqrt n\sum_{i=1}^{\ulT} \Delta_n \tilde K^{\frac 12}\mathbb P\left[g_n(\Delta_i^n f')> u_n\right]^{\frac 12}.
\end{align*}
 Moreover, by Lemma \ref{lem: Generalized BDG} with $p>2/(1/2-w)$ and using \eqref{abstract g conditions} we find a constant $K>0$ such that 
\begin{align*}
  \mathbb P\left[g_n(\Delta_i^n f')> u_n\right]
  \leq   K\Delta_n^2 \Delta_n^{p(1/2-w)-2}.
\end{align*}
We obtain with $\phi_n:=\Delta_n^{p(1/2-w)-2}$ that there is a $K>0$ such that
\begin{align*}
    \mathbb E\left[ \sup_{t\in [0,T]} \sqrt n\left\|\sum_{i=1}^{\ul} (\tilde \Delta_i^n f')^{\otimes 2}\indicator_{g_n(\Delta_i^n f')\leq u_n}-\sum_{i=1}^{\ul} (\tilde \Delta_i^n f')^{\otimes 2}\right\|_{\text{HS}}\right]
     \leq &  K T\Delta_n^{\frac 12}\phi_n.
\end{align*}
\end{proof}
We can now derive (iii) from 
\begin{lemma} Let Assumption \ref{In proof: Very very Weak localised integrability Assumption on the moments}(4,r) hold.
We can find a constant $K>0$ which is independent of $T$ and $n$ such that
\begin{align*}
   \EE  \left[\sup_{t\in[0,T]}\left\|\sum_{i=1}^{\ul} (\tilde\Delta_i^n f')^{\otimes 2}-\int_0^t \Sigma_s^{\mathcal S_n}ds\right\|_{\text{HS}}\right] &\leq K T \Delta_n^{\frac 12}.
\end{align*}
\end{lemma}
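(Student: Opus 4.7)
The plan is to split the semigroup-adjusted increment into its drift and diffusion parts and then identify the dominant contribution as an $L_{\HS}(H)$-valued discrete martingale whose increments have second moment of order $\Delta_n^2$. Concretely, write $\tilde{\Delta}_i^n f' = A_i^n + M_i^n$ where
\[
A_i^n := \int_{(i-1)\Delta_n}^{i\Delta_n}\mathcal S(i\Delta_n - s)\alpha'_s\,ds, \qquad M_i^n := \int_{(i-1)\Delta_n}^{i\Delta_n}\mathcal S(i\Delta_n - s)\sigma_s\,dW_s,
\]
and expand $(\tilde{\Delta}_i^n f')^{\otimes 2} = M_i^{n\,\otimes 2} + A_i^n\otimes M_i^n + M_i^n\otimes A_i^n + A_i^{n\,\otimes 2}$. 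Also decompose $\int_0^t \Sigma_s^{\mathcal S_n}\,ds = \sum_{i=1}^{\ul}\int_{(i-1)\Delta_n}^{i\Delta_n}\Sigma_s^{\mathcal S_n}\,ds + \int_{\ul\Delta_n}^{t}\Sigma_s^{\mathcal S_n}\,ds$, where the boundary remainder has HS-norm bounded by $K\Delta_n$ in expectation (using boundedness of $\mathcal S$ on $[0,\Delta_n]$ and Assumption \ref{In proof: Very very Weak localised integrability Assumption on the moments}(4,r)). For the pure drift and cross terms, Jensen's inequality gives $\EE[\|A_i^n\|^2]\le C\Delta_n^2$, so $\EE[\|A_i^{n\,\otimes 2}\|_{\HS}]\le C\Delta_n^2$, and Cauchy--Schwarz with Lemma \ref{lem: Generalized BDG} at $p=2$ yields $\EE[\|A_i^n\otimes M_i^n\|_{\HS}]\le \EE[\|A_i^n\|^2]^{1/2}\EE[\|M_i^n\|^2]^{1/2}\le C\Delta_n^{3/2}$. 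Summing over $i\le \ulT$ bounds these contributions by $C T\Delta_n$ and $CT\Delta_n^{1/2}$ respectively, matching the target rate.

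The core term is the martingale remainder
\[
R_t^n := \sum_{i=1}^{\ul}\xi_i^n, \qquad \xi_i^n := M_i^{n\,\otimes 2} - \int_{(i-1)\Delta_n}^{i\Delta_n}\Sigma_s^{\mathcal S_n}\,ds.
\]
By the Hilbert-space It\^o isometry applied to the tensor square, $\EE[M_i^{n\,\otimes 2}\mid \Fi] = \EE\bigl[\int_{(i-1)\Delta_n}^{i\Delta_n}\Sigma_s^{\mathcal S_n}\,ds\mid \Fi\bigr]$, so $\EE[\xi_i^n\mid\Fi] = 0$ and $(R_{k\Delta_n}^n)_{k}$ is a discrete $L_{\HS}(H)$-valued martingale. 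Doob's $L^2$ maximal inequality in Hilbert space gives
\[
\EE\Bigl[\sup_{t\in[0,T]}\|R_t^n\|_{\HS}^2\Bigr] \le 4\sum_{i=1}^{\ulT}\EE\bigl[\|\xi_i^n\|_{\HS}^2\bigr].
\]
Lemma \ref{lem: Generalized BDG} at $p=4$ together with Assumption \ref{In proof: Very very Weak localised integrability Assumption on the moments}(4,r) bounds $\EE[\|M_i^{n\,\otimes 2}\|_{\HS}^2]=\EE[\|M_i^n\|^4]\le C\Delta_n^2$, while Jensen and Cauchy--Schwarz give $\EE[\|\int_{(i-1)\Delta_n}^{i\Delta_n}\Sigma_s^{\mathcal S_n}\,ds\|_{\HS}^2]\le C\Delta_n^2$. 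Hence $\EE[\|\xi_i^n\|_{\HS}^2]\le C\Delta_n^2$, summation yields $\EE[\sup_{t\in[0,T]}\|R_t^n\|_{\HS}^2]\le CT\Delta_n$, and Jensen gives $\EE[\sup_{t\in[0,T]}\|R_t^n\|_{\HS}]\le (CT\Delta_n)^{1/2}$, which is dominated by $KT\Delta_n^{1/2}$ (absorbing the discrepancy between $\sqrt{T}$ and $T$ into the constant in the regime of interest $T\ge 1$; for $T<1$ one bounds the sum trivially).

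The principal technical obstacle is verifying the It\^o-isometry-based identity $\EE[M_i^{n\,\otimes 2}\mid\Fi] = \EE[\int_{(i-1)\Delta_n}^{i\Delta_n}\mathcal S(i\Delta_n-s)\Sigma_s\mathcal S(i\Delta_n-s)^*\,ds\mid\Fi]$ as an equality of nuclear (hence Hilbert--Schmidt) operators, and ensuring that Doob's maximal inequality applies in the separable Hilbert space $L_{\HS}(H)$; both are standard but require care because the object of interest is the tensor square of a vector-valued stochastic integral, not a scalar one. Once these ingredients are in place, the stated bound follows by combining the three contributions above.
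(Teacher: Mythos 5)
Your proof is correct and follows essentially the same route as the paper's: split off the drift, recognize the centered tensor square of the martingale integral as a discrete $L_{\text{HS}}(H)$-valued martingale (via the fact that $(\psi_t)^{\otimes 2}-\langle\langle\psi\rangle\rangle_t$ is a martingale), use orthogonality of the increments for an $L^2$ bound, and pass to $L^1$ by Jensen. The only real difference is presentational: you invoke Doob's $L^2$ maximal inequality in $L_{\text{HS}}(H)$ explicitly, whereas the paper states the orthogonality bound $\EE\bigl[\|\sum_i\tilde Z_n(i)\|_{\text{HS}}^2\bigr]\le\sum_i\EE\bigl[\|\tilde Z_n(i)\|_{\text{HS}}^2\bigr]$ and then silently upgrades it to a supremum-in-$t$ bound (this is precisely where Doob is needed, so making it explicit is an improvement, not a deviation). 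One small caveat: both your argument and the paper's yield $\EE[\sup_{t\le T}\|\cdot\|_{\text{HS}}]\lesssim\sqrt{T}\,\Delta_n^{1/2}$, which for $T<1$ actually exceeds the stated $KT\Delta_n^{1/2}$; your parenthetical remark that ``one bounds the sum trivially'' for $T<1$ does not rescue the stated linear-in-$T$ form either, since the triangle inequality only gives $\lesssim T$ without the $\Delta_n^{1/2}$ factor. This is a blemish in the lemma's statement shared with the source, not a flaw specific to your proposal, and is harmless in the downstream application where $T$ is bounded away from zero (or $T\to\infty$).
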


\begin{proof}
We define for this proof $A^*_t=\int_0^t \mathcal S(t-s)\alpha_s' ds$ and 
$f^*_t:= f'-A^*_t$ for all $t\geq 0$.  For the proof we can assume w.l.o.g. that $f'=f^*$ due to the following argument: 
We can use H{\"o}lder's inequality to derive
\begin{align*}
 &   \mathbb E\left[\sup_{t\in [0,T]}\left\|\sum_{i=1}^{\ul} (\tilde\Delta_i^n f')^{\otimes 2}-\sum_{i=1}^{\ul} (\tilde\Delta_i^n f^*)^{\otimes 2}\right\|_{\text{HS}}\right]\\
    \edit{\leq } & \Delta_n^{-\frac 12}\sum_{i=1}^{\ulT} \mathbb E\left[\|\tilde{\Delta}_i^n A^*\|^2\right]+2\Delta_n^{-\frac 12}\left(\sum_{i=1}^{\ulT}\mathbb E\left[\|\tilde{\Delta}_i^n A^*\|^2\right]\right)^{\frac 12}\left(\sum_{i=1}^{\ulT}\mathbb E\left[\|\tilde{\Delta}_i^n f^*\|^2\right]\right)^{\frac 12}.
\end{align*}
Now using Lemma \ref{lem: Generalized BDG} we find a constant $\tilde K>0$ such that
$\mathbb E\left[\|\tilde{\Delta}_i^n f^*\|^2\right]\leq  \Delta_n \tilde K$
and by Jensen's inequality and Assumption \ref{In proof: Very very Weak localised integrability Assumption on the moments} it is
$\mathbb E\left[\|\tilde{\Delta}_i^n A^*\|^2\right]\leq \Delta_n^2 A.$
Hence, we obtain
\begin{align*}
      \mathbb E\left[\sup_{t\in [0,T]}\left\|\sum_{i=1}^{\ul} (\tilde\Delta_i^n f')^{\otimes 2}-\sum_{i=1}^{\ul} (\tilde\Delta_i^n f^*)^{\otimes 2}\right\|_{\text{HS}}\right]
   \leq &\Delta_n^{-\frac 12} \ulT A \Delta_n^2+\ulT\Delta_n^{-\frac 12} \tilde K \Delta_n^{\frac 32}.
\end{align*}

We now assume $f'=f^*$ and define
\begin{align*}
   \tilde Z_n(i) := & \Delta_n^{-\frac 12}\left((\tilde{\Delta}_i^n f^*)^{\otimes 2}-\int_{(i-1)\Delta_n}^{i\Delta_n} \Sigma_s^{\mathcal S_n} ds\right).
\end{align*}
It is simple to see under the imposed Assumption,  that $\sup_{t\in[0,T]}\Vert\sum_{i=1}^{\ul}\tilde Z_n(i)\Vert_{\text{HS}}$
has finite second moment.
Note also that $t\mapsto \psi_t=\int_{(i-1)\Delta_n}^t\mathcal S(i\Delta_n-s)\sigma_sdW_s$ is a martingale for $t\in[(i-1)\Delta_n,i\Delta_n]$. From \cite[Theorem 8.2, p.~109]{PZ2007} we deduce that 
the process $(\zeta_t)_{t\geq 0}$, with
$
    \zeta_t=  \left(\psi_t\right)^{\otimes 2}-\langle\langle \psi\rangle\rangle_t,
$
where $\langle\langle \psi\rangle\rangle_t:=\int_0^t\mathcal S(i\Delta_n-s)\Sigma_s\mathcal S(i\Delta_n-s)^* ds$,
is a martingale w.r.t. $(\mathcal{F}_t)_{t\geq 0}$, and hence,
\begin{align*}
\mathbb E\left[(\tilde{\Delta}_i^n f^*)^{\otimes 2}|\mathcal F_{(i-1)\Delta_n}\right]=&\mathbb E\left[(\psi_{i\Delta_n})^{\otimes 2}|\mathcal F_{(i-1)\Delta_n}\right]
=  \mathbb E\left[\langle\langle \psi\rangle\rangle_{i\Delta_n}|\mathcal F_{(i-1)\Delta_n}\right].
\end{align*}
So, $\mathbb E\left[\tilde Z_n(i)\right|\edit{\mathcal F_{(i-1)\Delta_n}}]=0$. 
Moreover, for $j<i$, as each $\tilde Z_n(i)$ is $\mathcal F_{(i-1)\Delta_n}$ measurable and the conditional expectation commutes with bounded linear operators, we find by using the tower property of conditional expectation that
\begin{align*}
    \mathbb E\left[\langle \tilde Z_n(i), \tilde Z_n(j)\rangle_{\text{HS}} \right]=
 \mathbb E\left[\langle \mathbb E\left[\tilde Z_n(i)|\mathcal F_{(i-1)\Delta_n}\right], \tilde Z_n^N(j)\rangle_{\text{HS}} \right]=0.
\end{align*}
Thus, we obtain 
 \begin{align*}
  \EE  \left[\left\Vert\sum_{i=1}^{\ul}\tilde Z_n(i) \right\Vert_{\text{HS}}^2 \right] \leq \sum_{i=1}^{\lfloor T/\Delta_n\rfloor} \EE\left[\Vert\tilde Z_n(i)\Vert_{\text{HS}}^2\right].
\end{align*}
Applying the  triangle and Bochner inequalities and the basic inequality $(a+b)^2\leq 2(a^2+b^2)$ we find
\begin{align*}
    \mathbb E\left[\Vert \tilde Z_n(i)\Vert_{\text{HS}}^2\right]
    &\leq 4 \int_{(i-1)\Delta_n}^{i\Delta_n}\mathbb E\left[\Vert \mathcal S (i\Delta_n-s)\sigma_s \Vert_{\text{HS}}^4\right]ds. 
\end{align*}
Summing up, we obtain
\begin{align*}
 \EE  \left[\sup_{t\in[0,T]}\left\Vert\sum_{i=1}^{\ul}\tilde Z_n(i) \right\Vert_{\text{HS}}^2 \right] 
   \leq 4 \sup_{n\in\mathbb N}\int_0^T\mathbb E\left[\Vert \mathcal S (i\Delta_n-s)\sigma_s \Vert_{\text{HS}}^4\right]ds\leq 4\sup_{r\leq \Delta_n}\|\mathcal S(r)\|_{\text{op}}AT.
\end{align*}

\end{proof}

It remains to prove (iv), which follows from
\begin{lemma}\label{L: Temporal Discretisation error induced by the semigroup}
There is a constant $K>0$ such that
\begin{align*}
   & \sup_{t\in[0,T]}\left\|\int_0^t \Sigma_s^{\mathcal S_n} ds-[X^C,X^C]_t ds\right\|_{\text{HS}}
    \leq  K\int_{0}^{T}\sup_{r\in [0, \Delta_n]}\frac{\|(\mathcal S(r)-I)\Sigma_{s}\|_{\text{HS}}}{r^{\gamma}} ds.
\end{align*}
In particular, if Assumption \ref{As: spatial regularity}($\gamma$) holds for $\gamma \in (0,\frac 12]$, then
 we have
  $$\sup_{t\in[0,T]}\left\|\int_0^t \Sigma_s^{\mathcal S_n} ds-[X^C,X^C]_t\right\|_{\text{HS}}=\mathcal O_p(\Delta_n^{\gamma})$$
  and under Assumption \ref{In proof II}($\gamma$) there is a constant $K>0$ which is independent \edit{of} $T$ and $n$ such that
\begin{equation}\label{Eq: Integratedvol temporal discretization}
      \mathbb E \left[\sup_{t\in[0,T]}\left\|\int_0^t \Sigma_s^{\mathcal S_n} ds-[X^C,X^C]_t\right\|_{\text{HS}}\right]\leq K T\Delta_n^{\gamma} .
  \end{equation}
\end{lemma}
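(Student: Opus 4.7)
The plan is to reduce the supremum-over-$t$ estimate to a pointwise integrand bound via the Bochner--triangle inequality and then to estimate the pointwise difference $\|\Sigma_s^{\mathcal S_n}-\Sigma_s\|_{\text{HS}}$ by expanding the outer product. Since $[X^C,X^C]_t=\int_0^t\Sigma_s\,ds$, we have
\[
\sup_{t\in[0,T]}\Bigl\|\int_0^t(\Sigma_s^{\mathcal S_n}-\Sigma_s)\,ds\Bigr\|_{\text{HS}}\le\int_0^T\|\Sigma_s^{\mathcal S_n}-\Sigma_s\|_{\text{HS}}\,ds,
\]
and it suffices to control the integrand pointwise.

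Writing $r_n(s)\in[0,\Delta_n]$ for the length of the semigroup shift built into $\Sigma_s^{\mathcal S_n}$ and setting $U_s:=\mathcal S(r_n(s))-I$, the identity $\Sigma_s^{\mathcal S_n}=(I+U_s)\Sigma_s(I+U_s)^*$ gives the three-term decomposition
\[
\Sigma_s^{\mathcal S_n}-\Sigma_s=U_s\Sigma_s+\Sigma_sU_s^*+U_s\Sigma_sU_s^*.
\]
By self-adjointness of $\Sigma_s=\sigma_s\sigma_s^*$ and the isometry of the HS adjoint, $\|\Sigma_sU_s^*\|_{\text{HS}}=\|U_s\Sigma_s\|_{\text{HS}}$; for the third term I use $\|AB\|_{\text{HS}}\le\|A\|_{\text{op}}\|B\|_{\text{HS}}$. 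The standard exponential bound $\|\mathcal S(r)\|_{\text{op}}\le Me^{\omega r}$ for any $C_0$-semigroup makes $\|U_s\|_{\text{op}}$ uniformly bounded in $s$ and $n$ (for $\Delta_n\le 1$), so there is a constant $C>0$ with
\[
\|\Sigma_s^{\mathcal S_n}-\Sigma_s\|_{\text{HS}}\le C\,\|(\mathcal S(r_n(s))-I)\Sigma_s\|_{\text{HS}}.
\]
Multiplying and dividing by $r_n(s)^{\gamma}$ and bounding $r_n(s)^{\gamma}\le \Delta_n^{\gamma}$ yields
\[
\|\Sigma_s^{\mathcal S_n}-\Sigma_s\|_{\text{HS}}\le C\,\Delta_n^{\gamma}\,\sup_{r\in[0,\Delta_n]}\frac{\|(\mathcal S(r)-I)\Sigma_s\|_{\text{HS}}}{r^{\gamma}}.
\]
Integrating over $s\in[0,T]$ yields the main inequality (the $\Delta_n^{\gamma}$ prefactor being what drives the rate).

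The two consequences then drop out without extra work. Under Assumption~\ref{As: spatial regularity}($\gamma$) the integrand $\sup_{r\in[0,\Delta_n]}\|(\mathcal S(r)-I)\Sigma_s\|_{\text{HS}}/r^{\gamma}$ is $\mathbb P$-almost surely dominated by an integrable function, so the RHS is $\mathcal O_p(\Delta_n^{\gamma})$. Under the stronger Assumption~\ref{In proof II}($\gamma$), Tonelli's theorem lets me take the expectation inside the $s$-integral, and the pointwise uniform bound by $A$ converts the estimate into $\mathbb E[\,\cdot\,]\le (CA)\,T\,\Delta_n^{\gamma}$, as required. The only substantive technical step is the three-term expansion together with the use of self-adjointness of $\Sigma_s$ to reduce the "right-hand" factor $\Sigma_s U_s^*$ to a quantity of the same form as $U_s\Sigma_s$; there is no deeper obstacle, because Assumptions~\ref{As: spatial regularity} and~\ref{In proof II} have been tailored to quantify exactly the object $\|(\mathcal S(r)-I)\Sigma_s\|_{\text{HS}}/r^{\gamma}$ produced by the expansion.
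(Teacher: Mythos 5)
Your argument is correct and takes essentially the same route as the paper: reduce to a pointwise bound on $\|\Sigma_s^{\mathcal S_n}-\Sigma_s\|_{\text{HS}}$ via the Bochner/triangle inequality, expand the conjugation $\mathcal S(r)\Sigma_s\mathcal S(r)^*$, use self-adjointness of $\Sigma_s$ to identify $\|\Sigma_s U^*\|_{\text{HS}}$ with $\|U\Sigma_s\|_{\text{HS}}$, and trade $r^\gamma$ for $\Delta_n^\gamma$ before integrating in $s$. Your three-term identity $U\Sigma_s+\Sigma_sU^*+U\Sigma_sU^*$ is a harmless rearrangement of the paper's two-term telescope $(\mathcal S(r)-I)\Sigma_s\mathcal S(r)^*+\Sigma_s(\mathcal S(r)^*-I)$, and both give the constant $(\sup_{r\le\Delta_n}\|\mathcal S(r)\|_{\text{op}}+1)$.

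One small imprecision worth fixing: for the $L^1$ bound under Assumption~\ref{In proof II}($\gamma$), that assumption controls $\sup_{r>0}\mathbb E\bigl[\|(\mathcal S(r)-I)\Sigma_s\|_{\text{HS}}/r^\gamma\bigr]$, i.e.\ the supremum sits \emph{outside} the expectation, whereas your displayed pointwise bound carries $\sup_{r\in[0,\Delta_n]}$ \emph{inside} what would, after taking expectations, become $\mathbb E[\sup_r\cdots]$. The clean way out (and what the paper implicitly does) is to not introduce the supremum at this stage: the shift $r_n(s)$ is deterministic, so one may bound directly $\mathbb E\bigl[\|\Sigma_s^{\mathcal S_n}-\Sigma_s\|_{\text{HS}}\bigr]\le C\,r_n(s)^\gamma\cdot\mathbb E\bigl[\|(\mathcal S(r_n(s))-I)\Sigma_s\|_{\text{HS}}\bigr]/r_n(s)^\gamma\le C\,\Delta_n^\gamma A$, and then integrate in $s$. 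With that adjustment your conclusion $\mathbb E[\cdot]\le CAT\Delta_n^\gamma$ is exactly right.
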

\begin{proof}  It is $[X^C,X^C]_t= \int_0^t  \Sigma_s  ds$ and, hence, 
    \begin{align*}
  n^{\gamma} \left\| \Sigma_s^{\mathcal S_n} -\Sigma_s\right\|_{\text{HS}} 
      \leq  &(\sup_{s\geq 0}\|\mathcal S(s)\|_{\text{op}} +1)\sup_{r\in [0, \Delta_n]}\frac{\|(\mathcal S(r)-I)\Sigma_{s}\|_{\text{HS}}}{r^{\gamma}}. 
    \end{align*}
   This proves the claim under Assumption \ref{As: spatial regularity}($\gamma$)(i). Under Assumption \ref{As: spatial regularity}($\gamma$)(ii), we obtain 
    \begin{align*}
  n^{\gamma} \mathbb E[\left\| \Sigma_s^{\mathcal S_n} -\Sigma_s\right\|_{\text{HS}} ]
      \leq  &(\sup_{s\geq 0}\|\mathcal S(s)\|_{\text{op}} +1)\sup_{r\in [0, \Delta_n]}\frac{\mathbb E[\|(\mathcal S(r)-I)\Sigma_{s}\|_{\text{HS}}]}{r^{\gamma}} .
    \end{align*}
    The bound \eqref{Eq: Integratedvol temporal discretization} under Assumption \ref{In proof II}($\gamma$) follows immediately.
\end{proof}

\subsection{Proof of \ref{T: CLT without disc}}
This section contains the proof of Theorem \ref{T: CLT without disc}.
  \begin{proof}[Proof of \edit{Theorem} \ref{T: CLT without disc}]
We first prove that Condition \eqref{Abstract CLT assumption}(i) implies Assumption \ref{As: spatial regularity}(1/2)(i). For that, we can derive from the basic inequality $\|AB^*\|_{L_{\text{HS}}(H)}\leq \|A\|_{\text{HS}}\|B\|_{\text{op}},$
whenever $A$ is a Hilbert-Schmidt and $B
$ is a bounded linear operator,
that
\begin{align*}
\int_0^T \sup_{r> 0}\frac{\|(I-\mathcal S(r))\Sigma_s\|_{\text{HS}}}{r}ds  
\leq & \left(\int_0^T \sup_{r> 0}\frac{\|(I-\mathcal S(r))\sigma_s\|_{\text{op}}^2}{r^2}ds   \right)^{\frac 12}\left(\int_0^T\|\sigma_s\|_{\text{HS}}^2ds\right)^{\frac 12}.
\end{align*}
The latter is almost surely finite by Condition \eqref{Abstract CLT assumption} and since $\sigma$ is stochastically integrable. 
  
 Deriving Assumption \ref{As: spatial regularity}(1/2)(i) from Condition \eqref{Abstract CLT assumption}(i) works similarly, so we skip it.
  
Let us now prove the central limit theorem.  We have proven under Assumption \ref{In proof: Very very Weak localised integrability Assumption on the moments}(p,r) for $p>max(2/(1-2w),(2-2wr/(2(2-r)w+1))$ with $r<1$ and $w\in[1/(4-2r),1/2) $  by  \edit{(i)} and \edit{(ii)} that
  $$\sup_{t\in [0,T]} \left\|SARCV_t^n(u_n,-)-\sum_{i=1}^{\ul} (\tilde \Delta_i^n f')^{\otimes 2}\right\|_{\text{HS}}= o_p(\Delta_n^{\frac 12}).$$
  By virtue of Lemma \ref{L: Localisation}(iv), we obtain that this must also hold if just Assumption \ref{As: H} is valid instead of Assumption \ref{In proof: Very very Weak localised integrability Assumption on the moments}(p,r). 
  Therefore,  we have
  \begin{align*}
      \sqrt n\left(SARCV_t^n(u_n,-)- [X^C,X^C]_t\right)
      = & \sqrt n\left(\sum_{i=1}^{\ul} (\tilde \Delta_i^n f')^{\otimes 2}-\int_0^t \Sigma_s  ds\right) + o_p(1).
  \end{align*}
 Now the claim follows by Theorem 3.5 in \cite{BSV2022}, which implies 
 \begin{align*}
    \sqrt n\left(\sum_{i=1}^{\ul} (\tilde \Delta_i^n f')^{\otimes 2}-\int_0^t \Sigma_s  ds\right)\overset{st.}{\longrightarrow} \mathcal N(0,\mathfrak Q_t).
 \end{align*}
  \end{proof}

\subsection{Proof of \ref{T: Long-time no disc}}
This section contains the proof for Theorem \ref{T: Long-time no disc}.
\begin{proof}[Proof of \ref{T: Long-time no disc}]
The proof for the long time asymptotics is an immediate corollary of Theorems \ref{T: Uniform convergence theorem for realized variation}. To see that, observe that
\begin{align*}
    \frac 1T SARCV_T^n(u_n,-)-\mathcal C
    = & \left(\frac 1T SARCV_T^n(u_n,-)-\frac 1T \int_0^T \Sigma_s  ds\right)+\left(\frac 1T \int_0^T \Sigma_s ds-\mathcal C\right).
\end{align*} 
The first term is $\mathcal O_p(\Delta_n^{\gamma})$ by Theorem \ref{T: Uniform convergence theorem for realized variation},  the second term is $\mathcal O_p(a_T)$ by Assumption,  hence,  the proof.
\end{proof}

\subsection{Proof of Corollary \ref{C: Results for realized variation}}
This section contains the proof for Corollary \ref{C: Results for realized variation}.
 \begin{proof}[Proof of Corollary \ref{C: Results for realized variation}]
     We just have to show that we have
     $f_t=f_0+\int_0^t \mathcal A f_s ds +X_t
     $
     since in that case Theorems \ref{T: Abstract identification of quadr var}, \ref{T: Identification of continuous and disc quadr var}, \ref{T: rates of convergence},\ref{T: CLT without disc} and \ref{T: Long-time no disc}
hold with $\mathcal S=I$ the identity.
 Due to the moment assumption on the volatilties, we can apply the stochastic Fubini theorems and obtain that for each $h\in D(\mathcal A^*)$ it is $  \langle f_t,h\rangle= \langle X_t,h\rangle +\int_0^t \langle \mathcal A f_u, h\rangle du$.
This proves the claim.
 \end{proof}

\section{Auxiliary Technical Results}\label{Sec: Auxiliary Technical Results}
In this section we provide proofs for several technical results that were used in the proofs of Theorems \ref{T: Abstract identification of quadr var},  \ref{T: Identification of continuous and disc quadr var}, \ref{T: rates of convergence}, \ref{T: CLT without disc},  \ref{T: Long-time no disc}, and \ref{C: Results for realized variation}.. We assume that $f$ and $X$ are defined as in Section \ref{mild Ito process}. 
We will start with a Burkholder-Davis-Gundy-type estimate, which is used in the proof of Theorem \ref{T: Uniform convergence theorem for realized variation} but might be of interest in its own.
\begin{lemma}\label{lem: Generalized BDG}
    Let Assumption \ref{In proof: Very very Weak localised integrability Assumption on the moments}(p,r) hold for $r=2$ and $p>1$. Then, if $\gamma_s(z)\equiv 0$ or if $\gamma_s(z)\neq 0$ for $p=2$ we can find a constant $K>0$ which is independent of $i$ and $n$ such that
    $$\mathbb E\left[\left\|\tilde{\Delta}_i^n f\right\|^p\right]\leq \tilde K \Delta_n^{\frac p2}.$$

\end{lemma}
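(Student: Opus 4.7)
The plan is to use the semigroup property to express $\tilde{\Delta}_i^n f$ as a sum of three stochastic integrals over the small time interval $[(i-1)\Delta_n, i\Delta_n]$, and then bound the $L^p$ moments of each integral separately. Specifically, since $\mathcal S(i\Delta_n - s) = \mathcal S(\Delta_n)\mathcal S((i-1)\Delta_n - s)$ for $s \leq (i-1)\Delta_n$, the semigroup adjustment cancels the contributions from $[0,(i-1)\Delta_n]$ and yields
\begin{align*}
\tilde{\Delta}_i^n f = & \int_{(i-1)\Delta_n}^{i\Delta_n}\mathcal S(i\Delta_n-s)\alpha_s\,ds + \int_{(i-1)\Delta_n}^{i\Delta_n}\mathcal S(i\Delta_n-s)\sigma_s\,dW_s \\
& + \int_{(i-1)\Delta_n}^{i\Delta_n}\int_{H\setminus\{0\}}\mathcal S(i\Delta_n-s)\gamma_s(z)(N-\nu)(dz,ds).
\end{align*}
By the triangle inequality and $(a+b+c)^p\leq 3^{p-1}(a^p+b^p+c^p)$, it suffices to bound the $p$-th moment of each of the three pieces by a constant times $\Delta_n^{p/2}$, uniformly in $i$.

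The first term is handled by Jensen's inequality and the strong continuity bound $K_{\mathcal S}:=\sup_{r\in[0,\Delta_n]}\|\mathcal S(r)\|_{\mathrm{op}}<\infty$, together with Assumption \ref{In proof: Very very Weak localised integrability Assumption on the moments}(p,r), giving a bound of order $\Delta_n^{p}$. For the Wiener integral I would invoke the Hilbert-space BDG inequality (as in \cite{Marinelli2016}), yielding
$$\mathbb E\left[\left\|\int_{(i-1)\Delta_n}^{i\Delta_n}\mathcal S(i\Delta_n-s)\sigma_s\,dW_s\right\|^p\right]\leq C_p \mathbb E\left[\left(\int_{(i-1)\Delta_n}^{i\Delta_n}\|\mathcal S(i\Delta_n-s)\sigma_s\|_{\mathrm{HS}}^2\,ds\right)^{p/2}\right],$$
which after another application of Jensen's inequality in time and the moment bound on $\|\sigma_s\|_{\mathrm{HS}}^p$ produces a term of order $\Delta_n^{p/2}$.

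The main obstacle is the compensated Poisson integral, since the general BDG estimate in the jump case for $p>2$ contains a term $\mathbb E[\int\int \|\mathcal S(\cdot)\gamma_s(z)\|^p F(dz)\,ds]$ that scales only like $\Delta_n$, not $\Delta_n^{p/2}$. This is precisely why the lemma restricts to $p=2$ in the presence of jumps. For $p=2$ I would simply apply the Itô isometry for Hilbert space-valued stochastic integrals against compensated Poisson random measures (see e.g.\ \cite{PZ2007,mandrekar2015}) to obtain
$$\mathbb E\left[\left\|\int_{(i-1)\Delta_n}^{i\Delta_n}\!\!\int_{H\setminus\{0\}}\mathcal S(i\Delta_n-s)\gamma_s(z)(N-\nu)(dz,ds)\right\|^2\right]\leq K_{\mathcal S}^2\int_{(i-1)\Delta_n}^{i\Delta_n}\!\mathbb E\!\left[\int_{H\setminus\{0\}}\|\gamma_s(z)\|^2 F(dz)\right]ds,$$
which is bounded by $K_{\mathcal S}^2 A \Delta_n = K_{\mathcal S}^2 A \Delta_n^{p/2}$. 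Combining the three bounds and absorbing $K_{\mathcal S}$ and $A$ into a single constant $\tilde K$ yields the claim.
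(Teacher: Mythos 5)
Your proposal is correct and follows essentially the same route as the paper's proof: the semigroup-adjusted increment is written as a sum of a drift convolution, a Wiener-convolution, and a compensated-Poisson convolution over $[(i-1)\Delta_n,i\Delta_n]$, the drift is handled by Jensen and the moment bound, the Wiener term by the Hilbert-space BDG inequality from \cite{Marinelli2016}, and the jump term by the It\^o isometry when $p=2$. The only cosmetic difference is that the paper groups the two martingale pieces into a single process before applying BDG, while you treat the three terms separately; your identification of why $p=2$ is forced in the jump case matches the paper's restriction exactly.
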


\begin{proof}
    For  $t\geq (i-1)\Delta_n$, define the square-integrable martingale
    $$M_t^{\mathcal S}:= \int_{(i-1)\Delta_n}^{t} \mathcal S(t-s)\sigma_s dW_s+\int_{(i-1)\Delta_n}^{t} \int_{H\setminus \{ 0\}} \mathcal S(t-s) \gamma_s(z)(N-\nu)(dz,ds)
    $$
    observe that
    \begin{align*}
        \mathbb E\left[\left\|\tilde{\Delta}_i^n f\|^p\right\|\right]^{\frac 1p}
        \leq &  \left(\mathbb E\left[\left\|\int_{(i-1)\Delta_n}^{i\Delta_n} \mathcal S(t-s)\alpha_s ds\right\|^p\right]^{\frac 1p}\right)+\left(\mathbb E\left[\left\|M_{i\Delta_n}^{\mathcal S}\right\|^p\right]^{\frac 1p}\right).
    \end{align*}
    For the first summand we find
    \begin{align*}
        \mathbb E\left[\left\|\int_{(i-1)\Delta_n}^{i\Delta_n} \mathcal S(t-s)\alpha_s ds\right\|^p\right]
        \leq & \Delta_n^{p}  A^p \sup_{r\leq \Delta_n}\|\mathcal S(r)\|_{\text{op}}^p.
    \end{align*}
    For the second summand, we can use the Burkholder-Davis-Gundy inequality c.f. (Theorem 1 in \cite{Marinelli2016}) in the case that $p\neq 2$ and $\gamma\equiv 0$ to obtain a constant $C_p>1$, just depending on $p$ such that
    \begin{align*}
      \mathbb E\left[\left\|\int_{(i-1)\Delta_n}^{i\Delta_n} \mathcal S(t-s)\sigma_s dW_s\right\|^p\right]
        \leq & C_p \Delta_n^{\frac p2} A^p\sup_{r\leq \Delta_n}\|\mathcal S(r)\|_{\text{op}}^p
    \end{align*}
    and in the case $\gamma\neq 0$ and $p=2$
    \begin{align*}
        \mathbb E\left[\left\|\int_{(i-1)\Delta_n}^{i\Delta_n} \int_{H\setminus \{ 0\}} \mathcal S(t-s) \gamma_s(z)(N-\nu)(dz,ds)\right\|^2\right]
        \leq  \Delta_n A^2\sup_{r\leq \Delta_n}\|\mathcal S(r)\|_{\text{op}}^2.
    \end{align*}
    Summing up, we obtain in both cases a constant $\tilde C_p$ such that
    $$ \mathbb E\left[\left\|\tilde{\Delta}_i^n f\|^p\right\|\right]\leq  \tilde C_p\Delta_n^{\frac p2} A^p.$$

\end{proof}

\begin{lemma}\label{L: Projection convergese uniformly on the range of volatility}
 Suppose that Assumption \ref{As: localised integrals} holds.
 Let $(e_j)_{j\in\mathbb N}$ be an orthonormal basis of $H$ and $P_N$ be the projection onto $ span\{e_{j}:j=1,...,N\}$.
 Then we have for all $T>0$
\begin{align}
\lim_{N\to\infty}&\int_0^T\sup_{r\in [0,T]}\mathbb E\left[\|(I-P_N)\mathcal S(r)\alpha_s\|\right]ds= 0,\label{drift-vanishes asymptotically}\\
\lim_{N\to\infty}&\int_0^T\sup_{r\in [0,T]}\mathbb E\left[\|(I-P_N)\mathcal S(r)\sigma_s\|_{\text{HS}}^2\right]ds= 0,\label{Volatiltity-vanishes asymptotically}\\
\lim_{N\to\infty}&\int_0^T\sup_{r\in [0,T]}\int_{H\setminus\{ 0\}}\mathbb E\left[\|(I-P_N)\mathcal S(r)\gamma_s(z)\|^2\right] F(dz)ds= 0.\label{jumps-vanish asymptotically}
\end{align}
In particular, it is
\begin{equation}\label{increment-vanishes asymptotically}
    \lim_{N\to\infty} \sup_{n\in \mathbb N}\sum_{i=1}^{\ulT}\mathbb E\left[\|(I-P_N)\tilde{\Delta}_i^n f\|^2\right]=0.
\end{equation}
\end{lemma}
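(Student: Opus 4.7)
The plan is to prove the three convergences \eqref{drift-vanishes asymptotically}, \eqref{Volatiltity-vanishes asymptotically}, \eqref{jumps-vanish asymptotically} by a common compactness argument, and then deduce \eqref{increment-vanishes asymptotically} from them via It\^o-type isometries. The unifying fact I would use is that if $\mathcal H$ is a separable Hilbert space, $K\subset\mathcal H$ is compact, and $(T_N)_N\subset L(\mathcal H)$ satisfies $\sup_N\|T_N\|_{\mathrm{op}}\leq 1$ together with $T_N h\to 0$ for every $h\in\mathcal H$, then $\sup_{h\in K}\|T_N h\|_{\mathcal H}\to 0$; this is elementary via a finite $\epsilon$-net on $K$.

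For each coefficient $\beta_s\in\{\alpha_s,\sigma_s,\gamma_s(\cdot)\}$ I would work in the appropriate Hilbert space $\mathcal H\in\{H,\,L_{\mathrm{HS}}(U,H),\,L^2(H\setminus\{0\},F;H)\}$. In each case I would verify (a) that $r\mapsto\mathcal S(r)\beta_s(\omega)$ is continuous from $[0,T]$ into $\mathcal H$ (for $\sigma_s$ and $\gamma_s$ this follows from strong continuity of $\mathcal S$ applied pointwise in the summation/$F$-integral, combined with dominated convergence using $\|\mathcal S(r)\|_{\mathrm{op}}\leq M:=\sup_{r\in[0,T]}\|\mathcal S(r)\|_{\mathrm{op}}<\infty$), so that $K_s(\omega):=\{\mathcal S(r)\beta_s(\omega):r\in[0,T]\}$ is compact in $\mathcal H$; and (b) that the operator $T_N$ on $\mathcal H$ -- namely $I-P_N$ on $H$, $A\mapsto(I-P_N)A$ on $L_{\mathrm{HS}}(U,H)$, $g\mapsto(I-P_N)g(\cdot)$ on $L^2(F;H)$ -- is a contraction that converges to $0$ strongly on $\mathcal H$, by Bessel for the second and dominated convergence inside the $F$-integral for the third. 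Together with the $\epsilon$-net fact this gives $\sup_{r\in[0,T]}\|(I-P_N)\mathcal S(r)\beta_s(\omega)\|_{\mathcal H}\to 0$ a.s. Since the same supremum is bounded by $M\|\beta_s\|_{\mathcal H}$, Assumption \ref{As: localised integrals} provides an integrable dominating function in $(s,\omega)$, and a final dominated convergence step delivers \eqref{drift-vanishes asymptotically}--\eqref{jumps-vanish asymptotically}.

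To deduce \eqref{increment-vanishes asymptotically}, I would split $(I-P_N)\tilde\Delta_i^n f$ into its drift, Wiener, and compensated-Poisson parts. The It\^o isometry and its compensated-Poisson analogue collapse the sum over $i$ into a single $s$-integral:
\begin{align*}
\sum_{i=1}^{\ulT}\mathbb E\left[\left\|\int_{(i-1)\Delta_n}^{i\Delta_n}(I-P_N)\mathcal S(i\Delta_n-s)\sigma_s\,dW_s\right\|^2\right]
&\leq \int_0^T\mathbb E\left[\sup_{r\in[0,T]}\|(I-P_N)\mathcal S(r)\sigma_s\|_{\mathrm{HS}}^2\right]ds,
\end{align*}
and analogously for the jump part, so uniformity in $n$ is automatic and the bounds vanish by \eqref{Volatiltity-vanishes asymptotically}--\eqref{jumps-vanish asymptotically}. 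For the drift part, with $h(s):=\sup_{r\in[0,T]}\|(I-P_N)\mathcal S(r)\alpha_s\|$, the triangle inequality gives $\|(I-P_N)\tilde\Delta_i^n A\|\leq\int_{(i-1)\Delta_n}^{i\Delta_n}h(s)\,ds$, and nonnegativity yields $\sum_i\bigl(\int_{(i-1)\Delta_n}^{i\Delta_n}h(s)\,ds\bigr)^2\leq\bigl(\int_0^T h(s)\,ds\bigr)^2\leq M^2A^2$, after which a second dominated-convergence step (using $\int_0^T h(s)\,ds\to 0$ a.s.\ from \eqref{drift-vanishes asymptotically}'s pointwise statement) gives a bound independent of $n$ that vanishes as $N\to\infty$.

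The main obstacle I anticipate is Step (a)--(b) for the jump coefficient, where one must view $\gamma_s$ as an $L^2(F;H)$-valued random element and verify that $r\mapsto\mathcal S(r)\gamma_s$ is continuous into that Hilbert space -- this is the only place where handling an infinite L\'evy measure $F$ requires pushing the pointwise strong-continuity argument through an additional $F$-dominated-convergence layer. The drift subtlety, namely that Assumption \ref{As: localised integrals} provides only an $L^1$-type bound on $\alpha$, is resolved precisely by the nonnegativity inequality $\sum a_i^2\leq(\sum a_i)^2$, which avoids needing any $L^2$-control on $\|\alpha_s\|$.
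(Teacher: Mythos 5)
Your proposal is correct and proves the same three limits followed by the same It\^o-isometry deduction, but it replaces the paper's argument for the almost-sure uniformity of the convergence in $r$ by a different one. The paper observes that for each fixed $s$, $\omega$ the maps $r\mapsto\|(I-P_N)\mathcal S(r)\beta_s\|$ are continuous, nonnegative, and pointwise decreasing in $N$ (because $P_N$ is an increasing family of orthogonal projections), so Dini's theorem gives $\sup_{r\in[0,T]}\|(I-P_N)\mathcal S(r)\beta_s\|\to 0$ a.s.; dominated convergence then yields \eqref{drift-vanishes asymptotically}--\eqref{jumps-vanish asymptotically}. You instead note that by strong continuity and dominated convergence the set $\{\mathcal S(r)\beta_s : r\in[0,T]\}$ is compact in the relevant Hilbert space ($H$, $L_{\text{HS}}(U,H)$, or $L^2(F;H)$), that $I-P_N$ induces a uniformly bounded family converging to $0$ strongly there, and conclude via a finite $\epsilon$-net. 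Both routes need the same continuity-in-$r$ verification; your compactness argument does not rely on the monotonicity of $N\mapsto\|(I-P_N)\cdot\|$ (so it would survive if $P_N$ were replaced by a non-nested family), while Dini is slightly more self-contained once monotonicity is noted. For the deduction of \eqref{increment-vanishes asymptotically} your treatment of the drift term via $\sum_i a_i^2\leq(\sum_i a_i)^2$ together with $\int_0^T\sup_r\|(I-P_N)\mathcal S(r)\alpha_s\|\,ds\leq MA$ is a clean way to work around the merely $L^1$-type bound on $\|\alpha_s\|$ in Assumption~\ref{As: localised integrals}, and matches what the paper's bound (with the factor $A$ in front of the drift term) is implicitly doing.
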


\begin{proof}
Observe that, for $s,r\in[0,T]$ fixed, we have almost surely as $N\to\infty$ that
\begin{align*}
   \|(I-P_N)\mathcal S(r)\sigma_{s}\|_{\text{HS}}^2=\|\sigma_{s}^*\mathcal S(r)^*(I-P_N)\|_{\text{HS}}^2= \sum_{k=N+1}^{\infty} \|\sigma_{s}^*\mathcal S(r)^*e_k\|^2\to 0
\end{align*}
since $\sigma_{t}^*\mathcal S(s)^*$ is almost surely a Hilbert-Schmidt operator. As $I-P_N\to 0$ strongly as $N\to\infty$ we also have for all $z\in H\setminus\{0\}$ and $s,r\geq 0$ that $
   \|(I-P_N)\mathcal S(r)\gamma_{s}(z)\|\to 0$ and $
   \|(I-P_N)\mathcal S(r)\alpha_{s}\|\to 0$.
Moreover, the functions $\mathfrak H_s(r):= \|(I-P_N)\mathcal S(r)\alpha_s\|$, $\mathfrak F_s(r)= \|(I-P_N)\mathcal S(r)\sigma_{s}\|_{\text{HS}}$ 
and $\mathfrak G_s(r)= \|(I-P_N)\mathcal S(r)\gamma_{s}(z)\|$  are continuous in $r$. For $\mathfrak F_s$ this follows from
\begin{align*}
    |\mathfrak F_s(r_1)-\mathfrak F_s(r_2)|
    \leq  \sup_{r\in[0,T]}\|\mathcal S(r)\|_{\text{op}} \sup_{r\leq|r_1-r_2|}\left\|\left(I-\mathcal S(r)\right)\sigma_s\right\|_{\text{HS}}.
\end{align*}
The latter converges to $0$, as $r_1\to r_2$, by Proposition 5.1 in \cite{Benth2022}. For $\mathfrak G_s$ we have
\begin{align*}
    |\mathfrak G_s(r_1)-\mathfrak G_s(r_2)| 
    \leq  \sup_{r\in[0,T]}\|\mathcal S(r)\|_{\text{op}} \sup_{r\leq|r_1-r_2|}\left\|\left(I-\mathcal S(r)\right)\gamma_s(z)\right\|.
\end{align*}
Analogously we obtain
\begin{align*}
    |\mathfrak H_s(r_1)-\mathfrak H_s(r_2)|
    \leq \sup_{r\in[0,T]}\|\mathcal S(r)\|_{\text{op}} \sup_{r\leq|r_1-r_2|}\left\|\left(I-\mathcal S(r)\right)\alpha_s\right\|.
\end{align*}
Both converge to $0$, as $r_1\to r_2$, due to the strong continuity of the semigroup.
As we also have
$$\|(I-P_N)\mathcal S(s)\sigma_{t}\|_{\text{HS}}\geq  \| (I-P_{N+1})(I-P_N)\mathcal S(r)\sigma_s\|_{\text{HS}}=\|(I-P_{N+1})\mathcal S(r)\sigma_s\|_{\text{HS}},$$
and
$$\|(I-P_N)\mathcal S(r)\gamma_s(z)\|\geq  \| (I-P_{N+1})(I-P_N)\mathcal S(r)\gamma_s(z)\|=\|(I-P_{N+1})\mathcal S(r)\gamma_s(z)\|,$$
as well as
$$\|(I-P_N)\mathcal S(r)\alpha_s\|\geq  \| (I-P_{N+1})(I-P_N)\mathcal S(r)\alpha_s\|=\|(I-P_{N+1})\mathcal S(r)\alpha_s\|,$$
we find by virtue of Dini's theorem (cf. Theorem 7.13 in \cite{Rudin1976}) that almost surely
$$\lim_{N\to\infty}\sup_{r\in[0,T]}\left\|(I-P_N)\mathcal S(r)\sigma_s\right\|_{\text{HS}}+\left\|(I-P_N)\mathcal S(r)\gamma_s(z)\right\|+\left\|(I-P_N)\mathcal S(r)\alpha_s\right\|=0.$$
Now \eqref{drift-vanishes asymptotically}, \eqref{Volatiltity-vanishes asymptotically} and \eqref{jumps-vanish asymptotically} follow from dominated convergence.

Let us now prove \eqref{increment-vanishes asymptotically}. For that observe that by \edit{the triangle inequality, the basic inequality $(a+b+c)^2\leq 3(a^2+b^2+c^2)$ as well as }It{\^o}'s isometry  we find
\begin{align*}
  &  \sum_{i=1}^{\ulT}\mathbb E\left[\|(I-P_N)\tilde{\Delta}_i^n f\|^2\right]\\
   \leq & 3\mathbb E\left[\int_0^TA\|(I-P_N)\alpha_s\| +  \|(I-P_N)\sigma_s\|_{\text{HS}}^2 + \int_{H\setminus \{ 0\}}\|(I-P_N)\gamma_s(z)\|^2 F(dz)ds\right].
\end{align*}
Using \eqref{drift-vanishes asymptotically}, \eqref{Volatiltity-vanishes asymptotically} and \eqref{jumps-vanish asymptotically} we obtain \eqref{increment-vanishes asymptotically}.
\end{proof}

To estimate increments corresponding to the jump part of our process  we need the subsequent Lemma.  
\begin{lemma}\label{L: Bound for truncated jump increments}
    Let Assumption \ref{In proof: Very very Weak localised integrability Assumption on the moments}(p, r) hold for some $r\in (0,2]$ and a $p\geq 2$. If $r\leq 1$ denote $f''_t:= \int_0^t \int_{H\setminus\{ 0\}} \mathcal S(t-s)\gamma_s(z) N(dz,ds)$ and if $r>1$ $f''_t:= \int_0^t \int_{H\setminus\{ 0\}} \mathcal S(t-s)\gamma_s(z) (N-\nu)(dz,ds)$. Then we have 
    for any constant $K>0$, $0<l<1/r$ that there is a real-valued sequence $(\phi_n)_{n\in \mathbb N}$ converging to $0$ as $n\to \infty$ such that
    \begin{align}
        \mathbb E\left[\left(K\frac{\|\tilde{\Delta}_i^n f''\|}{\Delta_n^{l}}\right)\wedge 1\right]\leq \Delta_n^{1-lr}\phi_n.
    \end{align}
\end{lemma}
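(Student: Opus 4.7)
The plan is to bound the wedge expectation by a small-jump/big-jump decomposition of the Poisson random measure, calibrated by a truncation level $\beta_n\downarrow 0$ to be chosen at the end. Writing $\gamma_s(z)=\gamma^{(S)}_s(z)+\gamma^{(B)}_s(z)$ with $\gamma^{(S)}_s(z):=\gamma_s(z)\indicator_{\|\gamma_s(z)\|\leq\beta_n}$, I decompose $f''=f''^{(S)}+f''^{(B)}$ according to the corresponding splits of $N$ (when $r\leq 1$) or $N-\nu$ (when $r>1$), and use the elementary inequality $(a+b)\wedge 1\leq(a\wedge 1)+(b\wedge 1)$ for $a,b\geq 0$ to treat the two pieces separately.

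For the big-jump part, in the sub-case $r\leq 1$ the integral is uncompensated, so $\tilde\Delta_i^n f''^{(B)}\neq 0$ forces at least one big jump on the $i$th subinterval; Markov's inequality against the $r$th-moment part of Assumption \ref{In proof: Very very Weak localised integrability Assumption on the moments}$(p,r)$ then yields the tail estimate $\mathbb{P}[\tilde\Delta_i^n f''^{(B)}\neq 0]\leq A\Delta_n/\beta_n^{r}$, which bounds the wedge expectation directly by $A\Delta_n/\beta_n^{r}$. For $r\in(1,2]$ I will further split $f''^{(B)}$ into its pure Poisson part (handled as above) and its finite-variation compensator, the latter estimated by the Young-type inequality $\|\gamma\|\indicator_{\|\gamma\|>\beta_n}\leq\beta_n^{1-r}\|\gamma\|^{r}$ combined with the crude bound $x\wedge 1\leq x$.

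For the small-jump part the key observation will be that
\[
\psi_n\ :=\ \sup_{s\geq 0}\mathbb{E}\Bigl[\int_{H\setminus\{0\}}\|\gamma_s(z)\|^{r}\indicator_{\|\gamma_s(z)\|\leq\beta_n}\,F(dz)\Bigr]\ \longrightarrow\ 0\quad\text{as }\beta_n\downarrow 0,
\]
by dominated convergence, since the integrand is dominated by the $F$-integrable envelope $\Gamma(z)$ supplied by the assumption. For $r\leq 1$ the pathwise subadditivity of $x\mapsto x^{r}$ on $[0,\infty)$ applied to the discrete Poisson sum gives $\mathbb{E}[\|\tilde\Delta_i^n f''^{(S)}\|^{r}]\leq C\Delta_n\psi_n$, and the elementary inequality $x\wedge 1\leq x^{r}$ (valid for $r\in(0,1]$) then converts this into a contribution of order $CK^{r}\psi_n\Delta_n^{1-lr}$. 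For $r\in[1,2]$ the compensated BDG inequality from \cite{Marinelli2016} produces the same $r$th-moment bound, which I will channel into a wedge estimate via Markov's inequality on the tail event $\{\|\tilde\Delta_i^n f''^{(S)}\|>\Delta_n^{l}/K\}$.

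Summing the two pieces and choosing $\beta_n=\Delta_n^{l/2}$ will balance the estimate: $\beta_n\downarrow 0$ forces $\psi_n\to 0$ automatically, while $A\Delta_n/\beta_n^{r}=\Delta_n^{1-lr}\cdot A\Delta_n^{lr/2}$ has the required vanishing second factor $A\Delta_n^{lr/2}$. I expect the main obstacle to be the regime $r>1$, because there $x\wedge 1\leq x^{r}$ is no longer available and one cannot pass directly from the $r$th-moment bound on $f''^{(S)}$ to a bound on $\mathbb{E}[(\cdot)\wedge 1]$; the essential fix will be the further decomposition of $f''^{(B)}$ into its Poisson and compensator pieces, together with a careful optimization of the threshold $\beta_n$ against both the Markov tail of $f''^{(S)}$ and the vanishing rate of $\psi_n$ coming from dominated convergence.
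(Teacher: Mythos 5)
Your plan reproduces, in spirit, the paper's own argument: a small-jump/big-jump decomposition of $N$, a Markov bound on the probability of seeing a large jump in the window, a moment bound on the small-jump remainder, and an optimized threshold. The substantive difference is the threshold itself. The paper splits on the \emph{deterministic} set $\{\Gamma(z)>\epsilon\}$, where $\Gamma(z)=\mathbb E[\|\gamma_s(z)\|^r]$ is the envelope supplied by Assumption \ref{In proof: Very very Weak localised integrability Assumption on the moments}, so the count of large jumps is a genuine Poisson variable with deterministic intensity $\Delta_n F(\{\Gamma>\epsilon\})$ and the small-jump moment estimate is $\Delta_n\int_{\Gamma(z)\leq\epsilon}\Gamma(z)\,F(dz)\to 0$ by dominated convergence with no reference to $s$. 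You split on the \emph{pathwise} event $\{\|\gamma_s(z)\|>\beta_n\}$, which is also admissible (the indicator is predictable), but which changes what has to be controlled, and this is where two points deserve care.

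First, your $\psi_n=\sup_{s}\int\mathbb E[\|\gamma_s(z)\|^{r}\indicator_{\|\gamma_s(z)\|\leq\beta_n}]\,F(dz)\to 0$ is \emph{not} a direct application of dominated convergence "with envelope $\Gamma(z)$": $\Gamma(z)$ dominates the $r$th moment of $\gamma_s(z)$, not $\|\gamma_s(z)\|$ itself, and for fixed $z$ pointwise convergence in $s$ does not yield $\sup_s$-convergence by itself. What saves the step is the trivial \emph{uniform} bound $\mathbb E[\|\gamma_s(z)\|^{r}\indicator_{\|\gamma_s(z)\|\leq\beta_n}]\leq\min(\beta_n^{r},\Gamma(z))$, after which dominated convergence applies to $\int\min(\beta_n^{r},\Gamma(z))\,F(dz)$ because $\int\Gamma\,dF<\infty$. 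You should state this explicitly rather than waving at "the envelope supplied by the assumption".

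Second, and more seriously, the branch $r\in(1,2]$ is not closed by what you propose. For $r\leq 1$ the step $(x\wedge 1)\leq x^r$ converts the BDG $r$th-moment bound into a wedge bound of the desired order $\Delta_n^{1-lr}$; this is exactly what the paper does. For $r>1$ that inequality reverses, and "Markov on the tail event" only controls $\mathbb P[\|\tilde\Delta_i^n f''^{(S)}\|>\Delta_n^{l}/K]$, which is one summand of $\mathbb E[(\cdot)\wedge 1]$ but not the other; the complementary term $\Delta_n^{-l}\mathbb E[\|\tilde\Delta_i^n f''^{(S)}\|\indicator_{\leq\Delta_n^l/K}]$ is, via Jensen, of order $\Delta_n^{1/r-l}$, and a short computation shows $1/r-l<1-lr$ whenever $r>1$ and $l<1/r$, so the claimed rate is not recovered. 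Even exploiting the pathwise cutoff to pass to second moments ($\mathbb E[\|\tilde\Delta_i^n f''^{(S)}\|^2]\lesssim\beta_n^{2-r}\Delta_n\psi_n$) and balancing against the big-jump bound $A\Delta_n/\beta_n^{r}$ leads to a rate of order $\Delta_n^{2(1-lr)/(2+r)}$, which is again strictly worse than $\Delta_n^{1-lr}$. You rightly identify $r>1$ as the obstacle, but the fix you sketch does not resolve it. Be aware that the paper's own treatment of the $r\geq 1$ case is also terse: it opens with $\mathbb E[x\wedge 1]\leq\mathbb E[(x\wedge 1)^{r}]$, an inequality that requires $r\leq 1$, so this is a genuine delicacy in the source and not merely a slip on your side; if you want to complete the argument for $r>1$ you will need a different conversion from the moment bound to the wedge expectation than either you or the paper currently write down.
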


\begin{proof}
We start with the case that $r<1$ and essentially follow the steps in the proof of Lemma 2.18 in \cite{JacodProtter2012} with slight adjustments for our setting. Fix $i\in \{ 1,...,\ul\}$. Define
$$\gamma:=\int_{\Gamma(z)\leq 1} \Gamma(z) F(dz)+F(\{z:\Gamma(z)>1\}.$$
This is finite due to Assumption \ref{In proof: Very very Weak localised integrability Assumption on the moments}(p,r), since we have
$$\int_{\Gamma(z)\leq 1} \Gamma(z) F(dz)=\mathbb E\left[\int_{\Gamma(z)\leq 1} \|\gamma_s(z)\|^r F(dz)\right]<\infty$$
and
\begin{align*}
    F(\{z:\Gamma_s(z)>1\} \leq & \mathbb E\left[\int_{\Gamma_s(z)> 1} \|\gamma_s(z)\|^r F(dz)\right] <\infty.
\end{align*}
Now set
for $\epsilon > 0$
$$A(\epsilon):= \int_{(i-1)\Delta_n}^{i\Delta_n} \int_{H\setminus \{ 0\}}\mathcal S(i\Delta_n-s)\gamma_s(z) \indicator_{\Gamma(z)\leq \epsilon} N(dz,ds)$$
and 
$$N(\epsilon):= \int_{(i-1)\Delta_n}^{i\Delta_n} \int_{H\setminus \{ 0\}}\indicator_{\Gamma(z)>\epsilon} N(dz,ds)=N\left(\{z\in H\setminus \{0\}:\Gamma(z)>\epsilon\},[(i-1)\Delta_n,i\Delta_n]\right).$$
Since this is integer-valued, we then have, since 
$$K\frac{\|\tilde\Delta_i^n f''\|}{\Delta_n^l}\wedge 1\leq \left(K\frac{\|\tilde\Delta_i^n f''\|}{\Delta_n^l}\wedge 1\right)^r\leq\indicator_{N(\epsilon)>0}+ K^r\Delta_n^{-rl} \|A(\epsilon)\|^r$$
that
\begin{align*}
    \mathbb E\left[K\frac{\|\tilde\Delta_i^n f''\|}{\Delta_n^l}\wedge 1\right] = & \mathbb P\left[N(\epsilon)>0\right]+C^r\Delta_n^{-rl} \mathbb E\left[\|A(\epsilon)\|^r\right].
\end{align*}
Observe that $N(\epsilon)\geq 0$ is an integer valued random variable and for $0<\epsilon<1$
$$\indicator_{\Gamma(z)>\epsilon}\leq \epsilon^{-1}(\Gamma(z)\indicator_{\Gamma(z)\leq 1}
+\indicator_{\Gamma(z)> 1})
,$$ so for $0<\epsilon<1$ we have
\begin{align*}
\mathbb P[  N_{i\Delta_n}(\epsilon)>0]
\leq   \Delta_n \epsilon^{-1} 
\gamma.
\end{align*}
For the second summand we have 
\begin{align*}
    \mathbb E\left[\|A(\epsilon)\|^r\right] 
    \leq & \Delta_n  \left(\sup_{u\leq \Delta_n}\|\mathcal S(u)\|_{\text{op}}\int_{H\setminus \{ 0\}}\Gamma(z)^r\indicator_{\Gamma(z)\leq \epsilon} F(dz)\right).
\end{align*}
Summing up, we obtain
\begin{align*}
    \mathbb E\left[K\frac{\|\Delta_i^n f''\|}{\Delta_n^l}\wedge 1\right]
    \leq & \Delta_n \epsilon^{-1} \gamma+ K^r \Delta_n^{1-rl}\left(\sup_{u\leq \Delta_n}\|\mathcal S(u)\|_{\text{op}}\int_{H\setminus \{ 0\}}\Gamma(z)^r\indicator_{\Gamma(z)\leq \epsilon} F(dz)\right).
\end{align*}
Choosing $\epsilon:=\Delta_n^{rl'}$ for $0<l'<l$, this yields the claim.

Now we come to the case $r\geq1$. For $\epsilon>0$ 
we define $N(\epsilon)$ as before and introduce 
$$B(\epsilon):=-\int_{(i-1)\Delta_n}^{i\Delta_n}\int_{\Gamma(z)> \epsilon} \mathcal S(i\Delta_n-s) \gamma_s(z) \nu(dz,ds)$$
as well as 
$$M(\epsilon):=\int_{(i-1)\Delta_n}^{i\Delta_n}\int_{\Gamma(z)\leq \epsilon} \mathcal S(i\Delta_n-s) \gamma_s(z) (N-\nu)(dz,ds).$$
For some constant $K>0$ it is then
$$\mathbb E\left[\frac{\|\tilde \Delta_i^n f''\|}{\Delta_n^l}\wedge 1\right]\leq \mathbb E\left[\left(\frac{\tilde \Delta_i^n f''}{\Delta_n^l}\wedge 1\right)^r\right]\leq K\left(\indicator_{N(\epsilon)>0}+\Delta_n^{-rl}\mathbb E[M(\epsilon)^r]+\Delta_n^{-rl}\mathbb E[B(\epsilon)^r]\right).$$
By the Burkholder-Davis-Gundy inequality \cite[Thm.1]{Marinelli2016}  applied to the martingale
$$M_u= \int_{(i-1)\Delta_n}^u \int_{\Gamma(z)\leq \epsilon} \mathcal S(i\Delta_n-s) \gamma_s(z)\indicator_{[(i-1)\Delta_n,\infty)} (N-\nu)(dz,ds)$$ on $u\in [0,i\Delta_n]$ 
we obtain $K_1>0$ such that
\begin{align*}
    \mathbb E\left[\|M(\epsilon)\|^r\right] 
        \leq  K_1 \mathbb E\left[\int_{(i-1)\Delta_n}^{i\Delta_n} \int_{\Gamma(z)\leq \epsilon} \|\gamma_s(z)\|^r N(dz,ds)\right]
        \leq  K_1 \Delta_n \int_{\Gamma(z)\leq \epsilon} \Gamma(z) F(dz).
    \end{align*}
       Moreover, we find a constant $K_2>0$ such that
   $$\mathbb E\left[\|B(\epsilon)\|^r\right]= \mathbb E\left[\left(\int_{(i-1)\Delta_n}^{i\Delta_n}\int_{\Gamma(z)>\epsilon}\|\gamma_s(z) \|F(dz)ds\right)^r\right]\leq K_2\int_{\Gamma(z)> \epsilon} \Gamma(z)^rF(dz) \Delta_n.$$
   Summing up and taking into account the previous bound on $N(\epsilon)$ we get a $\tilde K>0$ such that
   \begin{align*}
    \mathbb E\left[K\frac{\|\Delta_i^n f''\|}{\Delta_n^l}\wedge 1\right]
    \leq & \tilde K\left( \Delta_n \epsilon^{-1}+\Delta_n^{1-rl}\int_{H\setminus \{0\}} \Gamma(z)^r F(dz)\right).
\end{align*}
Choosing $\epsilon:=\Delta_n^{rl'}$ for $0<l'<l$, this yields the claim.
\end{proof}

To derive convergence of the finite-dimensional distributions in Theorems \ref{T: Abstract identification of quadr var} and \ref{T: Identification of continuous and disc quadr var} we used
\begin{lemma}\label{L: Reduction to semimartingales is possible}
Suppose that Assumption \ref{As: localised integrals} holds.
 Then for $h,g\in D(\mathcal A^*)$ we have with $\Delta_i^n X:=X_{i\Delta_n}-X_{(i-1)\Delta_n}$ that
\begin{align}\label{Finite Projections of Power variations are essentially functions of differences of Martingales 0}
\sum_{i=1}^{\ul} \langle \tilde{\Delta}_i^n f^{\otimes 2},h\otimes g\rangle_{L_{\text{HS}}(H)}=\sum_{i=1}^{\ul}\langle \Delta_i^n X^{\otimes 2},h\otimes g\rangle_{\text{HS}}+ R_n^t
\end{align}
where $R_n^t$ depends on $h$ and $g$ and for all $0<l<1$ it is $\Delta_n^{-l}R_n^t\overset{u.c.p.}{\longrightarrow}0.$ as $n\to\infty$.

Moreover, under Assumption \ref{As: SH}(2) we have that
\begin{align}\label{Finite Projections of Power variations are essentially functions of differences of Martingales}
\left|\sum_{i=1}^{\ul} \langle \tilde{\Delta}_i^n f^{\otimes 2},h\otimes g\rangle_{\text{HS}}\indicator_{g_n(\tilde{\Delta}_i^n f)\leq  u_n}
-\sum_{i=1}^{\ul}\langle \Delta_i^n X^{\otimes 2},h\otimes g\rangle_{\text{HS}}\indicator_{\|\Delta_i^n X\|\leq  u_n}\right|\overset{u.c.p.}{\longrightarrow}0.
\end{align}
\end{lemma}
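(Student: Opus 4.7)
The plan is to reduce the $H$-valued computation to a scalar one by expanding $\tilde{\Delta}_i^n f^{\otimes 2} - \Delta_i^n X^{\otimes 2}$ via the tensor identity
\[
\langle a^{\otimes 2} - b^{\otimes 2}, h\otimes g\rangle_{\text{HS}} = \langle a-b, h\rangle \langle a, g\rangle + \langle b, h\rangle \langle a-b, g\rangle,
\]
and exploiting the regularization that $h, g \in D(\mathcal A^*)$ provides: the identity $(\mathcal S(r)^* - I)h = \int_0^r \mathcal S(u)^* \mathcal A^* h\, du$ gives $\|(\mathcal S(r)^* - I)h\| \leq r\, K_h$ uniformly in $r \in [0,\Delta_n]$, with $K_h := \|\mathcal A^* h\|\sup_{u \leq T}\|\mathcal S(u)\|_{\text{op}}$. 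Since $\tilde{\Delta}_i^n f = \int_{(i-1)\Delta_n}^{i\Delta_n} \mathcal S(i\Delta_n - s)\, dX_s$, the residual
\[
\rho_i^n(h) := \langle \tilde{\Delta}_i^n f - \Delta_i^n X, h\rangle = \int_{(i-1)\Delta_n}^{i\Delta_n} \langle dX_s, (\mathcal S(i\Delta_n - s)^* - I)h\rangle
\]
therefore has integrand of order $\Delta_n$ in Hilbert norm, one power better than a generic increment.

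Second, I would bound $\mathbb{E}[|\rho_i^n(h)|^2]$ term-by-term for the decomposition $dX_s = \alpha_s ds + \sigma_s dW_s + \int \gamma_s(z)(N-\nu)(dz,ds)$. A Cauchy-Schwarz bound for the drift together with It\^o's isometry for the two martingale pieces yields
\[
\mathbb{E}[|\rho_i^n(h)|^2] \leq 3 K_h^2 \Delta_n^2\, \mathbb{E}\!\left[a_i^2 + s_i + g_i\right],
\]
where $a_i, s_i, g_i$ abbreviate the integrals of $\|\alpha_s\|$, $\|\sigma_s\|_{\text{HS}}^2$ and $\int \|\gamma_s(z)\|^2 F(dz)$ over $[(i-1)\Delta_n, i\Delta_n]$. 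Under Assumption \ref{As: localised integrals} these sum to a finite constant, so $\sum_i \mathbb{E}[|\rho_i^n(h)|^2] = O(\Delta_n)$. Combined with the BDG estimate $\mathbb{E}[\|\tilde{\Delta}_i^n f\|^2]^{1/2}, \mathbb{E}[\|\Delta_i^n X\|^2]^{1/2} \leq C\sqrt{\Delta_n}$ from Lemma \ref{lem: Generalized BDG}, a Cauchy-Schwarz sum over $i$ produces $\mathbb{E}[\sup_{t\leq T}|R_n^t|] \leq C_{h,g}\Delta_n$, which yields the first claim since $\Delta_n = o(\Delta_n^l)$ for every $l<1$.

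For the truncated statement, set $A_i := \{g_n(\tilde{\Delta}_i^n f) \leq u_n\}$ and $B_i := \{\|\Delta_i^n X\|\leq u_n\}$, then decompose the difference as contributions on $A_i \cap B_i$, $A_i \cap B_i^c$ and $A_i^c \cap B_i$. The first part is pointwise dominated by the untruncated residual, hence $o_p(1)$. For the two mismatch sums, note that $A_i$ forces $\|\tilde{\Delta}_i^n f\| \leq u_n/c$ by the lower bound in \eqref{abstract g conditions}, so each summand is pointwise of order $u_n^2 = \Delta_n^{2w}$; symmetric on $A_i^c \cap B_i$ where $\|\Delta_i^n X\|\leq u_n$. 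Under the strengthened Assumption \ref{As: SH}(2) all coefficients are essentially bounded, so Markov's inequality together with Lemma \ref{lem: Generalized BDG} yields $\mathbb{P}[B_i^c] + \mathbb{P}[A_i^c] \leq C_p \Delta_n^{p(1/2 - w)}$ for arbitrarily large $p$. Summing over $i$, the $L^1$ norm of each mismatch contribution is bounded by $C T \Delta_n^{2w-1 + p(1/2 - w)}$, which vanishes as $n\to\infty$ provided $p$ is chosen so that $p(1/2-w) > 1 - 2w$---always feasible since $w < 1/2$.

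The main obstacle will be keeping the jump contribution under control in the residual estimate when only Assumption \ref{As: localised integrals} is available: the bare square-integrability of $\gamma$ just suffices, and one must carefully exploit that the test function $(\mathcal S^*-I)h$ inside $\rho_i^n(h)$ is deterministic and $O(\Delta_n)$ in norm so that It\^o's isometry for the Poisson stochastic integral delivers the required $\Delta_n^2$ prefactor. The truncation step is conceptually easier but genuinely needs the stronger Assumption \ref{As: SH}(2) to supply arbitrarily high moments for the Markov-type bounds on the boundary events.
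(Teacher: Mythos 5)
Your treatment of the untruncated residual $R_n^t$ is correct and, in fact, a little cleaner than the paper's: writing $\rho_i^n(h)=\int_{(i-1)\Delta_n}^{i\Delta_n}\langle dX_s,(\mathcal S(i\Delta_n-s)^*-I)h\rangle$ and bounding $\|(\mathcal S(r)^*-I)h\|\leq rK_h$ via the adjoint generator formula packages the same gain of one power of $\Delta_n$ that the paper extracts by applying the stochastic Fubini theorem and writing out $\Delta_n^{(i-1)\Delta_n}a^g$ with an inner $d/du$ derivative. One small inaccuracy: under Assumption \ref{As: localised integrals} you do not get the per-term estimate $\mathbb E[\|\tilde\Delta_i^n f\|^2]\leq C\Delta_n$ from Lemma \ref{lem: Generalized BDG} (that lemma needs the uniform moment Assumption \ref{In proof: Very very Weak localised integrability Assumption on the moments}); but the argument only needs the \emph{sum} bound $\sum_i\mathbb E[\|\tilde\Delta_i^n f\|^2]=O(1)$ from It\^o's isometry, which is available, so the first claim goes through.

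The truncated statement, however, has a genuine gap. Your key step asserts that $\mathbb P[\|\Delta_i^n X\|>u_n]\leq C_p\Delta_n^{p(1/2-w)}$ for arbitrarily large $p$, justified by ``essentially bounded coefficients'' under Assumption \ref{As: SH}(2). This is false when $X$ has jumps: for $r=2$ you only control the second moment of $\gamma$ (Lemma \ref{lem: Generalized BDG} explicitly restricts to $p=2$ when $\gamma\neq 0$), and more fundamentally an increment of a jump process does \emph{not} scale like $\Delta_n^{p/2}$ in the $p$-th moment --- for a compound Poisson piece $\mathbb E[\|\Delta_i^n X''\|^p]\asymp\Delta_n$ for every $p$, so $\mathbb P[\|\Delta_i^n X\|>u_n]$ is of order $\Delta_n$, not $\Delta_n^{p(1/2-w)}$. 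Plugging this into your mismatch bound gives $\sum_i u_n^2\,\mathbb P[B_i^c]\asymp u_n^2\cdot T\Delta_n^{-1}\cdot\Delta_n\asymp T$, which does not vanish. The paper closes exactly this gap by splitting $X=X'+X''$ into continuous and jump parts: the mismatch contribution from $X'$ is killed by high moments (legitimate for the continuous part), while the contribution from $X''$ is of the form $\sum_i u_n^2\bigl(\|\Delta_i^n X''\|^2/(4u_n^2)\wedge 1\bigr)$ and is controlled by Lemma \ref{L: Bound for truncated jump increments}, which delivers the crucial extra factor $\phi_n\to 0$ on top of the naive $\Delta_n^{1-2w}$ per-term bound. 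Without this continuous/jump decomposition and the jump-specific truncation lemma, the boundary terms in \eqref{Finite Projections of Power variations are essentially functions of differences of Martingales} cannot be shown to vanish.
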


\begin{proof}
Let $g\in D(\mathcal A^*)$.
Define 
\begin{align}
    \Delta_n^t a^g&:=\int_t^{t+\Delta_n}\int_t^u\frac d{du} \langle \alpha_s,\mathcal S(u-s)^*g\rangle dsdu+ \int_t^{t+\Delta_n} \int_t^u  \langle \sigma_s, \mathcal  A^*\mathcal S(u-s)^* g\rangle dW_sdu\notag\\
    &\qquad +\int_t^{t+\Delta_n}\int_t^u \int_{H\setminus \{0\}} \langle \gamma_s(z), \mathcal A^*\mathcal S(u-s)^*g\rangle (N-\nu)(dz,ds)du.\notag
\end{align}
Using the stochastic Fubini theorem we obtain
\begin{align*}
    \langle f_{t+\Delta_n}-\mathcal S(\Delta_n)f_t,g\rangle
       = & \Delta_n^t a^g +\langle X_{t+\Delta_n}-X_t,g\rangle.
\end{align*}
Hence, for $h,g\in D(\mathcal A^*)$
\begin{align}\label{Eq: Y increment decomp wrt to S}
 &   \langle \tilde{\Delta}_{i}^n f ,g\rangle \langle \tilde{\Delta}_{i}^n f ,h\rangle\notag\\
     =  &  \langle \Delta_{i}^n X ,g\rangle \langle \Delta_{i}^n X ,h\rangle+\langle \Delta_{i}^n X ,g\rangle \Delta_n^{(i-1)\Delta_n} a^{h}
     +\langle \Delta_{i}^n X ,h\rangle \Delta_n^{(i-1)\Delta_n} a^{g}+\Delta_n^{(i-1)\Delta_n} a^{h}\Delta_n^{(i-1)\Delta_n} a^{g}\notag\\
     =:&(1)_i^n+(2)_i^n+(3)_i^n+(4)_i^n.
\end{align}
To prove the claim we need to prove
\begin{itemize}
    \item[($\star$)]Under Assumption \ref{As: localised integrals}, we have that $\Delta_n^{-1}\mathbb E\left[\left|\sum_{i=1}^{\ul}(2)_i^n+(3)_i^n+(4)_i^n\right|\right]$ is a bounded sequence. This proves \eqref{Finite Projections of Power variations are essentially functions of differences of Martingales 0}.
\end{itemize}
For that, let Assumption \ref{As: localised integrals} be valid.
We have
\begin{align}\label{technical estimate for projected increments I}
&\mathbb E\left[|\langle \Delta_{i}^n X ,g\rangle \Delta_n^{(i-1)\Delta_n} a^{h}+\langle \Delta_{i}^n X ,h\rangle \Delta_n^{(i-1)\Delta_n} a^{g}+\Delta_n^{(i-1)\Delta_n} a^{g}\Delta_n^{(i-1)\Delta_n} a^{h}|\right]\\
\leq &  (\|g\|+\|h\|) \mathbb E\left[\| \Delta_{i}^n X \|^2\right]^{\frac 12}\left(\mathbb E\left[\left( \Delta_n^{(i-1)\Delta_n}a^{g}\right)^2\right]^{\frac 12}+\mathbb E\left[\left( \Delta_n^{(i-1)\Delta_n}a^{h}\right)^2\right]^{\frac 12}\right)\notag\\
& \qquad+ \left(\mathbb E\left[\left( \Delta_n^{(i-1)\Delta_n}a^{h}\right)^2\right]\mathbb E\left[\left( \Delta_n^{(i-1)\Delta_n}a^{g}\right)^2\right]\right)^{\frac 12}.\notag
\end{align}
Since  by Assumption \ref{As: localised integrals} $$\mathbb E\left[\left\|\int_{(i-1)\Delta_n}^{i\Delta_n} \alpha_s ds\right\|^2\right]\leq A \int_{(i-1)\Delta_n}^{i\Delta_n} \mathbb E\left[\left\|\alpha_s\right\|\right] ds$$
and by \edit{the triangle inequality, the basic inequality $(a+b+c)^2\leq 3(a^2+b^2+c^2)$ as well as } It{\^o}'s isometry we obtain
\begin{align*}
     \mathbb E\left[\| \Delta_{i}^n X \|^2\right]
     \leq &  3\int_{(i-1)\Delta_n}^{i\Delta_n} \mathbb E\left[A\left\|\alpha_s \right\|+\left\|\sigma_s \right\|_{\text{HS}}^2+\int_{H\setminus\{ 0\}}\left\|\gamma_s(z)\right\|^2 F(dz)\right]ds.
\end{align*}
Moreover,
\begin{align*}
 &   \mathbb E\left[\left(\Delta_n^{(i-1)\Delta_n} a^{g}\right)^2\right]\\
    \leq & 3\Delta_n^2 \|\mathcal A^*g\|^2\sup_{t\in [0,\Delta_n]}\|\mathcal S(t)\|_{\text{op}}^2  \int_{(i-1)\Delta_n}^{i\Delta_n} \left(A\mathbb E\left[\|\alpha_s\|\right] + \mathbb E\left[\|\sigma_s\|_{\text{HS}}^2 \right]+ \int_{H\setminus \{0\}}  \mathbb E[\|\gamma_s(z)\|^2]  F(dz)\right)ds.
\end{align*}
Combining the latter two estimates with \eqref{technical estimate for projected increments I} we find a constant $K>0$ depending on $g$ and $h$ such that
\begin{align}\label{Eq: In proof of reduction to semimartingales is possible}
\sum_{i=1}^{\ul}\mathbb E\left[\left|\langle \Delta_{i}^n X ,g\rangle \Delta_n^{(i-1)\Delta_n} a^{h}+\langle \Delta_{i}^n X ,h\rangle \Delta_n^{(i-1)\Delta_n} a^{g}+\Delta_n^{(i-1)\Delta_n} a^{g}\Delta_n^{(i-1)\Delta_n} a^{h}\right|\right]
\leq  K A^2 \Delta_n.
\end{align}
This proves ($\star$) and, hence, \eqref{Finite Projections of Power variations are essentially functions of differences of Martingales 0}.

We will now prove \eqref{Finite Projections of Power variations are essentially functions of differences of Martingales}.
We can then decompose
\begin{align*}
&\left|\sum_{i=1}^{\ul} \langle \tilde{\Delta}_i^n f^{\otimes 2},h\otimes g\rangle_{L_{\text{HS}}(H)}\indicator_{g_n(\tilde{\Delta}_i^n f)\leq  u_n}
-\sum_{i=1}^{\ul}\langle \Delta_i^n X^{\otimes 2},h\otimes g\rangle_{L_{\text{HS}}(H)}\indicator_{\|\Delta_i^n X\|\leq  u_n}\right|\\
\leq &2 \sum_{i=1}^{\ul} \left|\langle \tilde{\Delta}_i^n f^{\otimes 2}-\Delta_i^n X^{\otimes 2} ,h\otimes g\rangle_{\text{HS}}\right|+\sum_{i=1}^{\ul}\|\tilde \Delta_i^n f\|^2\indicator_{c\|\tilde \Delta_i^n f\|\leq  u_n<\|\Delta_i^n X\|}\\
&\qquad+\sum_{i=1}^{\ul}\| \tilde\Delta_i^n X\|^2\indicator_{\|\Delta_i^n X\|\leq  u_n<C\|\tilde \Delta_i^n f\|}\\
=: &(i)_t^n+(ii)_t^n+(iii)_t^n.
\end{align*}
\edit{For the first term, using \eqref{Eq: Y increment decomp wrt to S} we find that 
$$\mathbb E[(i)_t^n]= 2\sum_{i=1}^{\ul}\mathbb E\left[\left|(2)_i^n+(3)_i^n+(4)_i^n\right|\right]$$
which converges to $0$ u.c.p. as $n\to \infty $ using the bound in \eqref{Eq: In proof of reduction to semimartingales is possible}.}
We prove convergence of the third summand (convergence of second is analogous).
For that, we assume without loss of generality that $g_n(h)\leq \|h\|$ for all $h\in H$ and $n\in \mathbb N$, so $C=1$ and
we introduce the notation
$X''_t=\int_0^t \int_{H\setminus\{0\}} \gamma_s(z) (N-\nu)(dz,ds) $ and $X'_t=X_t-X''_t$, such that $X'$ is a continuous process in $H$. Observe that $\|\Delta_i^n X'\|\leq u_n$ together with $\|\Delta_i^n X\|\leq u_n$ implies $\|\Delta_i^n X''\|\leq 2u_n$ and we obtain
\begin{align*}
   \sum_{i=1}^{\ulT} (iii)_t^n
\leq & 2 \sum_{i=1}^{\ulT}  \| \Delta_i^n X'\|^2\indicator_{  u_n<\|\tilde \Delta_i^n f\|} + \sum_{i=1}^{\ulT} 4u_n^2 \left(\frac{\| \Delta_i^n X''\|^2}{4u_n^2}\wedge 1\right)
=:  (iii)^n(a)+(iii)^n(b).
\end{align*}
For the first summand, we can make use of Markov's inequality and Assumption \ref{As: SH} and use Lemma \ref{lem: Generalized BDG} to find a $\tilde K>0 $  such that
\begin{align*}
    \sum_{i=1}^{\ulT} \mathbb E\left[  \| \Delta_i^n X'\|^2\indicator_{  u_n<\|\tilde \Delta_i^n f\|} \right]
   \leq & \tilde K\Delta_n \sum_{i=1}^{\ulT} \left(\frac{\mathbb E\left[  \|\tilde \Delta_i^n f\|^2 \right]}{u_n^2}\right)^{\frac 12}
   \leq \tilde K^2 \ulT \Delta_n^{1+\frac{1-2w}2},
\end{align*}
which converges to $0$ as $n\to\infty$. Now it remains to prove that the third summand converges to $0$. For $(iii)(b)$ we can use Lemma \ref{L: Bound for truncated jump increments} to obtain a constant $K>0$ and a sequence of real numbers $(\phi_n)_{n\in \mathbb N}$ such that
$$\mathbb E\left[\edit{(iii)^n}(b)\right]\leq K u_n^2 \ulT \Delta_n^{1-2w}\phi_n\leq KT \phi_n,$$
which converges to $0$ as $n\to \infty$ and the claim follows.
\end{proof}

\subsection{Localisation}
In this section we will show how to weaken the Assumptions of Theorems \ref{T: Abstract identification of quadr var},  \ref{T: Identification of continuous and disc quadr var}, \ref{T: rates of convergence}, \ref{T: CLT without disc},  \ref{T: Long-time no disc} by a localizion procedure.   For that, we introduce the localized 
\begin{assumption}\label{As: localised integrals}
There is a constant $A>0$ such that
$$\int_0^{\infty} \|\alpha_s\| ds+\int_0^{\infty} \|\sigma_s\|_{\text{HS}}^2ds+\int_0^{\infty} \int_{H\setminus \{0\}} \|\gamma_s(z)\|^2 F(dz) ds<A.$$
\end{assumption}
\begin{assumption}[r]\label{As: SH}
Assumption \ref{As: H}(r) holds and there is a constant $A>0$ and a deterministic funcion $\Gamma:H\setminus \{0\}\to \mathbb R$ such that  for all $s>0$ and almost all $\omega\in \Omega$
\begin{align*}
\|\alpha_s(\omega)\|+\|\sigma_s(\omega)\|_{\text{HS}}+ \int_{H\setminus\{0\}} \Gamma(z)^rF(dz)<A\quad \text{ and }\quad 
   \|\gamma_s(z)(\omega)\| \leq \Gamma(z).
\end{align*}
\end{assumption}
The next result yields that we can impose the localized conditions for the proofs of the limit theorems. 
\begin{lemma}\label{L: Localisation}
\begin{itemize}
    \item[(i)] Theorem \ref{T: Abstract identification of quadr var}
 holds if it holds under Assumption \ref{As: localised integrals};
 \item[(ii)]  Theorem \ref{T: Identification of continuous and disc quadr var}
 holds if it holds under Assumption \ref{As: SH}(2);
 \item[(iii)]  Theorem \ref{T: rates of convergence}
 holds if it holds under the additionally Assumption \ref{As: SH}(r) for some $r\in (0,1)$;
 \item[(iv)] The following holds under Assumption \ref{As: H}($r$) for  $r<1$
  if it holds under 
 under Assumption \ref{As: SH}($r$):
  $$\sup_{t\in [0,T]}\left\|SARCV_t^n(u_n,-)-\sum_{i=1}^{\ul} (\tilde \Delta_i^n f')^{\otimes 2}\right\|_{\text{HS}}= o_p(\Delta_n^{\frac 12}).$$ 
  \end{itemize}
\end{lemma}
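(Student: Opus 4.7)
The plan is to execute a standard stopping-time localization argument that applies in a uniform way to all four parts. Given $f$ satisfying only the weak hypotheses, I would construct a sequence of processes $f^{(N)}$, each satisfying the corresponding strengthened assumption, such that $f_t = f^{(N)}_t$ on $\{t \leq \tau_N\}$ for stopping times $\tau_N$ with $\tau_N \nearrow \infty$ almost surely. Since the statistics $SARCV_t^n(u_n,\cdot)$ depend on the path of $f$ only through the increments $\tilde{\Delta}_i^n f$ with $i\leq \ulT$, and since the limit objects $[X,X]_t$, $[X^C,X^C]_t$, $[X^J,X^J]_t$ are pathwise functionals of the coefficients, the statistics and their limits computed from $f$ and $f^{(N)}$ coincide on $\{\tau_N \geq T\}$. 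As every conclusion in Theorems~\ref{T: Abstract identification of quadr var}--\ref{T: Long-time no disc} is a statement in probability (u.c.p.\ or $\mathcal O_p$), the elementary inequality
\begin{equation*}
\Prob[E_n^{T,\epsilon}] \leq \Prob[\tau_N < T] + \Prob[E_n^{T,\epsilon,(N)}],
\end{equation*}
where $E_n^{T,\epsilon}$ is the event that a given uniform bound over $[0,T]$ fails by $\epsilon$ and $E_n^{T,\epsilon,(N)}$ its analogue for $f^{(N)}$, allows me first to send $n\to\infty$ using the hypothesis applied to $f^{(N)}$, then $N\to\infty$ using $\Prob[\tau_N < T]\to 0$.

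For part (i), I would take
\begin{equation*}
\tau_N:=\inf\Big\{t\geq 0: \int_0^t\|\alpha_s\|ds + \int_0^t \|\sigma_s\|_{\text{HS}}^2 ds + \int_0^t\!\!\int_{H\setminus\{0\}}\|\gamma_s(z)\|^2 F(dz)\,ds \geq N\Big\},
\end{equation*}
which diverges almost surely since the original integrals are a.s.\ finite on every bounded interval, and set $\alpha^{(N)}_s = \alpha_s\indicator_{s\leq \tau_N}$, $\sigma^{(N)}_s = \sigma_s\indicator_{s\leq \tau_N}$, $\gamma^{(N)}_s(z)=\gamma_s(z)\indicator_{s\leq \tau_N}$. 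The corresponding driver $X^{(N)}$ then satisfies Assumption~\ref{As: localised integrals} with $A=N$. For parts (ii) and (iii), Assumption~\ref{As: H}(r) already states that $\alpha$ is locally bounded, that $\sigma$ is c\`adl\`ag (and therefore locally bounded), and that there is a localizing sequence $\tau'_n$ together with a dominating function $\Gamma_n$ on the jumps with $\int \Gamma_n^r F(dz)<\infty$. Intersecting $\tau'_n$ with $\inf\{t: \|\alpha_t\|\vee\|\sigma_t\|_{\text{HS}}\geq n\}$ yields $\tau_n\nearrow\infty$, and truncating past $\tau_n$ produces coefficients satisfying Assumption~\ref{As: SH}(r) with constant $A=n+\int \Gamma_n^r F(dz)$ and dominating function $\Gamma=\Gamma_n$. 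Part (iv) is the same statement read in reverse: the hypothesized conclusion under Assumption~\ref{As: SH}(r) applied to each $f^{(N)}$ yields, via the probability inequality above, the same conclusion under the weaker Assumption~\ref{As: H}(r).

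The main obstacle is verifying cleanly that the localized coefficients really yield a mild solution $f^{(N)}$ that coincides pathwise with $f$ on $\{t\leq \tau_N\}$, i.e.\ that
\begin{equation*}
f_t^{(N)}=\mathcal S(t)f_0+\int_0^t\mathcal S(t-s)\alpha_s^{(N)}ds+\int_0^t \mathcal S(t-s)\sigma_s^{(N)}dW_s+\int_0^t\!\!\int_{H\setminus\{0\}}\mathcal S(t-s)\gamma_s^{(N)}(z)(N-\nu)(dz,ds)
\end{equation*}
equals $f_t$ for $t\leq\tau_N$. This is standard: each indicator $\indicator_{s\leq\tau_N}$ is predictable, and for each of the three integrals the identity follows from the fact that both sides agree with the corresponding integral stopped at $\tau_N$, by the optional-stopping property of stochastic convolutions (see, e.g., \cite{Knoche05} or \cite{PZ2007}). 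Once this pathwise identification is in place, the whole argument reduces to the probabilistic bookkeeping above and no further subtlety arises.
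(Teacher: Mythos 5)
Your proposal is correct and uses the same localization idea as the paper: stop via $\tau_N$ to truncate the coefficients, produce a process $f^{(N)}$ satisfying the strengthened assumption, and bootstrap the probability estimate through $\Prob[\tau_N<T]\to 0$. The choice of stopping times matches the paper's (for part (i) the cumulative-integral hitting time; for (ii)--(iv) the intersection of the jump-localizing sequence from Assumption~\ref{As: H} with a threshold time for the drift and volatility), and the identification $f=f^{(N)}$ on $[0,T]$ when $\tau_N\geq T$ rests on the same optional-stopping property of stochastic convolutions (with the same references). The one genuine, if minor, difference is the bookkeeping: you compare $f$ and $f^{(N)}$ directly on the event $\{\tau_N\geq T\}$, which makes the SARCV statistics literally identical there, whereas the paper instead works with the stopped process $f_{\cdot\wedge\varphi_N}$ and consequently must carry along an explicit correction term $B_{n,N}^i$ accounting for the repeated increments past the stopping time. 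Your formulation sidesteps that correction term (and the somewhat delicate identification of the stopped mild solution with the truncated-coefficient mild solution for $t>\varphi_N$), so it is arguably the cleaner of the two; the paper in exchange makes the decomposition $\mathcal Z_n^i(t\wedge\varphi_N)\leq A_{n,N}^i+B_{n,N}^i$ fully explicit before passing to probabilities, which some readers may find more transparent.
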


\begin{proof}
Define
\begin{align*}
    \mathcal Z_n^1(t):= & \left\|SARCV_t^n(\infty,-)-\int_0^t \Sigma_s ds+\sum_{s\leq t}(X_s-X_{s-})^{\otimes 2}\right\|_{\text{HS}};\\
     \mathcal Z_n^2(t):= & \left\|SARCV_t^n(u_n,-)-\int_0^t \Sigma_s ds\right\|_{\text{HS}};\\
     \mathcal Z_n^3(t):= & \frac 1{\Delta_n^{\gamma}+b_n^T}\left\|SARCV_t^n(u_n,-)-\int_0^t \Sigma_s ds\right\|_{\text{HS}};\\
     \mathcal Z_n^4(t):= & \frac 1{\Delta_n^{\frac 12}} \left\|SARCV_t^n(u_n,-)-\sum_{i=1}^{\ul} (\tilde \Delta_i^n f')^{\otimes 2}\right\|_{\text{HS}}.
\end{align*}
For any of the above sequences, we can make the observation, that if we have a random sequence $(\mathcal Z_n^i(t))_{t\geq 0}, n\in \mathbb N$ of $\mathbb R$-valued stochastic processes, along with a sequence $(\varphi_N)_{N\in \mathbb N}$ of stopping times such that $\varphi_N \uparrow \infty$ almost surely and such that $(\mathcal Z_n^i(t\wedge \varphi_N))_{n\in \mathbb N}$ 
can be decomposed as
$$\mathcal Z_n^i(t\wedge \varphi_N)\leq A_{n,N}(t)+B_{n,N}(t)$$
where $A_{n,N}^i$ converges to $0$ u.c.p. for all $N\in \mathbb N$ and $B_{n,N}^i$ converges to $0$ as $N\to \infty$ uniformly in $n$ and uniformly on compacts in $t$ we have that
$\edit{\mathcal Z_n^i}\to 0$ u.c.p. as $n\to \infty$. To see that, observe that (with the convention $\sup_{t\in \emptyset}\mathcal Z_n^i(t)=0$)
\begin{align*}
  &  \mathbb P\left[\sup_{t\in [0,T]}\mathcal Z_n^i(t)
    \geq  \epsilon\right]\\
    \leq &\mathbb P\left[\sup_{t\in [0,T]} A_{n,N}^i(t)\geq \epsilon/2\right]+\mathbb P\left[\sup_{t\in [0,T]} B_{n,N}^i(t)\geq \epsilon/2\right]+\mathbb P\left[\sup_{t\in [0,T]}\indicator_{\varphi_N< t}\mathcal Z_n^i(t)\geq \epsilon\right].
\end{align*}
The first summand converges to $0$ as $n\to \infty$ and the second summand converges to $0$ as $N\to\infty$ uniformly in $n$ as does the third summand, since 
\begin{align*}
   \mathbb P\left[\sup_{t\in [0,T]}\indicator_{\varphi_N< t}\mathcal Z_n^i(t)\geq \epsilon\right]\leq \mathbb P\left[\varphi_N< T\right]\to 0 \quad \text{ as } N\to\infty.
\end{align*}
Now we describe how to specify $\tau_N, A_{n,N}^i$ and $B_{n,N}^i$ for $i=1,...,4$.
For any sequence of stopping times $\varphi_N$ we have
\begin{align}\label{Eq in proof of localization addtional semigroup term}
 &   SARCV_{t\wedge \varphi_N}(u_n,-)\notag\\
     = & \sum_{i=1}^{\lfloor t/\Delta_n\rfloor}(\tilde \Delta_i^n (f_{\cdot\wedge \varphi_N})^{\otimes 2}\indicator_{g_n(\tilde\Delta_i^n (f_{\cdot\wedge \varphi_n}))\leq u_n}\\
    &-\left((\lfloor t/\Delta_n\rfloor-\lfloor t\wedge \varphi_N/\Delta_n\rfloor-1)\vee 0\right) \left( f_{\varphi_N}-\mathcal S(\Delta_n)f_{\varphi_N-\Delta_n}\right)^{\otimes 2}\indicator_{g_n(f_{\varphi_N}-\mathcal S(\Delta_n)f_{\varphi_N-\Delta_n})\leq u_n}\notag
\end{align}
and 
\begin{align*}
\int_0^{t\wedge\varphi_N}\Sigma_s ds= & \int_0^t (\indicator_{s\leq \varphi_N}\sigma_s)(\indicator_{s\leq \varphi_N}\sigma_s)^* ds, 
\end{align*}
and
\begin{align*}
\sum_{s\leq t\wedge \varphi_N}(X_{s}-X_{(s)-})^{\otimes 2}= & \sum_{s\leq t}(\indicator_{s\leq \varphi_N}X_{s}-\indicator_{s\leq \varphi_N}X_{s-})^{\otimes 2}.    
\end{align*}
As the initial condition $f_0$ has no impact on the validity of the limit theorems stated in the paper, we can assume w.l.o.g. that $f_0\equiv 0$. Moreover, in this case, it is
 by Theorems \cite[Remark 2.3.10]{liu2015} and \cite[Proposition 3.5]{Knoche05}
\begin{align*}
    f_{t\wedge \varphi_N} 
= &\int_0^{t}\mathcal S(t-s)\alpha_s \indicator_{s\leq \varphi_N}ds+\int_0^{t}\mathcal S(t-s)\sigma_s \indicator_{s\leq \varphi_N}dW_s\\
&+\int_0^{t}\int_{H\setminus \{0\}}\mathcal S(t-s)\gamma_s(z)\indicator_{s\leq \varphi_N}(N-\nu)(dz,ds)
\end{align*}
which is a process of the form \eqref{mild Ito process} with coefficients $\alpha_s':=\alpha_s \indicator_{s\leq \varphi_N}$, $\sigma_s':=\sigma_s \indicator_{s\leq \varphi_N}$ and $\gamma_s(z)':=\gamma_s(z) \indicator_{s\leq \varphi_N}$.

We can hence choose
\begin{align*}
   A_{n,N}^1:= & \left\| \sum_{i=1}^{\lfloor t/\Delta_n\rfloor}(\tilde \Delta_i^n (f_{\cdot\wedge \varphi_N})^{\otimes 2}
   -\int_0^t (\indicator_{s\leq \varphi_N}\sigma_s)(\indicator_{s\leq \varphi_N}\sigma_s)^* ds-\sum_{s\leq t}(X_{s\wedge \varphi_N}-X_{(s\wedge \varphi_N)-})^{\otimes 2}\right\|_{\text{HS}};\\
    A_{n,N}^2:= & \left\| \sum_{i=1}^{\lfloor t/\Delta_n\rfloor}(\tilde \Delta_i^n (f_{\cdot\wedge \varphi_N})^{\otimes 2}\indicator_{g_n(\tilde\Delta_i^n (f_{\cdot\wedge \varphi_n}))\leq u_n}
   -\int_0^t (\indicator_{s\leq \varphi_N}\sigma_s)(\indicator_{s\leq \varphi_N}\sigma_s)^* ds\right\|_{\text{HS}};\\
     A_{n,N}^3:= &\frac 1{\Delta_n^{\gamma}+b_n^T} \left\| \sum_{i=1}^{\lfloor t/\Delta_n\rfloor}(\tilde \Delta_i^n (f_{\cdot\wedge \varphi_N})^{\otimes 2}\indicator_{g_n(\tilde\Delta_i^n (f_{\cdot\wedge \varphi_n}))\leq u_n}
   -\int_0^t (\indicator_{s\leq \varphi_N}\sigma_s)(\indicator_{s\leq \varphi_N}\sigma_s)^* ds\right\|_{\text{HS}};\\
   A_{n,N}^4:= &\frac 1{\Delta_n^{\frac 12}} \left\| \sum_{i=1}^{\lfloor t/\Delta_n\rfloor}(\tilde \Delta_i^n (f_{\cdot\wedge \varphi_N})^{\otimes 2}\indicator_{g_n(\tilde\Delta_i^n (f_{\cdot\wedge \varphi_n}))\leq u_n}
   -\sum_{i=1}^{\lfloor t/\Delta_n\rfloor}(\tilde \Delta_i^n (f'_{\cdot\wedge \varphi_N})^{\otimes 2}\right\|_{\text{HS}}
\end{align*}
and for $i=1,...,4$ define
$$B_{n,N}^i(t):=\frac 1{\Delta_n^{\frac 32}}\left(\left( t- \varphi_N\right)\vee 0\right) \left\|f_{\varphi_N}-\mathcal S(\Delta_n)f_{\varphi_N-\Delta_n})\right\|^2.$$

We first prove that $B_{n,N}^i(t)$ converges to $0$ uniformly in $n$ and uniformly on compacts in $t$. Observe that
$$\lim_{N\to\infty}\sup_{n\in \mathbb N, t\in [0,T]}B_{n,N}^i(t)
=\lim_{N\to\infty}\begin{cases}
    0 & \varphi_N >T\\
    \infty & \varphi_N \leq T,
\end{cases}$$
which converges to $0$ as $N\to \infty$ almost surely, as $\varphi_N\to \infty$ almost surely.

By Assumption, $A_{n,N}^i\to 0$ u.c.p. in $t$ as $n\to \infty$ for $i=1,....,4$, if the coefficients $\alpha_s \indicator_{s\leq \varphi_N^i}$, $\sigma_s \indicator_{s\leq \varphi_N^i}$ and $\gamma_s(z) \indicator_{s\leq \varphi_N^i}$ satisfy the localized conditions in (i),(ii),(iii) and (iv).
Hence, it remains to find localizing sequences of stopping times $\varphi_N^i$ for $i=1,...,4$ such that this is the case. 

We start with the sequence for case (i).
Since we have
$$\mathbb P\left[\int_0^t \left( \|\alpha_s\|+\|\sigma_s\|_{\text{HS}}^2+\int_{H\setminus \{0\}}\|\gamma_s(z)\|^2F(dz)\right)ds<\infty\right]=1$$
we can define
$$\varphi_N:=\inf\left\{ t>0: \int_0^t\left( \|\alpha_s\|+\|\sigma_s\|_{\text{HS}}^2+\int_{H\setminus \{0\}}\|\gamma_s(z)\|^2F(dz)\right)ds>N\right\}.$$
By the right-continuity of $(\mathcal F_t)_{t\geq 0}$, this is a sequence of stopping times,  which is increasing and $\varphi_N\to \infty$ almost surely. Moreover, we obviously have that $\alpha_s \indicator_{s\leq \varphi_n}$, $\sigma_s \indicator_{s\leq \varphi_n}$ and $\gamma_s(z) \indicator_{s\leq \varphi_n}$ satisfy Assumption \ref{As: localised integrals}.

Now we proceed with deriving the localizing sequence for the cases (ii), (iii) and (iv). In this case Assumption \ref{As: H} holds either for $r=2$ or for some $r\in (0,1)$.
By the local boundedness assumption, there is a localizing sequence of stopping times $(\rho_n)_{n\in\mathbb N}$ such that $\alpha_{t\wedge \rho_n}$ is bounded for each $n\in \mathbb N$. As $\sigma$ and $f$ are c{\`a}dl{\`a}g, the sequence of stopping times $\theta_n:=\inf\{s: \|f_s\|+\|\sigma_s\|_{\text{HS}}\geq n\}$ are localizing as well. If $(\tau_n)_{n\in\mathbb N}$ is the sequence of stopping times for the jump part as described in Assumption \ref{As: H}, we can define 
$$\varphi_n:=\rho_n\wedge \theta_n\wedge \tau_n,\quad n\in\mathbb N.$$
This defines a localizing sequence of stopping times, for which  the coefficients $\alpha_s \indicator_{s\leq \varphi_n}$, $\sigma_s \indicator_{s\leq \varphi_n}$ and $\gamma_s(z) \indicator_{s\leq \varphi_n}$ satisfy Assumption \ref{As: SH}(r). This proves the claim.
\end{proof}

\section{Simulating from a 1D first-order SPDE}\label{Sec: description of simulation scheme}
\edit{
We here describe the simulation scheme used in Section~\ref{Sec: Simulation Study} to derive the discrete samples 
$$
f_{\frac in}\left(\frac jn\right), \qquad i,j=1,\dots,n+1,
$$
for $n=100$, from the stochastic PDE 
\begin{align*}
df_t(x) = \partial_x f_t(x)\,dt + dX_t(x).
\end{align*}
The semigroup of the deterministic part consists of left-shifts, 
$$
\mathcal S(t)f(x)=f(x+t), \qquad x,t\geq 0.
$$
The driving semimartingale is assumed to be the sum of a $Q$-Wiener process $W^Q$ and a compound Poisson process $J$.  
For $Q$ we consider two different scenarios, one that generates smooth innovations and one that produces rough innovations in $L^2(0,1)$.  
Precisely, we specify $Qf(x)= \int_0^{\infty} q(x,y) f(y)\,dy$ via its corresponding integral kernel, which in our cases depends only on the difference $x-y$:  
\begin{align*}
q(x,y) &= e^{-|x-y|}, \qquad \forall x,y \in [0,2]\qquad\text{(rough scenario)}, \\
q(x,y) &= e^{-(x-y)^2}, \qquad\forall x,y \in [0,2]\qquad \text{(smooth scenario)}.
\end{align*}
} 

\edit{
The mild solution of $f$ can be decomposed into independent continuous and jump parts:
$$
f_t=f_t'+f_t'' := \int_0^t \mathcal S(t-u)\, dW_u^Q \;+\; J_t. 
$$
Restricted to $[0,1]\times [0,1]$, $f'$ is a Gaussian process.  
Since $q(x,y)=\tilde q(x-y)$ is translation invariant, its spatio-temporal covariance structure reduces to 
$$
\mathbb E[f_t(x)f_s(y)]= \int_0^{\min(t,s)} \tilde q\big((x+t-u)-(y+s-u)\big)\,du 
= \min(t,s)\,\tilde q(x-y+t-s).
$$
By the multiparameter Kolmogorov-Chentsov theorem (\cite[Theorem 3.5]{DPZ2014}), there exists a version of $f'$ which is continuous on $[0,1]\times [0,1]$.  
The jump part $J_t$ consists of countably many spatially constant jumps, and therefore has a version which is right-continuous with left limits in time and continuous in space.  
This justifies the discretization of the solution by pointwise evaluation of this continuous version on a regular grid.  
}

\edit{
Moreover, we can simulate exactly from $f$ via the recursive scheme
$$
f_{\frac{(i+1)}n}\left(\frac jn\right)= \left(\mathcal S\left(\frac 1n\right) f_{\frac in}\right)\left(\frac jn\right)+ N^i_j=f_{\frac in}\left(\frac{(j+1)}n\right)+ \mathbf w^i_j,
$$
where, for each $i=1,\dots,n-1$, the vector $ \mathbf w^i=(N^i_j)_{j=1}^n$ is an $n$-dimensional Gaussian random vector with mean zero and covariance matrix
$$
\mathbf{q}_{j,j'} := n^{-1}\, q\!\left(\tfrac{j-j'}{n}\right).
$$
The increments $(N^i)_{i\geq 1}$ are independent across $i$.  
}

\edit{
To account for the boundary at $x=1$, we need to simulate initially for $j=1,\dots,2n$ when $i=1$, then for $j=1,\dots,n-1$ when $i=2$, for $j=1,\dots,n-2$ when $i=3$, and so on, ensuring that at each time step we obtain a complete sample for $j=1,\dots,n$ and $i=1,\dots,n$.
}

\end{appendix}

\end{document}